\newtheorem{definition}{\em Definition}
\newtheorem{lemma}{\em Lemma}
\newtheorem{observation}{\em Observation}
\newtheorem{theorem}{\em Theorem}
\newtheorem{corollary}{\em Corollary}
\newtheorem{claim}{Claim}
 \let\MYoriglatexcaption\caption
 \renewcommand{\caption}[2][\relax]{\MYoriglatexcaption[#2]{#2}}
\begin{document}
%
\title{A Game-Theoretic Approach to Solving the Roman Domination Problem}
%
%
%
\author{Xiuyang~Chen, Changbing~Tang \ {\em Member, IEEE}, Zhao~Zhang \ {\em Member, IEEE}, and Guanrong~Chen \ {\em Life Fellow, IEEE}
	\thanks{This research work is supported in part by National Natural Science Foundation of China (U20A2068), and Zhejiang Provincial Natural Science Foundation of China (LD19A010001).}
	\thanks{X. Chen is with the College of Mathematics and System Science, Xinjiang University, Urumqi, Xinjiang, 830000, China (e-mail: xiuyangchen@126.com). }
	\thanks{C. Tang is with the College of Physics and Electronic Information Engineering, Zhejiang Normal University, Jinhua 321004, China (e-mail: tangcb@zjnu.edu.cn).}
	\thanks{Z. Zhang is with the College of Mathematics and Computer Science, Zhejiang Normal University, Jinhua, Zhejiang, 321004, China (e-mail: hxhzz@sina.com). }
	\thanks{G. Chen is with the Department of Electrical Engineering, City University of Hong Kong, Kowloon, Hong Kong, 999077, China (e-mail: eegchen@cityu.edu.hk).}
	\thanks{Corresponding author: Changbing Tang and Zhao Zhang.}
}

%
%

\markboth{Journal of \LaTeX\ Class Files,~Vol.~XX, No.~XX, XX~XX}%
{Chen \MakeLowercase{\textit{et al.}}:A Game Theoretic Approach for Roman Domination Problem}
%



\maketitle

\begin{abstract}
The Roamn domination problem is one important combinatorial optimization problem that is derived from an old story of defending the Roman Empire and now regains new significance in cyber space security, considering backups in the face of a dynamic network security requirement. In this paper, firstly, we propose a Roman domination game (RDG) and prove that every Nash equilibrium (NE) of the game corresponds to a strong minimal Roman dominating function (S-RDF), as well as a Pareto-optimal solution. Secondly, we show that RDG is an exact potential game, which guarantees the existence of an NE. Thirdly, we design a game-based synchronous algorithm (GSA), which can be implemented distributively and converge to an NE in $O(n)$ rounds, where $n$ is the number of vertices. In GSA, all players make decisions depending on the local information. Furthermore, we enhance GSA to be enhanced GSA (EGSA), which converges to a better NE in $O(n^2)$ rounds. Finally, we present numerical simulations to demonstrate that EGSA can obtain a better approximate solution in promising computation time compared with state-of-the-art algorithms.
\end{abstract}

\begin{IEEEkeywords}
 Roman dominating function, game theory, multi-agent system, distributed algorithm, potential game.
\end{IEEEkeywords}

%
\IEEEpeerreviewmaketitle

\section{Introduction}
%
%
%
%
\IEEEPARstart{T}{he} {\em minimum Roman domination} (MinRD) problem was originated from an interesting historical story \cite{Stewart}: to maintain the safety of the Roman Empire, emperor Constantine adopted the strategy of ``island-hopping"---moving troops from one island to a nearby island, but only when he could leave behind a large enough garrison to keep the first island secure. This strategy was also adopted by MacArthur during the military operations in World War II. In the language of graph theory, MinRD asks for a deployment of the minimum number of troops on some vertices such that any vertex without a troop must be adjacent with a vertex with at least two troops.

MinRD also has contemporary applications, especially in the field of server placements \cite{Pagourtzis} and wireless sensor networks \cite{Ghaffari} in the face of a dynamic security setting. For example, in a wireless sensor network, a set of sensors are deployed at some nodes to monitor the neighboring environment. When a vacant node faces a security problem that needs to be fixed by moving a neighboring sensor to it, the moving sensor should leave behind at least one backup sensor. In other words, a node without a sensor has to be adjacent with a node deployed of at least two sensors. From an economic point of view, it is desired that the total number of sensors is as small as possible, under the condition that the above requirement is satisfied. This consideration leads to a MinRD problem.

The Roman domination problem was mathematically introduced in \cite{Cockayne}. Before that, the definition of a {\em Roman dominating function} (RDF) was given implicitly in \cite{ReVelle} and \cite{Stewart}. MinRD was proved to be NP-hard on general graphs \cite{Dreyer} by a polynomial reduction from the $3$-Satisfiability problem. In \cite{Pagourtzis}, a $(2+2\ln n)$-approximation algorithm is proposed for MinRD on general graphs, and a polynomial-time approximation scheme (PTAS) is designed for MinRD on planar graphs. In \cite{Shang}, it is proved that MinRD is NP-hard even on unit disk graphs, with a $5$-approximation algorithm and a PTAS developed, making use of geometry of unit disk graphs. In \cite{Wang}, a generalized Roman domination problem called {\em connected strong $k$-Roman dominating set problem} is formulated, which is proved to be NP-hard on unit ball graphs, and a $6(k+2)$-approximation algorithm is designed for unit ball graphs. There are also other variants of Roman domination problems \cite{Abdollahzadeh,Ghaffari-Hadigheh,Hajjari}, and many studies of the MinRD for special graphs \cite{Chen1,Ouldrabah,Pavlic,Reddappa,Cockayne1}.

Most algorithms developed in the above articles are centralized, that is, there is a central controller that manages all the processes. Such algorithms are typically vulnerable to cyber attacks. Furthermore, centralized algorithms are difficult to meet the flexibility and diversity requirements of real applications in large-scale networks. Therefore, distributed algorithms are more desirable, especially in multi-agent systems.

In a multi-agent system, every agent can make his own decision using current local information. The autonomy of agents eliminates the dependence on a central controller and greatly improves the anti-attack ability of the system. However, individual interests may have conflict with social welfare, and thus a distributed algorithm may result in an unsatisfactory solution. Game theory is an effective method to coordinate such a conflict. In game theory, it is assumed that all players are selfish, rational and intelligent, who are only interested in maximizing their own benefits. To align individual interests with social welfare, a crucial task is to set up suitable utility functions for the players such that their selfish behaviors can autonomously evolve into a satisfactory and stable collective behavior,  where the stability means that no player is willing to change his current strategy unilaterally. Such a stable state is called a {\em Nash equilibrium} (NE).

With the rapid development of multi-agent systems and large-scale networks, game theory has been widely used in the study of combinatorial optimization problems, for example to deal with the dominating set problem and its variants. In \cite{Yen}, a multi-domination game is studied and it is proved that every NE of the game is a minimal multi-dominating set, which is also a Pareto-optimal solution. In this game, a distributed algorithm is designed, where all players make decisions in a given order. Then, in \cite{Yen2}, an independent domination game is formulated and it is proved that every NE is a minimal independent dominating set. Similarly, all players are required to make decisions only in a given order. Further, in \cite{Chen} a connected domination game is investigated and it is proved that every NE is a minimal connected dominating set but not a Pareto-optimal solution. Later, in \cite{ChenSubmitted}, a secure dominating game is studied and it is proved that every NE is a minimal secure dominating set which is also a Pareto-optimal. Furthermore, a distributed algorithm is designed, which allows all players to make decisions simultaneously and all players use only local information. However, distributed algorithms for the Roman domination problem are rare and many studies on this problem are for special graphs.
The present paper is perhaps the first one using game theory to study the Roman domination problem on general graphs. We propose a Roman domination game (RDG) and prove that every NE of the game is not only a {\em minimal Roman dominating function} (M-RDF) but also {\em strong minimal} (S-RDF), which possesses a local optimum property in a strong sensor. We design a {\em game-based synchronous algorithm} (GSA), which allows the players to make decisions simultaneously, all using local information. Furthermore, we enhance the GSA to the enhanced GSA (EGSA), which converges to a better NE than the GSA.

Another problem closely related to the subject of this paper is the {\em minimum vertex cover } (MinVC) problem. In \cite{Yang}, the MinVC is studied from the approach of snowdrift game and a distributed algorithm is proposed. In \cite{Tang}, the {\em minimum weight vertex cover } (MinWVC) problem is investigated and an algorithm is designed based on an asymmetric game, which can find a vertex cover with smaller weight. In \cite{Sun,Sun3}, the MinWVC problem is solved using potential game theory and a distributed algorithm is proposed based on relaxed greedy algorithm with finite memory. Later, in \cite{Sun2} a population based game theoretic optimizer is designed, which combines learning with optimization. In \cite{Chen2}, a weighted vertex cover
game is proposed using a 2-hop adjustment scheme, so as to obtain a better
solution. In addition to the above works, there are some reports using the game theory to study the coverage problems from different perspectives. For example, in \cite{Li,Li2} cost sharing and strategy-proof mechanisms are adopted for set cover games. In \cite{Fang}, the core stability of vertex cover game is studied. In \cite{Velzen}, the core of a dominating set game in studied. In \cite{Kim}, the core of a connected dominating set game {\color{blue} is} investigated. In \cite{Ai}, the price of anarchy and the computational complexity of a coverage game are analyzed.

In this paper, we focus on the MinRD problem in a multi-agent system. Main contributions of our paper are as follows:
\begin{itemize}
	\item We construct a game framework of multi-agent systems for the MinRD problem and prove the
	existence of NE for the RDG. Furthermore, we classify three types of RDF as $S_{\operatorname{M-RDF}}$, $S_{\operatorname{S-RDF}}$ and $S_{\operatorname{G-RDF}}$, show that $S_{\operatorname{G-RDF}}\subseteq S_{\operatorname{ENE}}\subseteq S_{\operatorname{NE}}\subseteq S_{\operatorname{S-RDF}}\subseteq S_{\operatorname{M-RDF}}\subseteq S_{\operatorname{RDF}}$, and prove that every NE is a Pareto-optimal solution, where $S_{\operatorname{ENE}}$ and $S_{\operatorname{NE}}$ are the set of enhanced NEs and the set of NEs, respectively.
	
	\item We propose three algorithms for the RDG, named {\em game-based asynchronous algorithm} (GAA), {\em game-based synchronous algorithm} (GSA) and  {\em enhanced game-based synchronous algorithm} (EGSA). In GAA, an NE converges in $O(n)$ rounds of interactions and players make decisions depending on local information at most one hop away. In GSA, not only can it achieve the same performances as GAA, but also the computation can be realized distributedly, where all players make decisions simultaneously. In EGSA, a better solution than GSA can be obtained, at the expense of a longer running time $O(n^2)$.
	\item Our numerical simulation shows that the proposed algorithms are better than the existing algorithms on random graphs. On random tree graphs, our algorithms yield results closer to the optimal solutions, whereas the gap between the result of our EGSA and the optimal solution is less than 0.01\%.

\end{itemize}

The remaining parts of the paper are organized as follows. Section $2$ introduces preliminaries in game theory and MinRD, and constructs an RDG. Section $3$ provides a strict theoretical analysis for the game and explores the relationship between NE and the three types of RDF. Section $4$ describes the three algorithms in details and provides theoretical analysis on its convergence. Section $5$ evaluates the performance of the three algorithms through extensive simulations. Section $6$ concludes the paper with some discussions on future work.



\section{Modelling Roman Domination Problem as a Game}\label{sec2}

\subsection{Preliminaries}

\begin{table}[tbp]
	\centering
	\caption{List of Notations}
	\label{[Tab:01]}
	\scalebox{0.9}{
	 \begin{tabular}{|c|l|}
		\hline
		Notation&Meaning\\
		\hline
		\ $\Gamma$ & The game\\
		\ $n$ & The number of vertices, also the number of players\\
		\ $v_i$ & Both a vertex in the graph and a player in the game\\
		\ $V$ & The set of players or vertices $V=\{v_1,v_2,\ldots,v_n\}$\\
		\ $S_i$ & $S_i=\{0,1,2\}$ is the strategy set of player $v_i$\\
		\ $\Sigma$ & $\Sigma=S_1\times \cdots \times S_n$ is the strategy space\\
		\ $C$ & $C=(c_1,c_2,\ldots,c_n)\in\Sigma$ is a strategy profile\\
		\ $d(i,j)$ & The length of a shortest path between vertex $v_i$ and $v_j$\\
		\ $N_i$ & $N_i=\{v_j\in V \colon v_j$ is adjacent to $v_i\}$ is the neighbor set of $v_i$\\
		\ $\bar{N_i}$ & $\bar{N_i}=N_i\cup\{v_i\}$ is the closed neighbor set of $v_i$\\
		\ $N_{i,k}$ & $N_{i,k}=\{v_j\in V: 0<d(i,j)\leq k\}$ is the $k$-hop neighbor set of $v_i$\\
		\ $\bar N_{i,k}$ & $\bar N_{i,k}=N_{i,k}\cup\{v_i\}$ is the closed $k$-hop neighbor set of $v_i$\\
		\ $N_i^j(C)$ & $N_i^j(C)=\{v_k\in N_i\colon c_k=j\}$\\
		\hline
	\end{tabular}
}
\end{table}

A game $\Gamma$ is represented by $\Gamma = (V,\{S_i\}_{i=1}^{n},\{u_i\}_{i=1}^{n})$, where $V=\{v_1,v_2,\ldots,v_n\}$ is the set of {\em players}, $S_i$ is $v_i$'s {\em strategy set}, and $u_i$ is $v_i$'s {\em utility function}. The {\em strategy space} of the game is $\Sigma=S_1\times S_2\times\cdots\times S_n$. A {\em strategy profile} is an $n$-tuple $C=(c_1,c_2,\ldots,c_n)\in\Sigma$. For player $v_i$, write $C=(c_i,C_{-i})$, where $C_{-i}=(c_1,\ldots,c_{i-1},c_{i+1},\ldots,c_n)$ indicates the strategies of those players except $v_i$. Let $u_i(C)$ denote the utility of $v_i$ under strategy profile $C$. Players are assumed to be selfish, intelligent and rational, which means that the goal of every player tries to maximize his own utility. The {\em best response} of player $v_i$ to the current strategy profile $C$ is
$$
\mbox{BR}(v_i,C)=\arg\max\{u_i(c_i',C_{-i})\colon c_i'\in S_i\}.
$$
A {\em Nash equilibrium} (NE) is a strategy profile $C$ such that no player wants to deviate from $C$ unilaterally, formally defined as follows.

\begin{figure*}[tbp]
	\centering
	\subfloat[$f^{(1)}$: an RDF]{\includegraphics[width=0.2\linewidth]{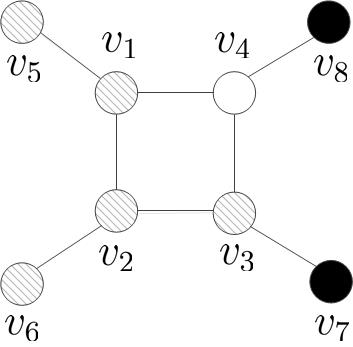}}\hfil
	\subfloat[$f^{(2)}$: an M-RDF]{\includegraphics[width=0.2\linewidth]{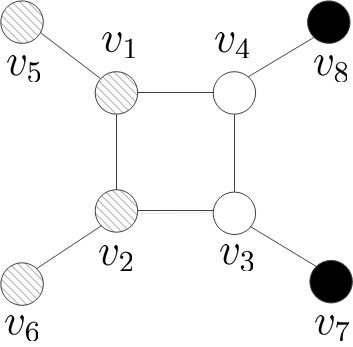}}\hfil
	\subfloat[$f^{(3)}$: an S-RDF ]{\includegraphics[width=0.2\linewidth]{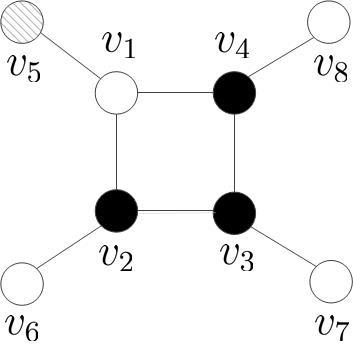}}\hfil
	\subfloat[$f^{(4)}$: a G-RDF]{\includegraphics[width=0.2\linewidth]{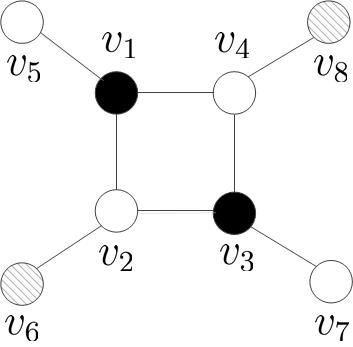}}
	\caption{An example to show the four types of RDFs.}
	\label{fig1}
\end{figure*}

\begin{definition}[Nash equilibrium \cite{Nash}]
	{\rm Given a game $\Gamma = (V,\{S_i\}_{i=1}^{n},\{u_i\}_{i=1}^{n})$, a strategy profile
		$C^{*}=(c_1^{*},c_2^{*},\ldots,c_n^{*})$ is a {\em Nash equilibrium} (NE) if
		$$
		u_i(c_i^{*},C_{-i}^*)\geq u_i(c_i,C_{-i}^*)\ \mbox{for any}\ v_i\in V\ \mbox{and any}\ c_i\in S_i.
		$$}
\end{definition}

Notice that an NE is not necessarily a global optimal solution. Especially, for an NP-hard problem, in many cases, Pareto-optimal solutions are satisfactory.

\begin{definition}[Pareto-optimal solution]
	{\rm Given a game $\Gamma = (V,\{S_i\}_{i=1}^{n},\{u_i\}_{i=1}^{n})$, a strategy profile $C'=(c_1',\ldots,c_n')$ {\em strictly dominates} strategy profile $C=(c_1,\ldots,c_n)$ if
		$u_i(C')\geq u_i(C)$ holds for all index $i\in\{1,\ldots,n\}$ and there exists an index $j\in\{1,\ldots,n\}$ with $u_j(C')>u_j(C)$. A strategy profile $C^*=(c_1^*,\ldots,c_n^*)$ is a {\em Pareto-optimal solution} if there is no strategy profile which strictly dominates $C^*$.}
\end{definition}

\begin{definition}[Roman dominating function (RDF)]\label{DefRDF}
	{\rm Given a graph $G=(V,E)$, a function $f:V(G)\rightarrow\{0,1,2\}$ is an RDF of $G$ if each vertex $v\in V$ with $f(v)=0$ is adjacent to a vertex $u\in V$ with $f(u)=2$.}
\end{definition}

\begin{definition}[Minimum Roman domination problem (MinRD)]
{\rm Given a graph $G=(V,E)$, for any RDF $f$, let $\gamma_{R}(G,f)=\sum_{v\in V}f(v)$ be the weight of the RDF $f$. The goal of MinRD is to minimize $\gamma_{R}(G,f)$ over all RDFs $f$.}
\end{definition}

Denote by $S_{RDF}$ the set of RDFs and define three types of RDF as follows, in terms of their qualities.

\begin{definition}[Global minimum Roman dominating function (G-RDF)]\label{DefGRDF}
	{\rm An RDF $f^*$ of graph $G$ is a G-RDF if $\gamma_R(G,f^*)=\min_{f\in S_{\operatorname{RDF}}}\gamma_R(G,f)$. Let $S_{\operatorname{G-RDF}}$ be the set of G-RDFs.}
\end{definition}

\begin{definition}[Minimal Roman dominating function (M-RDF)]\label{DefMRDF}
	{\rm An RDF $f$ of graph $G$ is an M-RDF if there is no RDF $f'$ of $G$ satisfying $f'(v_i)=f(v_i)-1$ for some $v_i\in V$ and $f'(v_j)=f(v_j)$ for any $j\neq i$. Let $S_{\operatorname{M-RDF}}$ be the set of M-RDFs.}
\end{definition}

\begin{definition}[Strong minimal Roman dominating function (S-RDF)]\label{DefSRDF}
	{\rm An M-RDF $f$ is an S-RDF if there is no RDF $f'$ with $\gamma_R(G,f')<\gamma_R(G,f)$ satisfying $f'(v_i)=2$ for some $v_i\in V$, $f'(v_j)=0$ for any $v_j\in N_i$ with $f(v_j)=1$  and $f'(v_k)=f(v_k)$ for any other $v_k$. Let $S_{\operatorname{S-RDF}}$ be the set of S-RDFs.}
\end{definition}

Fig. \ref{fig1} shows four different types of RDFs. For simplicity of statement, call a vertex $v_i$ to be {\em white, gray, or black} if $f(v_i)=0,1$ and $2$, respectively. In Fig. \ref{fig1}, $f^{(1)},f^{(2)},f^{(3)}$ and $f^{(4)}$ are all RDFs, because
every white vertex is adjacent to at least one black vertex. But $f^{(1)}$ is not an M-RDF, because $f'$ defined in the following way is also an RDF: $f'(v_3)=f^{(1)}(v_3)-1$ and $f'(v_j)=f^{(1 )}(v_j)$ for any $v_j\neq v_3$. It can be checked that $f^{(2)},f^{(3)}$ and $f^{(4)}$ are all M-RDFs. For example, in Fig. \ref{fig1} $(b)$, after changing a gray vertex $v_i$ with $i\in\{1,2,5,6\}$ to a white vertex, there is no black vertex adjacent to $v_i$; after changing a black vertex $v_7$ (or $v_8$) to a gray vertex, there is no black vertex adjacent to $v_3$ (or $v_4$). In both cases, $f'$ is no longer an RDF. Note that  $f^{(2)}$ is not an S-RDF, because $f'$ defined in the following way is an RDF with $\gamma_R(G,f')=7<8=\gamma_R(G,f^{(2)})$:  $f'(v_1)=2$, $f'(v_j)=0$ for $j=2,5$ and $f'(v_k)=f(v_k)$ for $k=3,4,6,7,8$. It can be checked that both $f^{(3)}$ and $f^{(4)}$ are S-RDFs. Furthermore, $f^{(4)}$ is a G-RDF, while $f^{(3)}$ is not.

To facilitate reading, main notations are summarized in Table \ref{[Tab:01]}.

\subsection{Roman Domination Game}\label{sec2.2}
In this subsection, the RDG is introduced. Given a graph $G=(V,E)$, each vertex $v\in V$ can be viewed as a player. For a player $v_i$, his strategy $c_i=0,1$ or $2$ indicates that $f(v_i)=0,1$ or $2$, respectively.

It should be noted that RDF is {\em monotonic} in the sense that if $C$ is a profile corresponding to an RDF, $C'$ is another profile with $C'\geq C$ (that is $c'_i\geq c_i$ for every $i$), thus $C'$ also corresponds to an RDF; and if $C$ is not an RDF, $C'\leq C$ is not an RDF either.

Assuming that all players are selfish, intelligent and rational, they will not consider the benefits of the other players while seeking to maximize their own benefits. The critical task of the game is to design a good utility function for the players, such that a stable and fairly good social state can be reached through cooperation and competition among players, where the goodness of the state is measured by the following criteria.

\begin{itemize}
	\item[$(\romannumeral1)$] {\em Self-Stability:} Starting from any initial state, the game can end up in a Nash equilibrium which corresponds to an RDF.
	\item[$(\romannumeral2)$] {\em Solution Quality:} The weight of the RDF corresponding to the NE should be reasonably small. Since the computation of a G-RDF is NP-hard even using centralized algorithms, one cannot hope for a minimum solution of MinRD in reasonable time. An alternative basic requirement is that the computed RDF should be minimal, that is, decreasing one bit of the $f$-value of any vertex $v_i$ with $f(v_i)\neq0$ will no longer be an RDF. Furthermore, it is desired to maximize social welfare, for which Pareto optimality is an important indicator.
	
	\item[$(\romannumeral3)$] {\em Efficient Execution:} The time for the game to reach an NE should be polynomial in the size of the input. The information used for players should be local. Moreover, a distributed algorithm is preferred.
\end{itemize}

For a strategy profile $C=(c_1,c_2,\ldots,c_n)$, we define the utility function of $v_i$ as
\begin{equation}
	u_i(C)=g_i(C)+q_i(C),
\end{equation}
with
\begin{equation}
	g_i(C)=-\lambda_1c_i^2,
\end{equation}
\begin{equation}
	q_i(C)=-\lambda_2\sum_{v_j\in \bar{N_i}}(2-c_j)m_j(C),
\end{equation}
where
\begin{equation}
	m_j(C)=\begin{cases}
		1,& \bar{N}_j^2(C)=\emptyset,\\
		0,& {\rm otherwise.}
	\end{cases}
\end{equation}
and $\lambda_1,\lambda_2$ are constants satisfying $0<\frac{2}{3}\lambda_2<\lambda_1<\frac{3}{4}\lambda_2$.

In the following, for a strategy profile $C$, call a vertex $v_i$ as {\em strongly dominated} if there is a vertex $v_j\in\bar{N}_i$ with $c_j=2$, otherwise call $v_i$ as {\em free}. Note that a player
\begin{equation}\label{eq0710-1}
	\mbox{$v_i$ is strongly dominated if and only if $m_i(C)=0$.}
\end{equation}
As can be seen that a black vertex is strongly dominated by itself. A vertex is {\em dominated} if it is either a black, or a gray, or a strongly dominated white vertex. Note that $C$ is an RDF if and only if all vertices are dominated.

Furthermore, assume that $G$ has no isolated vertex. In fact, if $v_i$ is an isolated vertex, in order for $C$ to be an M-RDF, $v_i$ must be gray. From the game point of view, the best response of $v_i$ is also gray, since $u_i(c_i=0,C_{-i})=-2\lambda_2$, $u_i(c_i=1,C_{-i})=-\lambda_1-\lambda_2$ and $u_i(c_i=2,C_{-i})=-4\lambda_1$. So, such trivial case will be ignored in the later discussion.

\section{Theoretical Analysis}\label{sec3}
In this section, the theoretical properties of the RDG designed in the above section will be analyzed. In the following, we shall use $C$ to denote both a strategy profile $(c_1,\ldots,c_n)\in\Sigma$ and the corresponding function $f$ with $f(v_i)=c_i$, $(i=1,\dots,n)$. It is assumed that a player is willing to change his strategy only when he can be {\em strictly} better off, that is, when his utility becomes strictly larger after changing his current strategy.

\subsection{Nash Equilibrium}

Nash equilibrium \cite{Nash} is a stable state in the game. In this subsection, the properties of NE in RDG will be analyzed.

\begin{observation}\label{obs1}	
	By the definition of $m_j(C)$, $(a)$ changing strategy of $v_i$ between 0 and 1 does not affect $m_j(C)$ for any $v_j$; $(b)$ changing gray vertex to white does not make the other vertices be strongly dominated or free.
\end{observation}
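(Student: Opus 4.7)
The plan is to prove both parts directly from the definition of $m_j(C)$, since the observation is essentially a bookkeeping remark about which configurations the indicator $m_j$ actually depends on.

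First I would unpack the relevant definitions. By the notation in Table~\ref{[Tab:01]}, $\bar N_j^2(C)=\{v_k\in \bar N_j:c_k=2\}$, so $m_j(C)$ depends on the strategy profile $C$ only through the 2-valued coordinates of players in $\bar N_j$. More precisely, $m_j(C)=1$ iff no player in $\bar N_j$ currently plays strategy $2$, and $m_j(C)=0$ otherwise. Consequently, modifying any coordinate $c_i$ in a way that does not change the predicate ``$c_i=2$'' leaves $m_j(C)$ invariant for every $j$.

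For part (a), suppose $c_i$ is switched between $0$ and $1$. In both the old and new profiles the predicate $c_i=2$ is false, so for each $v_j$ the set $\bar N_j^2(C)$ is unchanged (if $v_i\in\bar N_j$ the set is unaffected because $v_i$ contributes nothing to it in either profile; if $v_i\notin\bar N_j$ the set trivially does not involve $v_i$). Hence $m_j(C)$ is unchanged for every $v_j$, as claimed.

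For part (b), observe that turning a gray vertex white is exactly a $1\!\to\!0$ change of the single coordinate $c_i$, which is a special case of the move considered in (a). Therefore $m_j(C)$ is unchanged for every $v_j$. By~\eqref{eq0710-1}, the strongly-dominated / free status of a vertex $v_j$ is determined entirely by $m_j(C)$, so no other vertex can flip between ``strongly dominated'' and ``free'' as a result of this move. No nontrivial step is involved and I do not foresee an obstacle; the only care needed is to parse the notation $\bar N_j^2(C)$ correctly so that the dependence on the ``2''-strategies is made explicit before concluding.
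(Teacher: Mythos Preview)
Your proposal is correct and matches the paper's intent: the paper states this as an observation without proof, relying only on the phrase ``By the definition of $m_j(C)$,'' and your argument simply spells out that dependence explicitly. There is nothing to add.
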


\begin{lemma}\label{lem0}
	For any player $v_i$ with $c_i=2$ under strategy profile $C$, $u_i(C)=-4\lambda_1$.
\end{lemma}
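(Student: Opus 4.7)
The plan is to decompose $u_i(C)=g_i(C)+q_i(C)$ and evaluate each part separately under the hypothesis $c_i=2$.

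First, I would compute $g_i(C)$ directly: plugging $c_i=2$ into $g_i(C)=-\lambda_1 c_i^2$ gives $-4\lambda_1$. This is immediate, no structural argument is needed.

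Next, I would show that $q_i(C)=0$. The key observation is that $q_i$ is a sum over $v_j\in \bar N_i$ of terms $(2-c_j)m_j(C)$, so it suffices to establish that $m_j(C)=0$ for every $v_j\in\bar N_i$. Using the characterization recorded in \eqref{eq0710-1}, this amounts to showing that every $v_j\in\bar N_i$ is strongly dominated. I would argue this via the symmetry of the (closed) neighborhood: for any $v_j\in\bar N_i$, either $v_j=v_i$, in which case $v_i\in\bar N_j$ trivially, or $v_j\in N_i$, in which case $v_j$ is adjacent to $v_i$ and so $v_i\in N_j\subseteq\bar N_j$. In both cases $v_i\in\bar N_j$ and $c_i=2$, which witnesses $\bar N_j^2(C)\neq\emptyset$, so $m_j(C)=0$.

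Combining these two parts yields $u_i(C)=-4\lambda_1+0=-4\lambda_1$, as claimed. There is no real obstacle here; the only subtlety worth stating cleanly is the neighborhood-symmetry step that lets one player with strategy $2$ knock out every $m_j$ term in its own closed neighborhood, and this is a one-line case split.
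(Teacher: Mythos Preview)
Your proof is correct and follows essentially the same approach as the paper: show that every $v_j\in\bar N_i$ is strongly dominated because $c_i=2$, conclude $m_j(C)=0$ via \eqref{eq0710-1}, and then read off $u_i(C)=-4\lambda_1$ from the definition of the utility function. The only difference is that you spell out the neighborhood-symmetry step and the $g_i$/$q_i$ decomposition explicitly, whereas the paper compresses this into two sentences.
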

\begin{proof}
	Every player $v_j\in \bar{N}_i$ is strongly dominated because $c_i=2$, and thus $m_j(C)=0$ by observation \eqref{eq0710-1}. The lemma follows from the definition of the utility function.
\end{proof}

\begin{lemma}\label{lem5}
	In any strategy profile $C$, for any player $v_i\in V$, $u_i(c_i=1,C_{-i})>u_i(c'_i=0,C_{-i})$ if and only if $m_i(C)=1$.
\end{lemma}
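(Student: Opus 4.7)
The plan is to prove the equivalence by computing the utility difference $u_i(c_i=1,C_{-i}) - u_i(c_i=0,C_{-i})$ directly and showing it equals $\lambda_2 m_i(C) - \lambda_1$, then using the parameter constraint $\lambda_1 < \tfrac{3}{4}\lambda_2 < \lambda_2$ to read off the sign.

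First I would invoke Observation~\ref{obs1}$(a)$, which states that toggling $c_i$ between $0$ and $1$ does not alter $m_j(C)$ for any $v_j \in V$. This is the key simplification: all $m_j$-values in the two utilities being compared are identical, so the $q_i$-terms differ only through the explicit $(2-c_j)$ coefficients. Since only the coefficient on $j=i$ changes (from $(2-0)=2$ to $(2-1)=1$), while the remaining summands over $v_j \in N_i$ are pointwise unchanged, the difference of the $q_i$-parts collapses to a single term.

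Next I would carry out the short calculation:
\begin{equation*}
q_i(c_i=1,C_{-i}) - q_i(c_i=0,C_{-i}) = -\lambda_2\bigl(1\cdot m_i(C) - 2\cdot m_i(C)\bigr) = \lambda_2 m_i(C),
\end{equation*}
while $g_i(c_i=1,C_{-i}) - g_i(c_i=0,C_{-i}) = -\lambda_1(1^2-0^2) = -\lambda_1$. Summing gives
\begin{equation*}
u_i(c_i=1,C_{-i}) - u_i(c_i=0,C_{-i}) = \lambda_2 m_i(C) - \lambda_1.
\end{equation*}
Since $m_i(C)\in\{0,1\}$, the right-hand side equals $-\lambda_1<0$ when $m_i(C)=0$ and $\lambda_2-\lambda_1>0$ when $m_i(C)=1$ (using $\lambda_1 < \tfrac{3}{4}\lambda_2 < \lambda_2$). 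This yields both directions of the equivalence at once.

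I do not expect any serious obstacle here — the statement is essentially a one-line parameter check once Observation~\ref{obs1}$(a)$ is brought to bear. The only thing to be careful about is correctly identifying that the $(2-c_j)m_j(C)$ summands for $v_j\in N_i$ are truly inert under the move (the $c_j$'s are fixed and the $m_j$'s are inert by the observation), so that the entire utility gap localises to the $j=i$ summand together with the $g_i$ cost term.
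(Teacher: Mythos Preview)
Your proposal is correct and follows essentially the same argument as the paper: invoke Observation~\ref{obs1}$(a)$ to freeze all $m_j$-values, reduce the utility difference to $\lambda_2 m_i(C)-\lambda_1$, and read off the sign from $\lambda_1<\lambda_2$. The only cosmetic difference is that the paper writes out the two utilities in full and subtracts, whereas you split into $g_i$- and $q_i$-contributions first; the substance is identical.
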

\begin{proof}
	Let $C=(c_i=1,C_{-i})$ and $C'=(c'_i=0,C_{-i})$. Then $$u_i(C)=-\lambda_1-\lambda_2m_i(C)-\lambda_2\sum_{v_j\in N_i}(2-c_j)m_j(C),$$ $$u_i(C')=-2\lambda_2m_i(C')-\lambda_2\sum_{v_j\in N_i}(2-c_j)m_j(C').$$
	By Observation \ref{obs1} $(a)$, $m_j(C)=m_j(C')$, $	\forall v_j\in V $. Thus $u_i(C)-u_i(C')=\lambda_2m_i(C)-\lambda_1>0$ if and only if $m_i(C)=1$ by $\lambda_2>\lambda_1$.
\end{proof}

\begin{theorem}\label{thm1}
	Every NE of the RDG is an RDF.
\end{theorem}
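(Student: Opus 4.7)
The plan is a short proof by contradiction that reduces directly to Lemma \ref{lem5}. Suppose $C^*=(c_1^*,\ldots,c_n^*)$ is an NE that fails to be an RDF. By Definition \ref{DefRDF}, failure of the RDF condition gives some vertex $v_i$ with $c_i^*=0$ such that no neighbor $v_j\in N_i$ has $c_j^*=2$. I would then translate this into the language of the game: since $c_i^*\neq 2$ and no $v_j\in N_i$ has $c_j^*=2$, the closed neighborhood $\bar N_i$ contains no player using strategy $2$, so $\bar N_i^2(C^*)=\emptyset$, which by the definition of $m_i$ means $m_i(C^*)=1$ (equivalently, by \eqref{eq0710-1}, $v_i$ is free).

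The second step is to invoke Lemma \ref{lem5} on this very player $v_i$: it states that $m_i(C^*)=1$ is exactly equivalent to $u_i(1,C_{-i}^*)>u_i(0,C_{-i}^*)=u_i(C^*)$. Hence switching from $c_i^*=0$ to $c_i=1$ yields a strictly higher utility for $v_i$, contradicting the defining inequality of a Nash equilibrium. This contradiction forces every white vertex to be strongly dominated, so $C^*$ satisfies the RDF condition.

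There is no real obstacle here: Lemma \ref{lem5} has already done the utility bookkeeping, and the only conceptual point is to recognize that ``$C^*$ is not an RDF'' is the same as ``some white player is free,'' i.e., has $m_i(C^*)=1$. One could alternatively compute $u_i(0,C_{-i}^*)$ and $u_i(1,C_{-i}^*)$ directly from the definition of $u_i$ and use $\lambda_2>\lambda_1$ to get the strict inequality; this would make the proof self-contained, but it is redundant given Lemma \ref{lem5}. I would therefore keep the proof to the two short steps above.
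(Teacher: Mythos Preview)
Your proposal is correct and essentially identical to the paper's own proof: both argue by contradiction, pick a white vertex $v_i$ with no black neighbor so that $m_i(C)=1$, and then invoke Lemma~\ref{lem5} to obtain $u_i(1,C_{-i})>u_i(0,C_{-i})$, contradicting the NE property.
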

\begin{proof}[Proof]
	Suppose this is not true.
	Let $C$ be an NE which is not an RDF. Then, $\exists v_i\in V$ with $c_i=0$ and $N_i^2(C)=\emptyset$. It follows that $m_i(C)=1$ by the definition of $m_i(C)$.  Let $C'$ be $(c'_i=1,C_{-i})$. By Lemma \ref{lem5}, $u_i(c_i'=1,C_{-i})>u_i(c_i=0,C_{-i})$. It contradicts that $C$ is an NE. Hence, $C$ is an RDF.	
\end{proof}

\begin{theorem}\label{thm2}
	Every NE of an RDG is an M-RDF.
\end{theorem}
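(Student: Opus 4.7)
My plan is to proceed by contradiction. Suppose $C=(c_1,\ldots,c_n)$ is an NE but not an M-RDF; by Theorem \ref{thm1} it is already an RDF, so failure of minimality supplies some $v_i$ with $c_i\in\{1,2\}$ such that the profile $C'$ obtained by replacing $c_i$ by $c_i-1$ (and leaving all other coordinates fixed) is still an RDF. I would then exhibit, in each of the two subcases, a strictly profitable unilateral deviation for player $v_i$, which contradicts the NE property and completes the proof.

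For the subcase $c_i=1$, I would use Lemma \ref{lem5} directly: since $C'$ is an RDF and $v_i$ becomes white under $C'$, $v_i$ must have a black neighbor in $C$, so $N_i^2(C)\ne\emptyset$ and hence $m_i(C)=0$; the calculation inside Lemma \ref{lem5} then gives $u_i(c_i=1,C_{-i})-u_i(c_i=0,C_{-i})=\lambda_2 m_i(C)-\lambda_1=-\lambda_1<0$, so switching to $0$ strictly improves $v_i$'s utility. For the subcase $c_i=2$, I would first prove the auxiliary fact that in every NE every gray vertex $v_j$ satisfies $m_j(C)=1$: otherwise the same computation yields $u_j(1,C_{-j})-u_j(0,C_{-j})=-\lambda_1<0$, contradicting the NE condition at $v_j$. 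Since the black vertex $v_i$ lies in the closed neighborhood of each of its neighbors, no neighbor of $v_i$ can be gray in $C$. Thus every $v_j\in N_i$ is either black (in which case the factor $2-c_j$ annihilates that term of $q_i$) or white, and in the white case the RDF property of $C'$ forces $v_j$ to have a black neighbor other than $v_i$, so $m_j(C')=0$. Only $v_i$ itself can contribute to the sum defining $q_i(C')$, giving $u_i(c_i=1,C_{-i})=-\lambda_1-\lambda_2 m_i(C')\ge -\lambda_1-\lambda_2$, while $u_i(c_i=2,C_{-i})=-4\lambda_1$ by Lemma \ref{lem0}. The hypothesis $\lambda_1>\tfrac{2}{3}\lambda_2$ gives $3\lambda_1>2\lambda_2>\lambda_2$, hence $-\lambda_1-\lambda_2>-4\lambda_1$, and the deviation is again strictly profitable.

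The main obstacle I expect is the subcase $c_i=2$, where one has to account carefully for which vertices in $\bar N_i$ might transition from strongly dominated to free once $v_i$ relinquishes its $2$. The NE-structural observation that gray vertices are free is exactly what rules out the potentially troublesome gray neighbors, and the parameter inequality $\tfrac{2}{3}\lambda_2<\lambda_1$ is the precise tool needed to close the payoff comparison in the worst subcase, where $v_i$ itself becomes free after switching to $1$.
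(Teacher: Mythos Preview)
Your proposal is correct and follows essentially the same approach as the paper's proof: both argue by contradiction, split into the cases $c_i=1$ and $c_i=2$, handle $c_i=1$ via Lemma~\ref{lem5}, and in the $c_i=2$ case first eliminate gray neighbors of $v_i$, then compute $u_i(c_i'=1,C_{-i})\ge -\lambda_1-\lambda_2>-4\lambda_1=u_i(C)$ using $\lambda_1>\tfrac{2}{3}\lambda_2$. The only cosmetic difference is how the absence of gray neighbors is argued: the paper picks a hypothetical gray neighbor $v_j\in N_i^1(C)$, observes that $(c_j'=0,C_{-j})$ is still an RDF, and reduces to Case~1 at $v_j$; you instead isolate the structural fact ``every gray vertex in an NE is free'' as an explicit auxiliary claim and derive the same conclusion---both routes rest on the identical Lemma~\ref{lem5} computation.
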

\begin{proof}
	Suppose $C$ is an NE, but is not an M-RDF. By Theorem \ref{thm1}, it suffices to consider the minimality of $C$. Suppose $v_i\in V$ is a vertex with $c_i\geq 1$ such that $C'=(c'_i=c_i-1,C_{-i})$ is still an RDF. There are two cases to be considered.
	
	Case $1$: $c_i=1$
	
	In this case, the assumption of $C'=(c'_i=0,C_{-i})$ is still an RDF implies that
	$N_i^2(C')=N_i^2(C)\neq\emptyset$, and thus $m_i(C')=m_i(C)=0$. By Lemma \ref{lem5}, $u_i(c_i'=0,C_{-i})>u_i(c_i=1,C_{-i})$, contradicting to the assumption that $C$ is an NE.
	
	Case $2$: $c_i=2$
	
	In this case, $C'=(c'_i=1,C_{-i})$ is also an RDF. If $N_i^1(C)\neq\emptyset$, let $v_j\in N_i^1(C)$. Then $(c_j'=0,C_{-j})$ is still an RDF by Observation \ref{obs1} $(b)$. This situation is reduced to Case $1$ (with $v_j$ playing the role of $v_i$). Hence, suppose $N_i^1(C)=\emptyset$. The assumption of $C'=(c'_i=1,C_{-i})$ is an RDF implies that for any $v_j\in N_i^0(C')$, there exists a $v_k\in N_j^2(C')$, and thus $m_j(C')=0$ by observation (\ref{eq0710-1}). Hence, $u_i(c_i'=1,C_{-i})=-\lambda_1-\lambda_2m_i(C')\geq -\lambda_1-\lambda_2$ (note that $m_i(C')\leq1$). Because $u_i(c_i=2,C_{-i})=-4\lambda_1$ by Lemma \ref{lem0}, $u_i(c_i'=1,C_{-i})-u_i(c_i=2,C_{-i})\geq3\lambda_1-\lambda_2>\lambda_2>0$ by the assumption of $0<\frac{2}{3}\lambda_2<\lambda_1$, which is a contradiction to that $C$ is an NE.
\end{proof}

\begin{corollary}\label{cor1217-1}
	In any NE $C$, the following three conditions are satisfied:
	\begin{itemize}
		\item[$(\romannumeral1)$] No gray vertices are adjacent to black vertices,
		\item[$(\romannumeral2)$]  Any white vertex $v_i$ is adjacent to at least one black vertex,
		\item[$(\romannumeral3)$] If a vertex $v_i$ is gray, then $m_i(C)=1$.
	\end{itemize}
\end{corollary}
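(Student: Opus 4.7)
The plan is to dispatch the three items in the order (ii), (iii), (i), since (i) will drop out of (iii), and (iii) is exactly the statement to which Lemma~\ref{lem5} is tailored.

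For (ii), I would simply invoke Theorem~\ref{thm1}: any NE $C$ corresponds to an RDF, so by Definition~\ref{DefRDF} every vertex $v_i$ with $c_i=0$ (a white vertex) must have a neighbor $v_j$ with $c_j=2$ (a black vertex). Nothing more needs to be said here.

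For (iii), the plan is a one-step contradiction argument from NE together with Lemma~\ref{lem5}. Suppose $v_i$ is gray (i.e.\ $c_i=1$) but $m_i(C)=0$. By the contrapositive of Lemma~\ref{lem5}, this gives $u_i(c_i=1,C_{-i}) < u_i(c_i'=0,C_{-i})$, so $v_i$ strictly prefers to switch from gray to white, contradicting that $C$ is an NE. Hence $m_i(C)=1$ for every gray $v_i$. The only subtle point to double-check is that Lemma~\ref{lem5}, as stated, gives a strict inequality in both directions (the ``if and only if'' together with the fact that $m_i(C)\in\{0,1\}$ forces a strict gap of size $\lambda_2-\lambda_1>0$ or $\lambda_1>0$), so the NE-stability requirement of \emph{strictly} better utility is genuinely violated.

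For (i), I would derive it directly from (iii). If a gray vertex $v_i$ were adjacent to a black vertex $v_k\in N_i$, then $v_k\in \bar N_i$ with $c_k=2$, so $\bar N_i^2(C)\neq\emptyset$ and therefore $m_i(C)=0$ by the definition of $m_j(\cdot)$. This contradicts (iii). Thus no gray vertex is adjacent to any black vertex.

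I do not foresee a real obstacle: the whole corollary is a packaging of consequences of Theorems~\ref{thm1}--\ref{thm2} and Lemma~\ref{lem5} that the earlier analysis already did the heavy lifting for. The only item requiring any care is making sure Lemma~\ref{lem5} is applied with the correct strict-inequality interpretation required by the NE condition, which is immediate once one notes $m_i(C)\in\{0,1\}$ and $\lambda_2>\lambda_1>0$.
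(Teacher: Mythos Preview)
Your proposal is correct, and it reaches the same three conclusions as the paper, but the logical flow is genuinely different. The paper proves (i) first, using the \emph{minimality} of $C$ (Theorem~\ref{thm2}) together with Observation~\ref{obs1}(b): if a gray $v_i$ had a black neighbor, then $(c_i'=0,C_{-i})$ would still be an RDF, contradicting that $C$ is an M-RDF. It then derives (iii) as an immediate consequence of (i) via the definition of $m_i$. You instead prove (iii) first, directly from the NE condition and Lemma~\ref{lem5}, and then read off (i) from (iii). Your route has the minor advantage of bypassing Theorem~\ref{thm2} entirely (you only need Theorem~\ref{thm1} and Lemma~\ref{lem5}); the paper's route has the minor advantage that the contrapositive of Lemma~\ref{lem5} only literally gives $u_i(1,C_{-i})\le u_i(0,C_{-i})$, so your argument requires the extra observation (which you do supply) that the gap $\lambda_2 m_i(C)-\lambda_1$ is never zero. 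Both arguments are short and essentially equivalent in difficulty.
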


\begin{proof}
	
	Property $(\romannumeral1)$ follows from the minimality of $C$ and Observation \ref{obs1} $(b)$. In fact, if there exists a gray vertex $v_i$ adjacent to a black vertex $v_j$, then $C'=(c_i'=0,C_{-i})$ is also an RDF, contradicting the minimality of $C$.

	Property $(\romannumeral2)$ is a direct result of the definition of RDF and Theorem \ref{thm1}.
	
	Property $(\romannumeral3)$ follows from property $(\romannumeral1)$, because for any gray vertex $v_i$, $N^2_i(C)=\emptyset$, and thus $m_i(C)=1$ by the definition of $m_j(C)$.
\end{proof}

\begin{figure*}[htb]
	\centering
	\subfloat[]{\includegraphics[width=1.5cm,height=2.5cm]{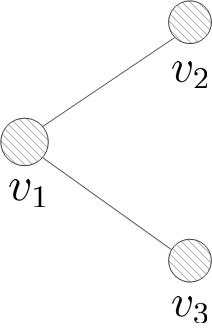}}\hfil
	\subfloat[]{\includegraphics[width=1.5cm,height=2.5cm ]{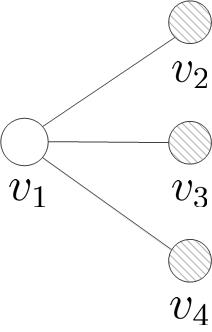}}
	\caption{Some ``bad" substructures $(a)$ A gray vertex having at least two gray neighbors; $(b)$ A white vertex having at least three gray neighbors.}\label{fig2}
\end{figure*}

\begin{theorem}\label{thm3}
	Every NE of an RDG is an S-RDF.
\end{theorem}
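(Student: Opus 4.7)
The plan is a proof by contradiction that leverages the structural information about NEs collected in Corollary \ref{cor1217-1}, together with the parameter constraint $0<\frac{2}{3}\lambda_2<\lambda_1<\frac{3}{4}\lambda_2$. Suppose $C$ is an NE that is not an S-RDF. By Theorem \ref{thm2}, $C$ is already an M-RDF, so from Definition \ref{DefSRDF} there must exist a vertex $v_i$ and a strategy profile $C'$ obtained from $C$ by setting $c'_i=2$, setting $c'_j=0$ for every $v_j\in N_i$ with $c_j=1$, and leaving all other values fixed, such that $C'$ is still an RDF and $\gamma_R(G,C')<\gamma_R(G,C)$. Writing $k$ for the number of gray neighbors of $v_i$ under $C$, the total weight change equals $(2-c_i)-k$, so a strict decrease forces one of two regimes: $c_i=1$ with $k\ge 2$, or $c_i=0$ with $k\ge 3$. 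These are exactly the two ``bad'' substructures depicted in Figure \ref{fig2}.

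First I would dispose of the case $c_i=2$: by Corollary \ref{cor1217-1}$(\romannumeral1)$, a black vertex in an NE has no gray neighbor, so $k=0$ and $C'=C$, contradicting the strict weight decrease. For the remaining two cases the strategy is to exhibit a profitable \emph{unilateral} deviation for player $v_i$ alone, namely flipping $v_i$'s strategy from $c_i$ to $2$ (ignoring the accompanying changes at the gray neighbors that the S-RDF attack would make), which directly violates the NE hypothesis.

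To compute this deviation it is enough to identify which $m_j(C)$ are nonzero inside $\bar{N_i}$. Every gray neighbor $v_j\in N_i$ satisfies $m_j(C)=1$ by Corollary \ref{cor1217-1}$(\romannumeral3)$; every white neighbor of $v_i$ is strongly dominated by some black vertex distinct from $v_i$ (by Corollary \ref{cor1217-1}$(\romannumeral2)$ together with $c_i<2$), so $m_j(C)=0$; every black neighbor trivially contributes $0$. A short calculation then yields $u_i(C)=-\lambda_1-\lambda_2(k+1)$ when $c_i=1$ (with the ``$+1$'' coming from $v_i$ itself, gray, via Corollary \ref{cor1217-1}$(\romannumeral3)$) and $u_i(C)=-\lambda_2 k$ when $c_i=0$, while Lemma \ref{lem0} gives $u_i(c'_i=2,C_{-i})=-4\lambda_1$. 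The required strict improvement $u_i(c'_i=2,C_{-i})>u_i(C)$ thus reduces to $\lambda_2(k+1)>3\lambda_1$ and $\lambda_2 k>4\lambda_1$, respectively, both of which follow from $k\ge 2$ (resp.\ $k\ge 3$) combined with the upper bound $\lambda_1<\frac{3}{4}\lambda_2$.

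The main obstacle I anticipate is the careful neighborhood bookkeeping around $v_i$: one must be sure that flipping $v_i$ alone to black leaves every $m_j$ outside $\bar{N_i}$ untouched and that the $m_j$ \emph{inside} $\bar{N_i}$ that matter for $u_i$ are exactly those pinned down by Corollary \ref{cor1217-1}. This case analysis is also the first place in the paper where the upper bound on $\lambda_1$ bites (Theorem \ref{thm2} used only $\lambda_1>\frac{2}{3}\lambda_2$), so it is worth underscoring that both sides of the calibration are tight by design.
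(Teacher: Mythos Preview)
Your proof is correct and follows essentially the same route as the paper: contradiction via the three cases $c_i\in\{2,1,0\}$, with Corollary~\ref{cor1217-1} pinning down the relevant $m_j(C)$ values and Lemma~\ref{lem0} giving $u_i(c_i'=2,C_{-i})=-4\lambda_1$, so that the unilateral deviation $c_i\mapsto 2$ is strictly profitable. The only cosmetic difference is that you parameterize by $k=|N_i^1(C)|$ and compute $u_i(C)$ exactly, whereas the paper just bounds $u_i(C)\le -\lambda_1-3\lambda_2$ (resp.\ $-3\lambda_2$); note also that the $c_i=1$ case actually only needs $\lambda_1<\lambda_2$, and it is the $c_i=0$ case alone where the sharper bound $\lambda_1<\tfrac34\lambda_2$ is genuinely required.
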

\begin{proof}
	Suppose $C$ is an NE. By Theorem \ref{thm2}, $C$ is an M-RDF. So, if $C$ is not an S-RDF, then $\exists$ an RDF $C'$ with $\gamma_R(G,C')<\gamma_R(G,C)$ satisfying $c'_i=2$ for some $v_i\in V$, $c'_j=0$ for any $v_j\in N_i$ with $c_j=1$, and $c'_k=c_k$ for the other $v_k\notin\bar{N_i}$.
	
	If $c_i=2$, then by Corollary \ref{cor1217-1} $(\romannumeral1)$, no vertex in $N_i$ has value 1. So, $C'=C$, a contradiction to $\gamma_R(G,C')<\gamma_R(G,C)$.
	
	If $c_i=1$, then by $\gamma_R(G,C')<\gamma_R(G,C)$, there are at least two vertices $v_j,v_k\in N_i$ with $c_j=c_k=1$ and $c'_j=c'_k=0$. By Corollary \ref{cor1217-1} $(\romannumeral3)$, $m_i(C)=m_j(C)=m_k(C)=1$. Then $u_i(C)\leq-\lambda_1-3\lambda_2$. Consider a strategy profile $C''=(c_i''=2,C_{-i})$. By Lemma \ref{lem0}, $u_i(C'')=-4\lambda_1$. It follows that $u_i(C'')-u_i(C)\geq3\lambda_2-3\lambda_1>0$, contradicting that $C$ is an NE.
	
	If $c_i=0$, then $\gamma_R(G,C')<\gamma_R(G,C)$ implies that, there are at least three vertices $v_j,v_k,v_l\in N_i$ with $c_j=c_k=c_l=1$ and $c'_j=c'_k=c_l'=0$. Similarly as above, $m_j(C)=m_k(C)=m_l(C)=1$, and thus $u_i(C)\leq-3\lambda_2$, while $C''=(c_i''=2,C_{-i})$ has $u_i(C'')=-4\lambda_1$. Hence $u_i(C'')-u_i(C)\geq3\lambda_2-4\lambda_1>0$, contradicting that $C$ is an NE.
\end{proof}

In the above proof, one can see that the two substructures in Fig. \ref{fig2} do not appear in an NE.

\begin{corollary}\label{cor3}
	No NE of an RDG has the two ``bad" substructures shown in Fig. \ref{fig2}.
\end{corollary}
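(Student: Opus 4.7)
The plan is to prove the contrapositive: if an NE $C$ of the RDG contains either substructure (a) or (b), then some player has a strictly profitable unilateral deviation to strategy $2$, contradicting the NE property. Both arguments are essentially extracted from the two substantive cases in the proof of Theorem 3, and the content of the corollary is really just to isolate and name these two structural obstructions.

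For substructure (a), suppose $v_i$ is a gray vertex (so $c_i=1$) with at least two gray neighbors $v_j,v_k\in N_i$. By Corollary 1(iii), every gray vertex $v$ satisfies $m_v(C)=1$, so $m_i(C)=m_j(C)=m_k(C)=1$. I would then bound $u_i(C)=-\lambda_1-\lambda_2\sum_{v_l\in \bar N_i}(2-c_l)m_l(C)$ from above by keeping only the three terms coming from $v_i,v_j,v_k$, each of which contributes $(2-1)\cdot 1=1$, yielding $u_i(C)\leq -\lambda_1-3\lambda_2$. Comparing with the deviation $C''=(c_i''=2,C_{-i})$, Lemma 0 gives $u_i(C'')=-4\lambda_1$, and the inequality $\lambda_1<\lambda_2$ (which follows from $\lambda_1<\tfrac{3}{4}\lambda_2$) produces $u_i(C'')-u_i(C)\geq 3\lambda_2-3\lambda_1>0$, contradicting the NE property.

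For substructure (b), suppose $v_i$ is white (so $c_i=0$) with at least three gray neighbors $v_j,v_k,v_l$. Again by Corollary 1(iii), $m_j(C)=m_k(C)=m_l(C)=1$. Keeping only these three terms in the sum defining $q_i(C)$ gives $u_i(C)\leq -3\lambda_2$ (dropping the nonnegative $m_i(C)$ contribution can only weaken $u_i(C)$, so the bound remains valid). The same deviation $C''=(c_i''=2,C_{-i})$ again has $u_i(C'')=-4\lambda_1$ by Lemma 0, and now the tighter hypothesis $\lambda_1<\tfrac{3}{4}\lambda_2$, equivalent to $4\lambda_1<3\lambda_2$, gives $u_i(C'')-u_i(C)\geq 3\lambda_2-4\lambda_1>0$, again contradicting NE.

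I do not anticipate any real obstacle. The only careful points are (i) invoking Corollary 1(iii) to set $m=1$ on the gray neighbors before bounding $u_i(C)$, (ii) choosing the correct parameter inequality in each case ($\lambda_1<\lambda_2$ suffices for (a) but the stronger $\lambda_1<\tfrac{3}{4}\lambda_2$ is needed for (b)), and (iii) noting that ``at least'' two or three gray neighbors is automatic because extra gray neighbors only strengthen the upper bound on $u_i(C)$ and therefore strengthen the contradiction.
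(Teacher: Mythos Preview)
Your proposal is correct and follows exactly the approach the paper takes: the paper states this corollary as an immediate consequence of the proof of Theorem~\ref{thm3}, and you have simply spelled out the two relevant cases ($c_i=1$ with two gray neighbors, $c_i=0$ with three gray neighbors) from that proof. The bounds $u_i(C)\leq -\lambda_1-3\lambda_2$ and $u_i(C)\leq -3\lambda_2$, the comparison with $u_i(C'')=-4\lambda_1$, and the use of the parameter inequalities are all identical to the paper's argument.
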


In fact, the two substructures do not appear in any S-RDF.

\begin{lemma}\label{lem1}
	For any S-RDF $C$, the two substructures shown in Fig. \ref{fig2} cannot appear.
\end{lemma}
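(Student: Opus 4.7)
The plan is to prove Lemma 1 by contradiction: suppose the S-RDF $C$ contains one of the substructures in Fig.~\ref{fig2}, then exhibit a witness $C'$ of the transformation forbidden in Definition~\ref{DefSRDF} (taking $f'(v_i)=2$ at the offending center, zeroing out every gray neighbor of $v_i$, leaving all other values unchanged) and show $\gamma_R(G,C')<\gamma_R(G,C)$.

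Concretely, let $v_i$ be the center of the bad substructure, and define $C'$ by $c'_i=2$, $c'_j=0$ for every $v_j\in N_i$ with $c_j=1$, and $c'_k=c_k$ otherwise. The weight bookkeeping is then immediate. In case (a), $v_i$ is gray with $t\geq 2$ gray neighbors, so $\gamma_R(G,C')-\gamma_R(G,C)=(+1)+t\cdot(-1)\leq -1$. In case (b), $v_i$ is white with $t\geq 3$ gray neighbors, giving $(+2)+t\cdot(-1)\leq -1$. Either way the weight strictly drops.

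To verify $C'$ is still an RDF, I would observe that only vertices in $\bar N_i$ have their values changed. Every such vertex is either $v_i$ itself (now black, hence strongly dominated) or a former gray neighbor turned white and still adjacent to the new black $v_i$, hence also strongly dominated. Vertices outside $\bar N_i$ have the same value as in $C$, and their black neighbors are unchanged (gray-to-white flips cannot remove a dominator, since gray vertices never dominate white vertices in the Roman sense); hence they remain dominated exactly as they were under $C$. Therefore $C'$ is an RDF satisfying precisely the structure forbidden by Definition~\ref{DefSRDF}, contradicting the hypothesis that $C$ is an S-RDF.

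I do not expect a genuine obstacle here, as the argument is essentially a bookkeeping exercise; the only point worth stating carefully is the last one, namely that turning some gray vertices into white ones cannot undermine the Roman-domination of any vertex outside $\bar N_i$, because only value-$2$ vertices contribute to dominating white vertices. Note also that the hypothesis ``$C$ is an M-RDF,'' which is part of the S-RDF definition, is not actually used in the argument; only the strong-minimality clause is invoked.
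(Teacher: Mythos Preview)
Your proposal is correct and follows essentially the same approach as the paper: both construct the witness $C'$ by setting $c'_i=2$, zeroing all gray neighbors of $v_i$, and leaving the rest unchanged, then observe $\gamma_R(G,C')\leq\gamma_R(G,C)-1$ and that $C'$ is still an RDF, contradicting Definition~\ref{DefSRDF}. Your write-up is in fact more explicit than the paper's (which simply asserts ``it can be checked that $C'$ is also an RDF''); your only slight imprecision is the closing remark that the M-RDF hypothesis is unused---you do implicitly use that $C$ is an RDF (a consequence of being an M-RDF) when arguing that white vertices outside $\bar N_i$ remain dominated, though the \emph{minimality} part of M-RDF is indeed not needed.
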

\begin{proof}
	Define
	\begin{equation}
		C'=\begin{cases}
			2,& \mbox{for}\ v_1,\\
			0,& \mbox{for}\ v_j\in N_1\wedge (c_j\neq2),\\
			c_j,& \mbox{for other}\ v_j.
		\end{cases}\nonumber
	\end{equation}
	Then, for both cases in Fig. \ref{fig2}, it can be checked that $C'$ is also an RDF and $\gamma_R(G,C')\leq\gamma_R(G,C)-1<\gamma_R(G,C)$, and the lemma follows from the definition of S-RDF.
\end{proof}

By Theorem \ref{thm3}, $S_{\operatorname{NE}}\subseteq S_{\operatorname{S-RDF}}$, where $S_{\operatorname{NE}}$ is the set of NEs of the RDG. Next, it can be proved that  $S_{\operatorname{G-RDF}}\subseteq S_{\operatorname{NE}}$. This is a result of Lemma \ref{lem14} (when we study the algorithmic aspect of the game). The following is an existence proof. It is included here for theoretical completeness. For this purpose, first a lemma is established.

\begin{lemma}\label{lem3}
	In any S-RDF, there is no player $v_i\in V$ who is willing to increase his $c_i$-value unilaterally.
\end{lemma}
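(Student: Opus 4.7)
I would prove the lemma by fixing an S-RDF $C$, taking an arbitrary player $v_i$, and checking in cases that switching from $c_i$ to any strictly larger value gives a weakly smaller utility. Because $C$ is in particular an M-RDF, the same minimality argument used in the proof of Corollary~\ref{cor1217-1}$(\romannumeral1)$ shows that no gray vertex can be adjacent to a black vertex, and because $C$ is an RDF every white vertex has a black neighbor. Combining these facts with the definition of $m_j(C)$, for every $v_j\in V$ one has
\begin{equation*}
m_j(C)=\begin{cases} 1, & c_j=1,\\ 0, & c_j\in\{0,2\}.\end{cases}
\end{equation*}
Moreover, by Observation~\ref{obs1}$(a)$, increasing $c_i$ from $0$ to $1$ or from $1$ to $2$ only affects $m_i$ among the $m_j$'s when $c_i$ crosses the value $2$.

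Next I would enumerate the three possible ``upward'' deviations.

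\emph{Case $c_i=0\to c_i'=1$.} Using Observation~\ref{obs1}$(a)$ and $m_i(C)=0$ (since $v_i$ is white and $C$ is an RDF, $v_i$ has a black neighbor), the computation in the proof of Lemma~\ref{lem5} gives $u_i(C')-u_i(C)=\lambda_2 m_i(C)-\lambda_1=-\lambda_1<0$.

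\emph{Case $c_i=0\to c_i'=2$.} Lemma~\ref{lem0} yields $u_i(C')=-4\lambda_1$, while the simplification of $m_j$ above collapses $q_i(C)$ to $-\lambda_2|N_i^{1}(C)|$ and $g_i(C)=0$. Hence the deviation is profitable iff $\lambda_2|N_i^{1}(C)|>4\lambda_1$. The assumption $\lambda_1>\tfrac{2}{3}\lambda_2$ forces $|N_i^{1}(C)|\ge 3$, which is exactly the substructure in Fig.~\ref{fig2}$(b)$ and is ruled out by Lemma~\ref{lem1}.

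\emph{Case $c_i=1\to c_i'=2$.} Again by Lemma~\ref{lem0}, $u_i(C')=-4\lambda_1$; the simplification of $m_j$ gives $u_i(C)=-\lambda_1-\lambda_2-\lambda_2|N_i^{1}(C)|$ (the extra $-\lambda_2$ comes from $m_i(C)=1$). The deviation is profitable iff $3\lambda_1<\lambda_2(1+|N_i^{1}(C)|)$, which under $\lambda_1>\tfrac{2}{3}\lambda_2$ requires $|N_i^{1}(C)|\ge 2$, corresponding to Fig.~\ref{fig2}$(a)$ and again forbidden by Lemma~\ref{lem1}.

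In all three cases no strictly profitable upward deviation exists, completing the proof. The only delicate point is keeping track of the right inequality on $\lambda_1/\lambda_2$ in each case and matching the resulting ``bad'' configurations to the two substructures of Fig.~\ref{fig2}; this is where the constants $\tfrac{2}{3}\lambda_2<\lambda_1<\tfrac{3}{4}\lambda_2$ are tightly used, and I expect it to be the main (routine but attention-demanding) obstacle.
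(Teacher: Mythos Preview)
Your proposal is correct and follows essentially the same route as the paper: both proofs split into the three upward deviations, invoke Lemma~\ref{lem5} for $0\to 1$, Lemma~\ref{lem0} for the two deviations to value $2$, and then use Lemma~\ref{lem1} to cap $|N_i^1(C)|$ (at most one gray neighbor when $c_i=1$, at most two when $c_i=0$), yielding exactly the inequalities $3\lambda_1-2\lambda_2>0$ and $4\lambda_1-2\lambda_2>0$. Your explicit formula $m_j(C)=\mathbf{1}_{\{c_j=1\}}$ for an S-RDF is a nice way to package the same observations the paper uses implicitly.
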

\begin{proof}
	Let $C$ be an S-RDF. If $c_i=2$, then $v_i$ cannot increase its value.
	
	Suppose $v_i$ increases its value from $c_i=1$ to $c_i'=2$. Because $C$ is an S-RDF, every white vertex is adjacent to at least one black vertex. So, $m_j(C)=0$ for any $v_j$ with $c_j=0$. Hence, $u_i(C)=-\lambda_1-\lambda_2\sum_{v_j\in \bar N_i^1(C)}(2-c_j)m_j(C)$. By Lemma \ref{lem1}, there is at most one gray vertex $v_k\in N_i^1(C)$. Hence, $u_i(C)\geq -\lambda_1-2\lambda_2$. Since $u_i(c_i'=2,C_{-i})=-4\lambda_1$ by Lemma \ref{lem0}, $u_i(C)-u_i(c_i'=2,C_{-i})\geq 3\lambda_1-2\lambda_2>0$.
	
	Suppose $v_i$ increases its value from $c_i=0$ to $c_i'=1$. Because $C$ is an S-RDF, $N_i^2(C)\neq\emptyset$, and thus $m_i(C)=0$. By Lemma \ref{lem5}, $u_i(C)>u_i(c_i'=1,C_{-i})$.
	
	Suppose $v_i$ increases its value from $c_i=0$ to $c_i'=2$. Because $C$ is an S-RDF, $m_j(C)=0$ for any $v_j$ with $c_j=0$. By Lemma \ref{lem1}, there are at most two gray vertices in $N_i^1(C)$. So, $u_i(C)\geq -2\lambda_2$. Since $u_i(c_i'=2,C_{-i})=-4\lambda_1$ by Lemma \ref{lem0}, $u_i(C)-u_i(c_i'=2,C_{-i})\geq 4\lambda_1-2\lambda_2>0$.
	
	In any case, $v_i$ is not willing to increase his strategy from $c_i$ to $c_i'$.
\end{proof}

Note that Lemma \ref{lem3} implies that in an S-RDF, no player is willing to ``increase'' value, but ``decreasing'' value is still possible.

The next lemma gives some properties of G-RDF, which will be used to show that any G-RDF is an NE. For a strategy profile $C$, a vertex $v_j$ is {\em uniquely strongly dominated} by $v_i$ if $v_i$ is the unique vertex in $\bar N_j$ with $c$-value 2. Note that this definition allows $v_i$ to be $v_j$ itself.

\begin{lemma}\label{lem0709-1}
	Let $C$ be a $\operatorname{G-RDF}$, the following two properties hold:
	\begin{itemize}
		\item[$(\romannumeral1)$] No gray vertex is adjacent to a black vertex.
		\item[$(\romannumeral2)$]  For any black vertex $v_i$, there are at least two vertices in $\bar N_i$ which are uniquely strongly dominated by $v_i$.
	\end{itemize}
\end{lemma}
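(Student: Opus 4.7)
My approach is to prove both parts by contradiction, each time exhibiting an explicit local modification of $C$ that yields a valid RDF of strictly smaller weight, contradicting global minimality.

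Part (i) is an immediate minimality argument. If a gray $v_j$ is adjacent to a black $v_i$, then reducing $c_j$ from $1$ to $0$ keeps every vertex dominated ($v_j$ is still strongly dominated by $v_i$ and no other vertex's status changes), so $C$ cannot be a G-RDF. This is essentially the same observation that appears in Corollary \ref{cor1217-1}(i).

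For part (ii) I fix a black vertex $v_i$ and suppose for contradiction that at most one vertex of $\bar N_i$ is uniquely strongly dominated by $v_i$. By part (i), every vertex of $N_i$ is white or black, and whenever $v_k\in \bar N_i$ is not uniquely strongly dominated by $v_i$, by definition some \emph{other} vertex of $\bar N_k$ carries strategy $2$. I then split into three subcases. (a) No vertex of $\bar N_i$ is uniquely strongly dominated by $v_i$; in particular $v_i$ itself has a second black neighbor $v_l$, so flipping $c_i$ from $2$ to $0$ leaves $v_i$ covered by $v_l$, leaves every other $v_k\in N_i$ covered by its guaranteed alternate black vertex, and saves weight $2$. (b) The unique such vertex is $v_i$ itself; flipping $c_i$ from $2$ to $1$ keeps $v_i$ dominated as gray and leaves all other members of $N_i$ covered, saving weight $1$. (c) The unique such vertex is a neighbor $v_j\neq v_i$; by part (i) and uniqueness $v_j$ must be white, and as in (a) there is some $v_l\in N_i$ with $c_l=2$. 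The joint move $c_i$ from $2$ to $0$ together with $c_j$ from $0$ to $1$ then preserves the RDF property while saving weight $1$.

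Each subcase contradicts the global minimality of $C$. The main obstacle I anticipate is subcase (c), since the naive single-vertex flip of $v_i$ would leave $v_j$ undominated while any symmetric swap of values between $v_i$ and $v_j$ preserves weight; the trick of promoting $v_j$ only to gray rather than to black is what produces the unit saving, and it is safe precisely because no vertex of $\bar N_i$ other than $v_j$ depended uniquely on $v_i$.
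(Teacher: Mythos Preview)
Your proof is correct and follows essentially the same approach as the paper. Both part~(i) and part~(ii) proceed by the same local modifications: reducing a gray vertex adjacent to black, and for part~(ii) the three moves $c_i\to 0$, $c_i\to 1$, and the joint move $(c_i,c_j)\to(0,1)$ appear in both arguments. Your case organization (splitting directly on whether zero or one vertex of $\bar N_i$ is uniquely strongly dominated, and in the latter case whether it is $v_i$ or a neighbor) is slightly cleaner than the paper's, which first establishes the existence of a uniquely dominated white neighbor and then branches on whether $v_i$ has a black neighbor, but the underlying constructions are identical.
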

\begin{proof}
	Suppose $C$ has a gray vertex $v_i$ which is adjacent to a black vertex. Then, $v_i$ is still a dominated vertex if its status is changed from gray to white. By Observation \ref{obs1} $(b)$, the status of the other vertices (dominated or not) are not affected by such a change. Since all vertices are dominated in $C$, so are all vertices in $C'=(c'_i=0,C_{-i})$. But, then, $C'$ is an RDF with smaller weight than $C$, contradicting that $C$ is a G-RDF. Property $(\romannumeral1)$ is proved.
	
	Consider a black vertex $v_i$. By $(\romannumeral1)$, all neighbors of $v_i$ are either black or white. Since it has been assumed that $G$ has no isolation vertices,  one has $N_i\neq\emptyset$. If all neighbors of $v_i$ are black, then $C'=(c_i'=0,C_{-i})$ is an RDF with $\gamma_R(G,C')=\gamma_R(G,C)-2$, contradicting that $C$ is a G-RDF. So, $|N_i^0(C)|\geq 1$. If every vertex in $N_i^0(C)$ has another black neighbor, then $C'=(c_i'=1,C_{-i})$ is an RDF with $\gamma_R(G,C')=\gamma_R(G,C)-1$, also a contradiction. So, there is at least one vertex in $N_i^0(C)$, say $v_j$, which is uniquely strongly dominated by $v_i$. If all neighbors of $v_i$ are white, then $v_i$ is a vertex which is strongly dominated by itself, property $(\romannumeral2)$ is satisfied. Next, suppose $v_i$ has a black neighbor. If $v_j$ is the only vertex in $N_i^0(C)$ which is uniquely strongly dominated by $v_i$, then $C'=(c_i'=0,c_j'=1,C_{-\{i,j\}})$ is an RDF with $\gamma_R(G,C')=\gamma_R(G,C)-1$, again a contradiction. So, in this case, there are at least two vertices in $N_i^0(C)$ which are uniquely strongly dominated by $v_i$. Property $(\romannumeral2)$ is proved.
\end{proof}

Now, it is ready to prove $S_{\operatorname{G-RDF}}\subseteq S_{\operatorname{NE }}$.

\begin{theorem}\label{thm4}
	In an RDG, every G-RDF is an NE.
\end{theorem}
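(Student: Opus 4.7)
The plan is to exploit the two ``auxiliary'' results already available: Lemma \ref{lem3} (no player in an S-RDF wants to \emph{increase} his value) together with the structural properties of a G-RDF recorded in Lemma \ref{lem0709-1}. My first observation would be that any G-RDF is automatically an S-RDF (and hence an M-RDF): if a G-RDF $C$ admitted either a single-vertex decrement or an S-RDF-style swap to produce another RDF, that RDF would have weight strictly less than $\gamma_R(G,C)$, contradicting global minimality. Therefore Lemma \ref{lem3} applies, and it suffices to show that no player wants to \emph{decrease} his strategy.

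I would split this into three sub-cases according to the deviation $c_i\to c_i'$. For $c_i=1\to c_i'=0$: by Lemma \ref{lem0709-1}$(\romannumeral1)$, a gray vertex $v_i$ has no black neighbor, so $N_i^2(C)=\emptyset$ and $m_i(C)=1$; Lemma \ref{lem5} then gives $u_i(c_i=1,C_{-i})>u_i(c_i'=0,C_{-i})$, ruling out this move. For $c_i=2\to c_i'=1$: by Lemma \ref{lem0709-1}$(\romannumeral1)$, all neighbors of $v_i$ have value $0$ or $2$, and by Lemma \ref{lem0709-1}$(\romannumeral2)$ there exist at least two vertices in $\bar N_i$ that are uniquely strongly dominated by $v_i$; each such $v_j$ satisfies $m_j(C')=1$ after the deviation, contributing $(2-c'_j)\geq 1$ to the penalty term (contributing $2$ if $v_j\ne v_i$ and $1$ if $v_j=v_i$). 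A careful count gives $u_i(c_i'=1,C_{-i})\le -\lambda_1-3\lambda_2$, while $u_i(c_i=2,C_{-i})=-4\lambda_1$ by Lemma \ref{lem0}, so the difference is at most $3\lambda_1-3\lambda_2<0$ by the assumption $\lambda_1<\tfrac{3}{4}\lambda_2$. The case $c_i=2\to c_i'=0$ is analogous: each uniquely strongly dominated $v_j\in\bar N_i$ now contributes $2$ to the penalty, giving $u_i(c_i'=0,C_{-i})\le -4\lambda_2$, which is strictly less than $-4\lambda_1$ again by $\lambda_1<\tfrac{3}{4}\lambda_2$.

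Combining: no player benefits from either raising or lowering his strategy, so $C$ satisfies the NE inequality for every $v_i$ and every $c_i'\in S_i$, which is precisely what is required.

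The step I expect to require the most care is the $c_i=2\to c_i'=1$ case, because I must correctly account for whether $v_i$ itself is among the two uniquely strongly dominated vertices guaranteed by Lemma \ref{lem0709-1}$(\romannumeral2)$; the worst case (one such vertex is $v_i$ and only one other lies in $N_i^0(C)$) is exactly the bound that makes the inequality $3\lambda_1-3\lambda_2<0$ tight, so the constants $\lambda_1,\lambda_2$ have to be invoked explicitly. Everything else is bookkeeping that follows from Lemmas \ref{lem0}, \ref{lem5}, \ref{lem3} and the two properties in Lemma \ref{lem0709-1}.
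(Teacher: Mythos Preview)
Your proposal is correct and follows essentially the same approach as the paper: use Lemma~\ref{lem3} (via the observation that a G-RDF is an S-RDF) to rule out upward deviations, then use Lemma~\ref{lem0709-1} together with Lemmas~\ref{lem0} and~\ref{lem5} to bound the utilities and rule out each downward deviation. Your case split (by the pair $(c_i,c_i')$) and the paper's (by whether $v_i$ is among the two uniquely strongly dominated vertices) are just two orderings of the same bookkeeping, and the key numerical bounds $u_i(c_i'=1,C_{-i})\le -\lambda_1-3\lambda_2$ and $u_i(c_i'=0,C_{-i})\le -4\lambda_2$ are identical to the paper's.
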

\begin{proof}
	Suppose $C$ is a G-RDF. Because a G-RDF is also an S-RDF, by Lemma \ref{lem3}, no player is willing to increase his $c$-value. So, to show that $C$ is an NE, it suffices to show that no player $v_i$ is willing to decrease his $c$-value from $c_i$ to $c_i'<c_i$.
	
	If $c_i=2$, then $u_i(C)=-4\lambda_1$, and by Lemma \ref{lem0709-1} $(\romannumeral2)$, there are at least two vertices $v_j,v_k\in \bar N_i$ which are uniquely strongly dominated by $v_i$. Reducing $c_i=2$ to $c_i'=1$ or $c_i'=0$ results in a strategy profile $C'$. Note that $v_j$ and $v_k$ are no longer strongly dominated in $C'$, and thus $m_j(C')=m_k(C')=1$. If $v_i\in\{v_j,v_k\}$, then the other vertex in $\{v_j,v_k\}$ is white by Lemma \ref{lem0709-1} $(\romannumeral1)$, and thus $u_i(c_i'=1,C_{-i})\leq -\lambda_1-3\lambda_2$, $u_i(c_i'=0,C_{-i})\leq -4\lambda_2$. If $v_i\not\in \{v_j,v_k\}$, then both $v_j,v_k$ are white, and thus $u_i(C')\leq-4\lambda_2$. In any case, $u_i(C)>u_i(C')$, and $v_i$ is not willing to change.
	
	If $c_i=1$, then by Lemma \ref{lem0709-1} $(\romannumeral1)$, $\bar{N}_i^2(C)=\emptyset$. It follows that $m_i(C)=1$. So, $u_i(C)>u_i(c_i'=0,C_{-i})$ by Lemma \ref{lem5}, and thus $v_i$ is not willing to change.
\end{proof}

By Definitions \ref{DefRDF}, \ref{DefMRDF} and \ref{DefSRDF}, $S_{\operatorname{S-RDF}}\subseteq S_{\operatorname{M-RDF}}\subseteq S_{\operatorname{RDF}}$. Combining this with Theorem \ref{thm3} and Theorem \ref{thm4}, one has the following corollary.

\begin{corollary}\label{cor1}
	In an RDG, $S_{\operatorname{G-RDF}}\subseteq S_{\operatorname{NE}}\subseteq S_{\operatorname{S-RDF}}\subseteq S_{\operatorname{M-RDF}}\subseteq S_{\operatorname{RDF}}.$
\end{corollary}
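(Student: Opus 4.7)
The plan is essentially bookkeeping: Corollary \ref{cor1} is a summary statement, and each of its four set inclusions is already isolated in the excerpt, so the proof consists of stitching them together in the right order rather than doing any new technical work.

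I would proceed right-to-left, starting with the two easiest inclusions, which are purely definitional. By Definition \ref{DefMRDF}, an M-RDF of $G$ is, in the first place, an RDF of $G$, so $S_{\operatorname{M-RDF}}\subseteq S_{\operatorname{RDF}}$ is immediate from the phrasing of the definition. Likewise, Definition \ref{DefSRDF} declares an S-RDF to be an M-RDF satisfying an additional minimality-against-black-swap condition, so $S_{\operatorname{S-RDF}}\subseteq S_{\operatorname{M-RDF}}$ is also immediate. Neither step requires a verification beyond citing the defining clauses.

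For the middle inclusion $S_{\operatorname{NE}}\subseteq S_{\operatorname{S-RDF}}$, I would simply invoke Theorem \ref{thm3}, which states exactly that every NE of the RDG is an S-RDF. For the outer inclusion $S_{\operatorname{G-RDF}}\subseteq S_{\operatorname{NE}}$, I would invoke Theorem \ref{thm4}, which asserts that every G-RDF is an NE of the RDG. These two theorems do the real work (Theorem \ref{thm3} uses Corollary \ref{cor1217-1} and a case split on $c_i\in\{0,1,2\}$; Theorem \ref{thm4} leans on Lemma \ref{lem3} for the ``increase'' direction and on Lemma \ref{lem0709-1} for the ``decrease'' direction), but since both results are already proved, the corollary only needs to cite them.

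I do not foresee any obstacle: there is no gap to close and no new quantity to estimate. The only thing worth double-checking is that the four inclusions can be chained without a mismatch of hypotheses, i.e., that the set $S_{\operatorname{NE}}$ appearing in Theorem \ref{thm3} is the same object as the $S_{\operatorname{NE}}$ appearing in Theorem \ref{thm4} (both refer to NEs of the RDG defined in Section \ref{sec2.2}), and that ``S-RDF'' in Theorem \ref{thm3} agrees with Definition \ref{DefSRDF}. Once these identifications are made explicit, the proof is a single compound inclusion
\[
S_{\operatorname{G-RDF}} \stackrel{\text{Thm }\ref{thm4}}{\subseteq} S_{\operatorname{NE}} \stackrel{\text{Thm }\ref{thm3}}{\subseteq} S_{\operatorname{S-RDF}} \stackrel{\text{Def }\ref{DefSRDF}}{\subseteq} S_{\operatorname{M-RDF}} \stackrel{\text{Def }\ref{DefMRDF}}{\subseteq} S_{\operatorname{RDF}},
\]
and I would present it in exactly that form.
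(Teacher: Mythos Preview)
Your proposal is correct and matches the paper's own proof essentially line for line: the paper also derives $S_{\operatorname{S-RDF}}\subseteq S_{\operatorname{M-RDF}}\subseteq S_{\operatorname{RDF}}$ directly from Definitions~\ref{DefRDF}, \ref{DefMRDF}, \ref{DefSRDF} and then cites Theorem~\ref{thm3} and Theorem~\ref{thm4} for the remaining two inclusions. There is nothing to add.
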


The relationship among these sets is illustrated in Fig. \ref{fig3}.

\begin{figure}[h]
	\centering
	\includegraphics[width=0.45\linewidth]{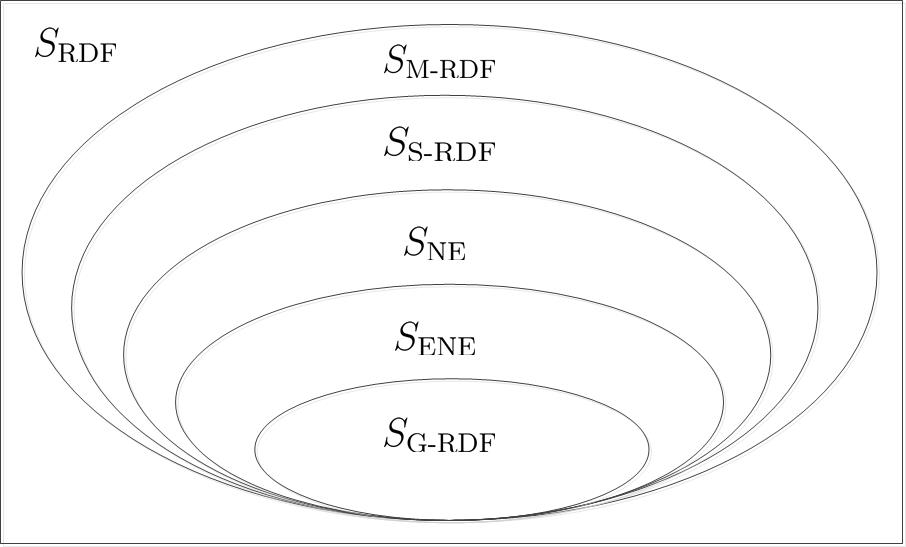}
	\caption{Relationship among different types of RDF, where $S_{\operatorname{ENE}}$ is the set of ENEs obtained by Algorithm \ref{alg3} defined in Section \ref{sec4} }
	\label{fig3}
\end{figure}

From Fig. \ref{fig1}, it can be seen that $S_{\operatorname{S-RDF}}$ might be a proper subclass of $S_{\operatorname{M-RDF}}$. Next, it will be shown that the difference can be very large. For this purpose, define a parameter $\rho_n$ as follows, as a measure of how big the difference is.

\begin{definition}
	Let $\mathbb{G}$ be the set of all graphs on $n$ vertices. Define $$ \rho_n=\max_{G\in\mathbb{G}}\frac{\min_{C'\in S_{\operatorname{M-RDF}}\setminus S_{\operatorname{S-RDF}}}\gamma_{R}(G,C')}{\max_{C''\in S_{\operatorname{S-RDF}}}\gamma_{R}(G,C'')}.$$
\end{definition}

Consider the example in Fig. \ref{fig4}. It can be checked that $S_{\operatorname{M-RDF}}\setminus S_{\operatorname{S-RDF}}=\{C_1,C_2\}$, and $S_{\operatorname{S-RDF}}=\{C_3\}$. Hence, $\rho_n\geq \frac{n}{2}$.

\begin{figure*}[h]
	\centering
	\subfloat[$C_1$]{\includegraphics[width=0.2\linewidth]{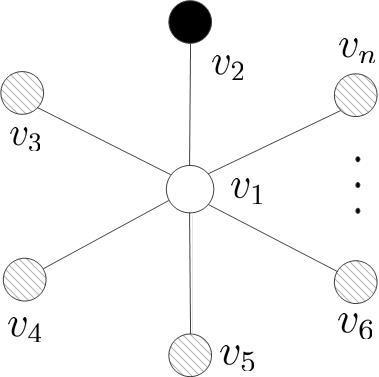}}\hfil\hfil
	\subfloat[$C_2$]{\includegraphics[width=0.2\linewidth]{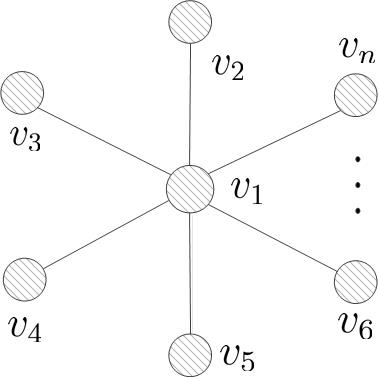}}\hfil\hfil
	\subfloat[$C_3$]{\includegraphics[width=0.2\linewidth]{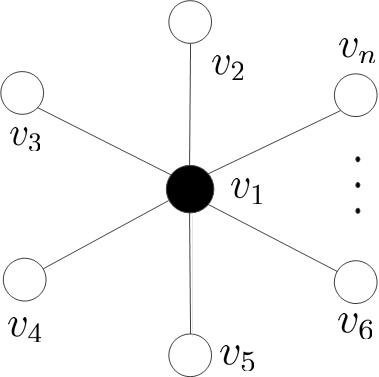}}
	\caption{An example showing that the difference between $S_{\operatorname{M-RDF}}\setminus S_{\operatorname{S-RDF}}$ and $S_{\operatorname{S-RDF}}$ can be very large.}
	\label{fig4}
\end{figure*}

It has been proved that $S_{\operatorname{NE}}\subseteq S_{\operatorname{S-RDF}}$. The example in Fig. \ref{fig5} shows that the inclusion might be proper. In fact, the strategy profile $C$ in Fig. \ref{fig5} $(a)$ satisfies $C\in S_{\operatorname{S-RDF}}\setminus S_{\operatorname{NE}}$, because $u_2(C)=-4\lambda_1<-2\lambda_2=u_2(c_2'=0,C_{-2})$. Note that $C'$ is both an NE and an S-RDF.

\begin{figure*}[h]
	\centering
	\subfloat[$C\in S_{\operatorname{S-RDF}}\setminus S_{\operatorname{NE}}$ and $\gamma_R(G,C)=n$]{\includegraphics[width=0.45\linewidth]{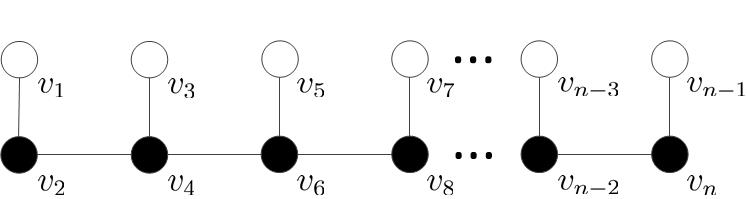}}\hfil
	\subfloat[$C'\in S_{\operatorname{S-RDF}}\cap S_{\operatorname{NE}}$, $\gamma_R(G,C')=\frac{3}{4}n$]{\includegraphics[width=0.45\linewidth]{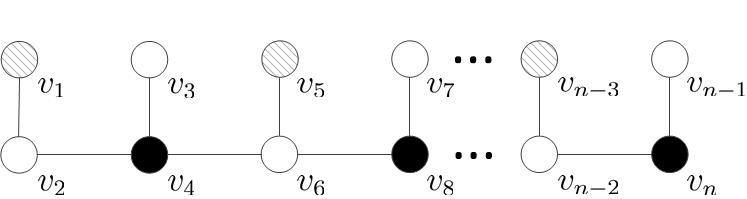}}
	\caption{An example showing that $S_{\operatorname{S-RDF}}\setminus S_{\operatorname{NE}}\neq\emptyset$.}
	\label{fig5}
\end{figure*}

\subsection{Potential Game}

In this subsection, it will be shown that the proposed RDG is an exact potential game, and thus NEs exist. Furthermore, an NE can be reached in linear rounds of interactions if the players are allowed to determine their strategies one by one.

\begin{definition}[exact potential game \cite{Monderer}]
	{\rm Call $\Gamma = (V;\{S_i\}_{i=1}^{n};\{u_i\}_{i=1}^{n})$ an {\em exact potential game} if there exists a potential function $\pi:\Sigma\mapsto\mathbb{R}$ such that for any player $v_i\in V$ and any $c_i,c_i'\in S_i$, $C_{-i}\in  S_{-i}$, the following equality holds:
		$$
		\pi(c_i,C_{-i})-\pi(c_i',C_{-i})=u_i(c_i,C_{-i})-u_i(c_i',C_{-i}).
		$$}
\end{definition}

\begin{lemma}\label{lem4}
	The proposed RDG is an exact potential game.
\end{lemma}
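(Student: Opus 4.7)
The plan is to exhibit an explicit potential function and verify that a unilateral deviation changes the potential by the same amount it changes the deviator's utility. A natural candidate, obtained by ``globalizing'' the utility, is
\[
\pi(C)\;=\;-\lambda_1\sum_{v_i\in V}c_i^{2}\;-\;\lambda_2\sum_{v_j\in V}(2-c_j)\,m_j(C).
\]
The first piece is a clean sum of the individual ``cost-of-deployment'' terms $g_i$, and the second piece aggregates, over all vertices, the same ``un-strongly-dominated contribution'' that appears inside each $q_i$.

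The key step is to track carefully what changes when a single player $v_i$ switches from $c_i$ to $c_i'$. I would first note that $m_j(C)$ depends only on the strategies inside $\bar N_j$, so when only $v_i$'s strategy moves, $m_j(C)$ can change only for $v_j\in\bar N_i$ (i.e.\ those $v_j$ with $v_i\in\bar N_j$); for all other $v_j$, both $m_j$ and $c_j$ are unchanged. Consequently, in the difference $\pi(c_i',C_{-i})-\pi(c_i,C_{-i})$ every term with $v_j\notin\bar N_i$ cancels, leaving exactly
\[
-\lambda_1(c_i'^{2}-c_i^{2})-\lambda_2\!\!\sum_{v_j\in\bar N_i}\!\bigl[(2-c_j')m_j(C')-(2-c_j)m_j(C)\bigr],
\]
where $c_j'=c_j$ for $j\neq i$.

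I would then write out $u_i(c_i',C_{-i})-u_i(c_i,C_{-i})$ directly from the definitions of $g_i$ and $q_i$. Because $q_i$ already sums over $\bar N_i$ with exactly the same integrand $(2-c_j)m_j$, the resulting expression matches the displayed $\Delta\pi$ term-by-term. This gives $\pi(c_i',C_{-i})-\pi(c_i,C_{-i})=u_i(c_i',C_{-i})-u_i(c_i,C_{-i})$ for every $v_i$ and every pair $c_i,c_i'\in S_i$, which is precisely the exact potential condition.

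The only part that requires care is the bookkeeping of the sum in $\pi$: one must be sure that the contributions from vertices $v_j$ outside $\bar N_i$ truly cancel between $\pi(c_i',C_{-i})$ and $\pi(c_i,C_{-i})$, and this relies on the symmetry $v_i\in\bar N_j\iff v_j\in\bar N_i$ plus the fact that $m_j$ is insensitive to strategies outside $\bar N_j$. Once this is established, no case analysis on the values of $c_i,c_i'\in\{0,1,2\}$ is needed, and the lemma follows; as a bonus, the existence of an NE is then a standard consequence of the finiteness of $\Sigma$, since any unilateral improving deviation strictly increases $\pi$.
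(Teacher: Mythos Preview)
Your proposal is correct and follows essentially the same approach as the paper: the same potential function $\pi(C)=-\lambda_1\sum_j c_j^2-\lambda_2\sum_j(2-c_j)m_j(C)$ is used, and the verification hinges on the same observation that $m_j$ depends only on strategies in $\bar N_j$, so for $v_j\notin\bar N_i$ the corresponding terms cancel in the difference. The paper organizes the computation by splitting $\pi$ into two pieces $\pi^{(1)},\pi^{(2)}$ and matching them to $g_i,q_i$ separately, but the argument is otherwise identical to yours.
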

\begin{proof}
    It will be proved that the following function is a potential function:
	$$
	\pi(C)=-\lambda_1\sum_{j=1}^{n}c_j^2-\lambda_2\sum_{j=1}^n(2-c_j)m_j(C).
	$$
	Denote the two terms of $\pi(C)$ as $\pi^{(1)}(C)$ and $\pi^{(2)}(C)$, respectively.
	
	Let $C=(c_i,C_{-i})$ and $C'=(c_i',C_{-i})$ be two strategy profiles before and after some $v_i$ changes its strategy from $c_i$ to $c_i'$. It can be verified  that
	\begin{align}\label{eq1}
		\pi^{(1)}(C)-\pi^{(1)}(C')&=\lambda_1 (c_i'^2-c_i^2)=g_i(C)-g_i(C').
	\end{align}
	Note that changing the status of $v_i$ does not affect the $m$-values of those vertices outside of $\bar N_i$, so one has
	\begin{align}\label{eq2}
		\pi^{(2)}(C)-\pi^{(2)}(C') & =-\lambda_2\sum_{v_j\in \bar{N_i}}(2-c_j)m_j(C)\nonumber\\
		&+\lambda_2\sum_{v_j\in \bar{N_i}}(2-c'_j)m_j(C')\nonumber\\
		& =q_i(C)-q_i(C'),
	\end{align}
	Combining \eqref{eq1} and \eqref{eq2}, gives $\pi(C)-\pi(C')=u_i(C)-u_i(C')$.	The proof is completed.
\end{proof}

As a consequence of Lemma \ref{lem4}, NEs exist. Furthermore, as shown by the following theorem, an NE can be reached in linear time steps of asynchronous interactions, where ``asynchronous'' means that players determine their strategies one by one, and thus in every round, although there are many players who are willing to change, only one actually takes action.

\begin{theorem}\label{thm0920-1}
	Starting from any initial state, the number of iterations needed for an RDG to reach an NE is at most $\frac{(4\lambda_1+2\lambda_2)n}{\min\{3\lambda_1-2\lambda_2,-4\lambda_1+3\lambda_2\}}=O(n)$.
\end{theorem}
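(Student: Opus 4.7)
My plan is to leverage the exact potential structure established in Lemma \ref{lem4}. Since the RDG is an exact potential game, every strictly improving unilateral deviation by some player $v_i$ increases the potential function $\pi$ by exactly the amount $u_i(C')-u_i(C)>0$. The standard convergence argument is then: (range of $\pi$) $/$ (minimum per-step increment) bounds the number of iterations. So the proof reduces to (i) bounding the range of $\pi$, and (ii) finding a strictly positive uniform lower bound $\delta$ on the utility jump of any improving move.

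For (i), I would use the explicit formula $\pi(C)=-\lambda_1\sum_{j=1}^n c_j^2-\lambda_2\sum_{j=1}^n (2-c_j)m_j(C)$. Since $c_j\in\{0,1,2\}$ one has $c_j^2\le 4$, and since $m_j(C)\in\{0,1\}$ with $(2-c_j)\le 2$, the second sum is termwise at most $2$. Hence $\pi(C)\in[-(4\lambda_1+2\lambda_2)n,\,0]$, so the total potential increase over any run is at most $(4\lambda_1+2\lambda_2)n$.

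For (ii), I would enumerate the six possible strategy transitions $c_i\to c_i'$ for a single player. For each, I would write $u_i(c_i',C_{-i})-u_i(c_i,C_{-i})$ as an expression of the form $a\lambda_1+\lambda_2\,k$, where $k\ge 0$ is an integer combination of $m$-values of vertices in $\bar N_i$, and then ask for the smallest positive value this expression can attain. For instance: $2\to 1$ yields $3\lambda_1-\lambda_2 k$, positive iff the integer $k\le 2$, with minimum positive value $3\lambda_1-2\lambda_2$ (positive by $\lambda_1>\tfrac{2}{3}\lambda_2$); $0\to 2$ yields $\lambda_2 k-4\lambda_1$, positive iff the integer $k\ge 3$, with minimum positive value $3\lambda_2-4\lambda_1$ (positive by $\lambda_1<\tfrac{3}{4}\lambda_2$). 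The remaining four transitions ($0\to 1$, $1\to 0$, $1\to 2$, $2\to 0$) yield minimum positive gains that a routine check shows dominate $\min\{3\lambda_1-2\lambda_2,\,3\lambda_2-4\lambda_1\}=:\delta$ under the hypothesis $0<\tfrac{2}{3}\lambda_2<\lambda_1<\tfrac{3}{4}\lambda_2$. This gives the uniform lower bound $\delta$. Using the potential-game identity, the same bound applies to the per-step increase of $\pi$, so the number of iterations is at most $(4\lambda_1+2\lambda_2)n/\delta=O(n)$, matching the claim.

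The main obstacle I anticipate is the case analysis in step (ii): I must be careful that the arithmetic constraints $0<\tfrac{2}{3}\lambda_2<\lambda_1<\tfrac{3}{4}\lambda_2$ are exactly what is needed to make the critical minima $3\lambda_1-2\lambda_2$ and $3\lambda_2-4\lambda_1$ both strictly positive, and that no other transition introduces a smaller positive gap. In particular, I need to exploit the integrality of the $m$-values carefully, because the lower bound $\delta$ on improvement comes from the fact that $k$ jumps across the threshold $4\lambda_1/\lambda_2\in(8/3,3)$ or $3\lambda_1/\lambda_2\in(2,9/4)$ between consecutive integers, so the gap to the threshold cannot be made arbitrarily small.
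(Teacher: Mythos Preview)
Your proposal is correct and follows essentially the same approach as the paper: bound the range of the potential $\pi$ by $(4\lambda_1+2\lambda_2)n$, then enumerate the six transition types to find the minimum positive utility increment, yielding $\min\{3\lambda_1-2\lambda_2,\,3\lambda_2-4\lambda_1\}$. The paper presents the six cases in a table rather than singling out the two critical ones, but your integrality argument (that $k$ must jump across thresholds in $(8/3,3)$ and $(2,9/4)$) is exactly what underlies their computation.
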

\begin{proof}
	As long as $C$ is not an NE, there is a player $v_i$ who is willing to change his strategy from $c_i$ to $c_i'$. Denote $\Delta u_i=u_i(C')-u_i(C)$, where $C'=(c'_i,C_{-i})$. Because $v_i$ is willing to change only when he can be {\em strictly} better off. So, $\Delta u_i>0$. Then, by Lemma \ref{lem4}, $\pi(C')-\pi(C)= u_i(C')-u_i(C)>0$, that is, in every round of interaction, the potential function $\pi$ is strictly increasing. To prove the theorem, it is needed to estimate the range of $\pi$ and {\em positive lower bounds} of $\Delta u_i$ (that is, the lower bounds of $\Delta u_i$ under the condition that $\Delta u_i>0$).	
	
	Clearly, $-(4\lambda_1+2\lambda_2)n\leq \pi(C)\leq 0$. The positive lower bounds for $\Delta u_i$, in difference cases of $c_i$ and $c_i'$, are shown in Table \ref{[Tab:02]}.  For example, when $c_i=2$ and $c_i'=0$, $u_i(C)=-4\lambda_1$ by Lemma \ref{lem0} and $u_i(c_i'=0,C_{-i})=-\lambda_2\sum_{v_j\in {\bar{N}_i}}(2-c_j)m_j(C)$. Because $4\lambda_1-2\lambda_2>0$ and $4\lambda_1-3\lambda_2<0$, the {\em positive} value of $u_i(c_i'=0,C_{-i})-u_i(C)$ in this case must be $\geq4\lambda_1-2\lambda_2$. Using this argument for all cases, the estimations in  Table \ref{[Tab:02]} are obtained. Then, it can be verified that $\min\{3\lambda_1-2\lambda_2,-4\lambda_1+3\lambda_2\}$ is a positive lower bound of $\Delta u_i$. Hence, the number of iterations is at most  $\frac{(4\lambda_1+2\lambda_2)n}{\min\{3\lambda_1-2\lambda_2,-4\lambda_1+3\lambda_2\}}$, which is $O(n)$ since $\lambda_1$ and $\lambda_2$ are constants satisfying $\frac{2}{3}\lambda_2<\lambda_1<\frac{3}{4}\lambda_2$.
\end{proof}
\begin{table}[htb]
	\centering
	\caption{Estimation of the positive lower bounds of $\Delta u_i$.}
	\label{[Tab:02]}
	\begin{tabular}{rrr}
		\toprule
		$c_i$&$c_i'$&$\min\{\Delta u_i,\Delta u_i>0\}$\\
		\midrule
		$2$& $0$& $4\lambda_1-2\lambda_2$\\
		$2$& $1$& $3\lambda_1-2\lambda_2$\\
		$1$& $0$& $\lambda_1$\\
		$0$& $1$& $-\lambda_1+\lambda_2$\\
		$1$& $2$& $-3\lambda_1+3\lambda_2$\\
		$0$& $2$& $-4\lambda_1+3\lambda_2$\\
		\bottomrule
	\end{tabular}
\end{table}

\subsection{Pareto Optimality}
In the above subsection, the existence of NE and the quality of RDF corresponding to NE are discussed. In this subsection, the NE will be further analyzed from the perspective of social welfare, and it will be shown  that NE has Pareto optimality.

\begin{theorem}
	Every NE of an RDG is a Pareto-optimal solution.
\end{theorem}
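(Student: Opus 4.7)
The plan is to argue by contradiction. Suppose $C$ is an NE but not Pareto-optimal, so there exists $C'\neq C$ with $u_i(C')\geq u_i(C)$ for all $i$ and strict inequality for some $j$. Writing $q_i=\sum_{v_j\in\bar N_i}(2-c_j)m_j(C)$ and $q_i'=\sum_{v_j\in\bar N_i}(2-c_j')m_j(C')$, the condition $u_i(C')\geq u_i(C)$ is equivalent to the per-player inequality
\[
\lambda_2(q_i'-q_i)\leq\lambda_1\bigl[c_i^2-(c_i')^2\bigr].
\]
I would combine this with two structural facts about the NE $C$: by Corollary~\ref{cor1217-1}, no gray vertex is adjacent to a black vertex and every white vertex has a black neighbor, hence $m_j(C)=1$ iff $c_j=1$; by Theorem~\ref{thm3} and Lemma~\ref{lem1}, a white vertex has at most two gray neighbors and a gray vertex has at most one gray neighbor.

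First I would classify the admissible transitions in $D=\{v_i:c_i\neq c_i'\}$, splitting $D=T\cup T'$ with $T=\{v_i:c_i'>c_i\}$ and $T'=\{v_i:c_i'<c_i\}$. Using $q_i'\geq 0$, $q_i'=0$ whenever $c_i'=2$, and the bounds $2/3<\lambda_1/\lambda_2<3/4$, the transitions $(c_i,c_i')=(0,2)$ and $(1,2)$ are ruled out from $T$ since they would force $q_i\geq 4\lambda_1/\lambda_2>8/3$ and $q_i\geq 3\lambda_1/\lambda_2>2$ respectively, both contradicting Lemma~\ref{lem1}. Hence $T$ consists entirely of $(0,1)$-transitions, and in particular no new black vertex appears in $C'$.

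Then I split into three cases. In Case A ($T\neq\emptyset$, $T'=\emptyset$), pick $v_i\in T$: since no black changes, every gray neighbor of $v_i$ keeps $m=1$ and every white neighbor keeps $m=0$, while $v_i$ itself (now gray, still retaining a black neighbor) has $m_i(C')=0$; this yields $q_i'=q_i$, contradicting the required $q_i'-q_i\leq-\lambda_1/\lambda_2<-2/3$. In Case B ($T=\emptyset$, $T'\neq\emptyset$), if $T'$ consists only of $(1,0)$-transitions then any $v_i\in T'$ is undominated in $C'$, giving $q_i'\geq 2+n_g$ while $q_i=1+n_g$ (with $n_g$ counting $v_i$'s gray neighbors), so $q_i'-q_i\geq 1$ against the permitted $\lambda_1/\lambda_2<1$; otherwise $T'$ contains a $(2,\cdot)$-transition at some $v_k$, and Theorem~\ref{thm2} supplies a white $v_j\in N_k$ whose only black neighbor in $C$ is $v_k$; since no new blacks appear, $v_j$ is undominated in $C'$, so $q_j'\geq 2+n_g$ (where each of $v_j$'s $n_g\leq 2$ gray neighbors still contributes at least $1$) while $q_j=n_g$, contradicting the required $q_j'\leq q_j$ for $v_j\notin D$.

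Case C (both $T$ and $T'$ nonempty) follows the same pattern. If $T'$ consists only of $(1,0)$-transitions, picking $v_i\in T$ again shows $q_i'-q_i\geq 0$ against the required $\leq -1$; otherwise some $v_k\in T'$ has $c_k=2$, and the M-RDF witness $v_j$ either stays white (yielding $q_j'\geq 2+n_g>n_g=q_j$ as in Case B) or itself lies in $T$ (yielding $q_j'\geq 1+n_g>n_g-1=q_j-1$, violating the stricter bound for $v_j\in T$). The main obstacle is the bookkeeping in Case C, where $T$ and $T'$ can interact through the neighborhood of the witness $v_j$; the key enabler is that $T$ contains only $(0,1)$-transitions, so no new $2$'s appear and the lower bounds on $q'$ remain uniformly valid. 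Exhausting all cases yields the desired contradiction, so $C$ is Pareto-optimal.
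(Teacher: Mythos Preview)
Your argument is correct. Both you and the paper begin identically: rule out any transition to colour $2$ by comparing $u_i(C')=-4\lambda_1$ against the NE bounds $u_i(C)\ge -\lambda_1-2\lambda_2$ (using Corollary~\ref{cor1217-1} and Lemma~\ref{lem1}). From there the organisation differs. The paper proceeds colour by colour, proving four sequential claims---$c_i\neq 2\Rightarrow c_i'\neq 2$, then $c_i=1\Rightarrow c_i'=1$, then $c_i=0\Rightarrow c_i'=0$, then $c_i=2\Rightarrow c_i'=2$---each claim feeding into the next, until $C'=C$ is forced. You instead split on whether the upward set $T$ and the downward set $T'$ are empty, and in each case pick a single witness vertex (either a member of $T$, or the M\nobreakdash-RDF witness $v_j$ attached to a demoted black $v_k$) whose integer quantity $q$ cannot satisfy the required per-player inequality.

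What each approach buys: the paper's chain of claims is more modular and never appeals to integrality---each step is a direct utility comparison that produces a \emph{specific} contradicting vertex $v_k$ whose own inequality fails. Your route is a bit more computational but arguably more uniform: once you observe that no new black appears, the single invariant ``gray neighbours of the witness keep $m=1$, white neighbours keep $m=0$'' drives every case, and the integrality of $q_i,q_i'$ converts the strict bound $\lambda_1/\lambda_2\in(2/3,3/4)$ into the clean requirement $q_i'-q_i\le -1$ (or $\le 0$) that is then violated. The bookkeeping you flag in Case~C is indeed handled by the fact that $T$ contains only $(0,1)$ moves, so no $m$-value can drop; your lower bounds $q_j'\ge 2+n_g$ (resp.\ $1+n_g$) are therefore safe.
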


\begin{proof}
	Suppose $C=(c_1,\ldots,c_n)$ is an NE but is not a Pareto-optimal solution. Then, there is a strategy profile $C'=(c_1'\ldots,c_n')\neq C$ such that
	\begin{equation}\label{eq0710-14}\centering
		\begin{split}
		\mbox{$u_i(C')\geq u_i(C)$ for any $i\in \{1,...,n\}$, and}\\
		\mbox{ $\exists j\in\{1,...,n\}$ with $u_j(C')>u_j(C)$.}
		\end{split}
	\end{equation}
	Because $C$ is an NE, by Corollary \ref{cor1217-1} and Corollary \ref{cor3}, for any $i\in\{1,\dots,n\}$, one has
	\begin{align}
		& u_i(c_i=1,C_{-i})=-\lambda_1-\lambda_2\ \mbox{or}\ -\lambda_1-2\lambda_2,\ \mbox{and} \label{eq0710-12}\\
		& u_i(c_i=0,C_{-i})=0,\ -\lambda_2,\ \mbox{or}\ -2\lambda_2.\label{eq0710-13}
	\end{align}
	
	\begin{claim}\label{claim1}
		For any $v_i$ with $c_i\neq 2$, one has $c_i'\neq 2$
	\end{claim}
	If $c_i'=2$, then $u_i(C')=-4\lambda_1$ by Lemma \ref{lem0}. By \eqref{eq0710-12} and \eqref{eq0710-13}, $u_i(C)\geq-\lambda_1-2\lambda_2$. So, $u_i(C)>u_i(C')$, contradicting condition \eqref{eq0710-14}.
	
	\begin{claim}\label{claim2}
		For any $v_i$ with $c_i=1$, one has $c_i'=1$.
	\end{claim}
	
	By Claim 1, one has $c_i'=1$ or $0$. If the claim is not true, then $c_i'=0$. Consider the two possible values for $u_i(C)$ as shown in \eqref{eq0710-12}.
	
	If $u_i(C)=-\lambda_1-\lambda_2$, then by Corollary \ref{cor1217-1}, in strategy profile $C$, $v_i$ is the unique gray vertex in $\bar N_i$ and all the vertices in $N_i$ are white. By condition \eqref{eq0710-14}, $u_i(C')\geq -\lambda_1-\lambda_2$. Thus, one must have $N_i^2(C')\neq\emptyset$, otherwise $u_i(C')\leq-2\lambda_2m_i(C')=-2\lambda_2<-\lambda_1-\lambda_2$. Let $v_k$ be a vertex in $N^2_i(C')$. Then, $u_k(C')=-4\lambda_1$ by Lemma \ref{lem0}. Since $v_k$ is white in $C$, by \eqref{eq0710-13}, one has $u_k(C)\geq -2\lambda_2$. But, then, $u_k(C)>u_k(C')$, contradicting condition \eqref{eq0710-14}.
	
	If $u_i(C)=-\lambda_1-2\lambda_2$, then by Corollary \ref{cor3}, there are exactly two vertices $v_i$ and $v_l$ in $\bar{N}_i$ that are gray in $C$ and the other vertices in $N_i\cup N_l$ are all white in $C$. By condition \eqref{eq0710-14}, one has $u_i(C')\geq -\lambda_1-2\lambda_2$. Then, there must exist $v_k\in N^2_i(C')\cup N^2_l(C')$, otherwise $u_i(C')\leq-2\lambda_2m_i(C')-\lambda_2m_l(C')=-3\lambda_2<u_i(C)$. If $v_k=v_l$, then $v_k$ is gray in $C$, and $u_k(C)=-\lambda_1-2\lambda_2$. If $v_k\neq v_l$, then $v_k$ is white in $C$, and $u_k(C)\geq-\lambda_2(m_i(C)+m_l(C))=-2\lambda_2$ by \eqref{eq0710-13}. In any case, $u_k(C)>u_k(C')=-4\lambda_1$, a contradiction.
	
	\begin{claim}\label{claim3}
		For any $v_i$ with $c_i=0$, one has $c_i'=0$
	\end{claim}
	
	If the claim is not true, then by Claim 1, one has $c_i'=1$.
	By Corollary \ref{cor1217-1} $(\romannumeral2)$, one has  $u_i(C)=-|N_i^1(C)|\lambda_2$. For any gray vertex $v_k\in N^1_i(C)$, by Claim \ref{claim2}, $v_k$ is also gray in $C'$. Since $C$ is an NE, by Corollary \ref{cor1217-1} $(\romannumeral1)$, $N^2_k(C)=\emptyset$, and thus by Claim 1, $N^2_k(C')=\emptyset$. Hence $m_k(C')=1$, and thus $u_i(C')\leq -\lambda_1-|N_i^1(C)|\lambda_2$. Consequently,  $u_i(C')-u_i(C)\leq -\lambda_1<0$, a contradiction.

	\begin{claim}\label{claim4}
		For any $v_i$ with $c_i=2$, one has $c_i'=2$
	\end{claim}
	
	Suppose $c'_i\neq2$. Because $C$ is an M-RDF by Theorem \ref{thm2}, strategy profile $(c'_i\neq2,C_{-i})$ is not an RDF. So, there is a white vertex $v_l\in N^0_i(C)$ with $N^2_l(c'_i\neq2,C_{-i})=\emptyset$. By Claim \ref{claim2} and Claim \ref{claim3}, any vertex $v_k\in \bar{N}_l$ with $k\neq i$ has $c_k'=c_k$. So, $N^2_l(C')=\emptyset$, and thus $m_l(C')=1$, $u_l(C')=-2\lambda_2-\lambda_2\sum_{v_k\in {N_l}}(2-c_k)m_k(C')$ and $u_l(C)=-\lambda_2\sum_{v_k\in {N_l}}(2-c_k)m_k(C)$. Note that $m_k(C)\leq m_k(C')$ (if $m_k(C)=1$. Therefore, $\bar N^2_k(C)=\emptyset$, and by Claim \ref{claim2} and Claim \ref{claim3}, one has $\bar N^2_k(C')=\emptyset$, which implies that $m_k(C')=1$). Then, $u_l(C')-u_l(C)\leq-2\lambda_2<0$, a contradiction.
	
	Combining the above leads to a contradiction that $C'=C$.
\end{proof}

\section{Algorithm Design and Analysis}\label{sec4}
In this section, three algorithms will be presented: {\em Game-based Asynchronous Algorithm} (GAA), {\em Game-based Synchronous Algorithm} (GSA) and {\em Enhanced Game-based Synchronous Algorithm} (EGSA).
GAA is a direct simulation of the RDG. It converges to an NE in $O(n)$ rounds of interactions, and the decision of each player  depends only on local information in its neighborhood. However, GAA is a sequential algorithm. GSA improves the efficiency of GAA by a distributed realization of the game, and thus can be better implemented by multi-agent systems. EGSA further extends GSA with the concept of {\em private contract}, which is inspired by cooperative game. It will be shown  that EGSA can achieve a better solution than GSA.

\subsection{Game-based Asynchronous Algorithm}
The pseudo code of GAA is presented in Algorithm \ref{alg1}. It is a naive simulation of the RDG. By the definition of utility functions, each player can make his decision locally: suppose each player $v_j$ stores $SS_j=(c_j,m_j(C))$ with respect to the current solution $C$. Then, each player $v_i$ can decide on his best response $BR(v_i,C)$ based on $\{SS_j\}_{v_j\in \bar N_i}$. In the algorithm, parameter $T=\frac{(4\lambda_1+2\lambda_2)n}{\min\{3\lambda_1-2\lambda_2,-4\lambda_1+3\lambda_2\}}$. By Theorem \ref{thm0920-1}, an NE can be  reached in at most this number of iterations.

\begin{figure}[h]
	\begin{algorithm}[H]
		\caption{Game-based Asynchronous Algorithm (GAA)}
		\begin{algorithmic}[1]	\label{alg1}
			\renewcommand{\algorithmicrequire}{\textbf{Input:}}
			\renewcommand{\algorithmicensure}{\textbf{Output:}}
			\REQUIRE An initial strategy profile $C^{(0)}=(c_1^{(0)},\ldots,c_n^{(0)})$
			\ENSURE An NE $C'$
			\STATE $C\leftarrow C^{(0)}$.
			\FOR {$t=1,2,\dots,T$}
			\FOR {$i=1,2,\dots,n$}
			\STATE 	$c'_i\leftarrow BR(v_i,C)$ by accessing $SS_j$ for $v_j\in\bar N_i$
			\IF {$u_i(c'_i,C_{-i})>u_i(c_i,C_{-i})$}
			\STATE	$c_i\leftarrow c_i'$
			\ENDIF
			\ENDFOR
			\IF {$C=C^{(t-1)}$}
			\STATE Break and go to line $15$
			\ELSE
			\STATE $C^{(t)}\leftarrow C$
			\ENDIF
			\ENDFOR
			\STATE  Output $C'\leftarrow C$
		\end{algorithmic}
	\end{algorithm}
\end{figure}

Note that players in GAA make decisions in a certain order. So, GAA is a sequential algorithm. To realize the game in a distributed manner, a natural idea is to let the players make decisions simultaneously. However, such a method may induce chaos to prevent the algorithm from converging. Consider the example in Fig. \ref{fig6}. Suppose the current strategy profile is $C=(0,0,0,0)$. For any player $v_i$, since $u_i(c_i'=2,C_{-i})=-4\lambda_1> -\lambda_1-5\lambda_2=u_i(c_i''=1,C_{-i})>-6\lambda_2=u_i(C)$, the best response of $v_i$ is to change 0 to 2. If all players take their best responses simultaneously, then the next strategy profile is $C'=(2,2,2,2)$. Now, for any player $v_i$, since $u_i(c_i'=0,C_{-i})=0>-\lambda_1=u_i(c_i''=1,C_{-i})>-4\lambda_1=u_i(C)$, his best response is to change from 2 to 0. A simultaneous action makes the strategy profile back to $C=(0,0,0,0)$.



To avoid such a mess, we restrict simultaneous changes to be made by a set of {\em independent} players, which leads to the GSA to be discussed in the next subsection.

\subsection{Game-based Synchronous Algorithm}\label{sec4.2}
The pseudo code of GSA is presented in Algorithm \ref{alg2}. The parameter $T$ is also $\frac{(4\lambda_1+2\lambda_2)n}{\min\{3\lambda_1-2\lambda_2,-4\lambda_1+3\lambda_2\}}$, the rationale of which is supported by Lemma \ref{lem0731-2}.
For the current strategy profile $C$, denote by $mu_i(C)=u_i(BR(v_i,C),C_{-i})-u_i(c_i,C_{-i}))$ the {\em marginal utility} of player $v_i$. In each round of iteration, all players compute their marginal utilities simultaneously, but not all those players with positive marginal utilities take actions. A player $v_i$ decides to change his strategy only when
\begin{equation}\label{eq7}
	i=\arg\min\{j:v_j\in\bar N_{i,2}\ \mbox{and}\ mu_j(C)>0\}.
\end{equation}
That is, player $v_i$ has the priority to change his strategy only when he has the smallest ID among those players in $\bar N_{i,2}$ with positive marginal utility.

\begin{figure}[h]
	\begin{algorithm}[H]
		\caption{Game-based Synchronous Algorithm (GSA)}
		\begin{algorithmic}[1]	\label{alg2}
			\renewcommand{\algorithmicrequire}{\textbf{Input:}}
			\renewcommand{\algorithmicensure}{\textbf{Output:}}
			\REQUIRE An initial strategy profile $C^{(0)}=(c_1^{(0)},\ldots,c_n^{(0)})$
			\ENSURE An NE $C'$
			\STATE $C\leftarrow C^{(0)}$.
			\FOR {$t=1,2,\dots,T$}
			\FOR {every player $v_i$ (this is done simultaneously)}
			\STATE 	$c'_i\leftarrow BR(v_i,C)$ by accessing $SS_j$ for $v_j\in\bar N_i$
			\STATE $mu_i\leftarrow u_i(c'_i,C_{-i})-u_i(c_i,C_{-i})$
			\IF {$v_i$ satisfies equation (\ref{eq7})}
			\STATE	$c_i\leftarrow c_i'$
			\ENDIF
			\ENDFOR
			\IF {$C=C^{(t-1)}$}
			\STATE Break and go to line $15$
			\ELSE
			\STATE $C^{(t)}\leftarrow C$
			\ENDIF
			\ENDFOR
			\STATE  Output $C'\leftarrow C$
		\end{algorithmic}
	\end{algorithm}
\end{figure}

The following observation shows that those players who actually change their strategies form an independent set.

\begin{observation}\label{obs2}
	Suppose $\mathcal I$ is the set of players who have actually changed their strategies simultaneously in a round of the parfor loop of Algorithm \ref{alg2}. Then, $\mathcal I$ is an independent
	set in the following sense:
	$$\bar N_i\cap\bar N_j=\emptyset,\ \forall v_i,v_j\in\mathcal I.$$
\end{observation}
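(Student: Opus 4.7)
The plan is to prove Observation 2 by a short contradiction argument, leveraging only the selection rule in equation (7) and the elementary fact that overlapping closed $1$-neighborhoods force two vertices to lie within graph distance $2$ of each other.

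First, I would take two distinct players $v_i, v_j \in \mathcal{I}$ and assume for contradiction that $\bar N_i \cap \bar N_j \neq \emptyset$. Picking any $v_k \in \bar N_i \cap \bar N_j$, the triangle inequality gives $d(i,j) \leq d(i,k) + d(k,j) \leq 1+1 = 2$. Since $i \neq j$, this places $v_j$ in the open $2$-hop neighborhood $N_{i,2}$, hence in $\bar N_{i,2}$, and symmetrically places $v_i$ in $\bar N_{j,2}$.

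Next, I would invoke the definition of $\mathcal{I}$: both players actually changed strategies, so by line 5--7 of Algorithm \ref{alg2} both have strictly positive marginal utilities, $mu_i(C) > 0$ and $mu_j(C) > 0$. Applying the selection rule (\ref{eq7}) to $v_i$, the index $i$ must be the minimum over $\{\ell : v_\ell \in \bar N_{i,2},\, mu_\ell(C) > 0\}$; since $v_j$ belongs to this set, we obtain $i \leq j$, and hence $i < j$ because $i \neq j$. Applying (\ref{eq7}) symmetrically to $v_j$ yields $j < i$. These two strict inequalities are incompatible, giving the required contradiction.

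Since no routine calculation is involved and the argument rests entirely on the tie-breaking rule and the definition of the $2$-hop neighborhood, there is no serious obstacle to anticipate; the only subtlety worth flagging explicitly is the translation from ``overlapping closed $1$-neighborhoods'' to ``being in each other's closed $2$-hop neighborhoods,'' which is what makes the choice of $\bar N_{i,2}$ (rather than $\bar N_i$) in the selection rule exactly the right locality radius to enforce independence of $\mathcal{I}$.
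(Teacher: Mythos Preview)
Your proposal is correct and follows essentially the same approach as the paper: the paper's argument also assumes $\bar N_i\cap\bar N_j\neq\emptyset$, deduces $d(i,j)\leq 2$ so that $v_i\in\bar N_{j,2}$ and $v_j\in\bar N_{i,2}$, and then invokes rule~\eqref{eq7} to conclude that only the player with the smaller ID can belong to $\mathcal I$. Your write-up merely spells out the triangle-inequality step and the two symmetric applications of~\eqref{eq7} more explicitly.
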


In fact, if $\bar N_i\cap\bar N_j\neq\emptyset$, then $d(i,j)\leq 2$, and thus $v_i\in \bar N_{j,2}$ and $v_j\in \bar N_{i,2}$. By the rule in \eqref{eq7}, for $v_i$ and $v_j$, only the one with the smaller ID can belong to $\mathcal I$.

Intuitively, since every $v_i$ makes his decision depending only on $\bar N_i(C)$, if no other players in $\bar N_i$ change their strategies at the same time, then the decision made by $v_i$ keeps to be his best response for the altered strategy profile. So, allowing players in an independent set to change strategies simultaneously can effectively decouple mutual influences. A more detailed analysis is given in the following.

\begin{lemma}\label{lem7}
	Suppose $C$ is the current strategy profile. After one round of the parfor loop of
	Algorithm \ref{alg2}, the new strategy profile  is $C'=(C'_{\mathcal I},C_{-\mathcal I})$, where $\mathcal I$ is the set of players who have strategies changed in this round. Then
	$$\pi(C')-\pi(C)=\sum_{v_i\in\mathcal I}(u_i(c'_i,C_{-i})-u_i(c_i,C_{-i})).$$
\end{lemma}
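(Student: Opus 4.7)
The plan is to reduce the simultaneous update to a sequence of single-player updates, then apply the exact potential property (Lemma~\ref{lem4}) and telescope. The crucial ingredient is the strong independence of $\mathcal{I}$ guaranteed by Observation~\ref{obs2}: the closed neighborhoods $\bar N_i$ for $v_i \in \mathcal{I}$ are pairwise disjoint, which (since $G$ has no isolated vertices) is equivalent to saying $d(i,j)\ge 3$ for every distinct pair $v_i,v_j\in\mathcal{I}$.

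First I would enumerate $\mathcal{I}=\{v_{i_1},\dots,v_{i_k}\}$ in an arbitrary order and build intermediate profiles $C=C^{(0)},C^{(1)},\dots,C^{(k)}=C'$, where $C^{(j)}$ is obtained from $C^{(j-1)}$ by changing only the coordinate of $v_{i_j}$ from $c_{i_j}$ to $c'_{i_j}$. Since each consecutive pair $C^{(j-1)},C^{(j)}$ differs in a single coordinate, Lemma~\ref{lem4} gives
$$
\pi(C^{(j)})-\pi(C^{(j-1)})=u_{i_j}(C^{(j)})-u_{i_j}(C^{(j-1)}),
$$
and summing over $j$ telescopes to
$$
\pi(C')-\pi(C)=\sum_{j=1}^{k}\bigl(u_{i_j}(C^{(j)})-u_{i_j}(C^{(j-1)})\bigr).
$$

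Next I would argue that each summand equals the ``isolated'' marginal $u_{i_j}(c'_{i_j},C_{-i_j})-u_{i_j}(c_{i_j},C_{-i_j})$ computed against the original profile $C$. Inspecting the utility function, $u_i$ depends only on the strategies of vertices lying within distance two of $v_i$, since $g_i$ uses only $c_i$, while $q_i$ sums $(2-c_j)m_j(\cdot)$ over $v_j\in\bar N_i$, and each $m_j$ is determined by strategies in $\bar N_j\subseteq\bar N_{i,2}$. The key observation is then that for every $v_{i_\ell}\in\mathcal{I}$ with $\ell\neq j$ we have $d(i_j,i_\ell)\ge 3$, so $v_{i_\ell}\notin\bar N_{i_j,2}$. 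Consequently, the profiles $C^{(j-1)}$ and $(c_{i_j},C_{-i_j})$ agree on $\bar N_{i_j,2}$, and similarly $C^{(j)}$ agrees with $(c'_{i_j},C_{-i_j})$ there, giving
$$
u_{i_j}(C^{(j)})=u_{i_j}(c'_{i_j},C_{-i_j}),\qquad u_{i_j}(C^{(j-1)})=u_{i_j}(c_{i_j},C_{-i_j}).
$$
Substituting into the telescoped sum yields the claim.

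The only delicate point is this last step: one must be careful that altering previous players $v_{i_1},\dots,v_{i_{j-1}}$ did not leak into the computation of $u_{i_j}$ through the two-hop dependence inside $m_\cdot$. That is precisely why Observation~\ref{obs2} (distance $\ge 3$ between members of $\mathcal{I}$) is invoked, and I would state that sub-claim explicitly as the main technical lemma of the proof. Everything else is bookkeeping.
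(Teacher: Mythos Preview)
Your proof is correct, but it takes a different route from the paper. The paper proceeds by direct computation: it uses Observation~\ref{obs2} to partition $V$ as the disjoint union $\bar N_{i_1}\cup\cdots\cup\bar N_{i_t}\cup V_r$, writes out $\pi(C)$ and $\pi(C')$ by splitting both sums along this partition, and then compares term by term, observing that for $v_j\in\bar N_i$ with $v_i\in\mathcal I$ one has $m_j(C')=m_j(c'_i,C_{-i})$ (since $v_i$ is the only vertex of $\bar N_j$ that moved), while for $v_j\in V_r$ one has $m_j(C')=m_j(C)$. The claimed identity then drops out of the subtraction. Your approach instead chains single-player updates, invokes the exact potential property of Lemma~\ref{lem4} at each step, and telescopes; the independence condition enters only to identify each intermediate utility difference with the isolated marginal against $C$, via the two-hop locality of $u_i$. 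The paper's argument is more self-contained (it does not appeal to Lemma~\ref{lem4}) and makes the partition of $V$ explicit, whereas your argument is cleaner conceptually and would transfer verbatim to any exact potential game whose utilities have bounded-radius dependence, given the same independence of the update set. One small remark: the parenthetical ``since $G$ has no isolated vertices'' is unnecessary; $\bar N_i\cap\bar N_j=\emptyset$ is equivalent to $d(i,j)\ge 3$ unconditionally.
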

\begin{proof}
	Suppose $\mathcal I=\{v_{i_1},v_{i_2},\dots,v_{i_t}\}$. By Observation \ref{obs2}, the player set $V$ can be decomposed to a disjoint union of sets $V=\bar N_{i_1}\cup \bar N_{i_2}\cup\ldots\cup\bar N_{i_t}\cup V_r$, where $V_r=V\setminus\bigcup_{l=1}^t\bar N_{i_l}$. Then, the potential functions $\pi(C)$ and $\pi(C')$ can be rewritten as
	\begin{equation}
		\begin{split}
		\pi(C)=&-\lambda_1\sum_{v_i\in\mathcal I}c_i^2-\lambda_2\sum_{v_i\in\mathcal I}\sum_{v_j\in\bar N_i}(2-c_j)m_j(C)\\
		&-\lambda_1\sum_{v_i\notin\mathcal I}c_i^2-\lambda_2\sum_{v_j\in V_r}(2-c_j)m_j(C),\label{eq0731-1}
		\end{split}
	\end{equation}
	\begin{equation}
	\begin{split}
	\pi(C')=& -\lambda_1\sum_{v_i\in\mathcal I}c_i'^2-\lambda_2\sum_{v_i\in\mathcal I}\sum_{v_j\in\bar N_i}(2-c_j')m_j(C')\\
	&-\lambda_1\sum_{v_i\notin\mathcal I}c_i'^2-\lambda_2\sum_{v_j\in V_r}(2-c_j')m_j(C').\nonumber
	\end{split}
	\end{equation}
	Note that $c_i'=c_i$ for any $v_i\not\in\mathcal I$. For any $v_i\in I$ and any $v_j\in \bar N_i$, by the independence of $\mathcal I$, $v_i$ is the only vertex in $\bar N_j$ whose $c$-value is changed. Since the value of $m_j$ is determined by the $c$-values in $\bar N_j$, one has $m_j(C')=m_j(c_i',C_{-i})$.
	For any $v_j\in V_r$, one has $\bar N_j\cap\mathcal I=\emptyset$. So $m_j(C')=m_j(C)$.
	Thus, $\pi(C')$ can be written as
	\begin{align}
	&\pi(C')= -\lambda_1\sum_{v_i\in\mathcal I}c_i'^2-\lambda_1\sum_{v_i\notin\mathcal I}c_i^2-\lambda_2\sum_{v_j\in V_r}(2-c_j)m_j(C)\label{eq0731-2}\\
	&-\lambda_2\sum_{v_i\in\mathcal I}\left((2-c_i')m_i(c_i',C_{-i})+\sum_{v_j\in N_i}(2-c_j)m_j(c_i',C_{-i})\right).\nonumber
	\end{align}
	The lemma follows by substracting \eqref{eq0731-1} from \eqref{eq0731-2}.
\end{proof}

\begin{theorem}\label{lem0731-2}
	Starting from any initial strategy profile $C^{(0)}$, the number of rounds for GSA \ref{alg2} to reach an NE is at most $\frac{(4\lambda_1+2\lambda_2)n}{\min\{3\lambda_1-2\lambda_2,-4\lambda_1+3\lambda_2\}}=O(n)$.
\end{theorem}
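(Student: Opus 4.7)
The plan is to leverage the potential function identity from Lemma \ref{lem7} together with the lower-bound analysis already carried out for the asynchronous case in Theorem \ref{thm0920-1}. First, I would argue that as long as $C$ is not an NE, the parfor loop produces a nonempty set $\mathcal I$ of players who actually switch strategies. Indeed, if $C$ is not an NE, the set $S=\{v_j:mu_j(C)>0\}$ is nonempty; pick any $v_j\in S$ with the smallest ID inside $\bar N_{j,2}\cap S$, and observe that such a $v_j$ must exist because IDs are totally ordered and the collection $\bar N_{j,2}\cap S$ is finite. By rule \eqref{eq7} this $v_j$ enters $\mathcal I$, so $|\mathcal I|\ge 1$ in every non-terminal round.

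Next, I would invoke Lemma \ref{lem7} to write
\[
\pi(C')-\pi(C)=\sum_{v_i\in\mathcal I}\bigl(u_i(c'_i,C_{-i})-u_i(c_i,C_{-i})\bigr).
\]
Each summand is strictly positive since every player in $\mathcal I$ only changes when strictly better off. To obtain a uniform positive lower bound, I would reuse the case analysis in Table \ref{[Tab:02]} from the proof of Theorem \ref{thm0920-1}: for every transition $c_i\to c_i'$ that yields a positive $\Delta u_i$, the value of $\Delta u_i$ is at least $\min\{3\lambda_1-2\lambda_2,\,-4\lambda_1+3\lambda_2\}$, which is positive by the standing assumption $\tfrac{2}{3}\lambda_2<\lambda_1<\tfrac{3}{4}\lambda_2$. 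Hence
\[
\pi(C')-\pi(C)\ge |\mathcal I|\cdot\min\{3\lambda_1-2\lambda_2,\,-4\lambda_1+3\lambda_2\}\ge\min\{3\lambda_1-2\lambda_2,\,-4\lambda_1+3\lambda_2\}.
\]

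Finally, I would bound the range of $\pi$. Since $0\le c_j^2\le 4$ and $0\le (2-c_j)m_j(C)\le 2$, the potential function satisfies $-(4\lambda_1+2\lambda_2)n\le\pi(C)\le 0$ for every profile $C$. Combining this with the per-round increment lower bound, the number of rounds in which $\mathcal I\ne\emptyset$ cannot exceed $\frac{(4\lambda_1+2\lambda_2)n}{\min\{3\lambda_1-2\lambda_2,\,-4\lambda_1+3\lambda_2\}}$. Any round with $\mathcal I=\emptyset$ means no player has positive marginal utility, i.e., $C$ is already an NE, and the algorithm terminates via the check $C=C^{(t-1)}$.

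The only subtle step I anticipate is justifying that the bound in Table \ref{[Tab:02]} carries over verbatim to the synchronous setting. This is where the independence property in Observation \ref{obs2} is essential: because no two changing players lie in each other's closed neighborhoods, the local utility computation used to derive the Table-\ref{[Tab:02]} lower bound for a single switcher $v_i$ remains valid even though other, distant players switch in the same round—their switches do not affect the $c$-values in $\bar N_i$ and therefore do not affect $u_i(c'_i,C_{-i})-u_i(c_i,C_{-i})$. Once this decoupling is made explicit via Lemma \ref{lem7}, the $O(n)$ bound follows exactly as in the asynchronous case.
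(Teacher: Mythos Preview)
Your proposal is correct and follows essentially the same approach as the paper: invoke Lemma \ref{lem7} to express the potential increase in a synchronous round as a sum of individual marginal utilities, lower-bound each summand by the case analysis of Table \ref{[Tab:02]}, and then reuse the range bound on $\pi$ from Theorem \ref{thm0920-1}. The paper's proof is terser (it simply drops all but one summand and says ``the remaining proofs are similar to that of Theorem \ref{thm0920-1}''), while you spell out the nonemptiness of $\mathcal I$ and the decoupling rationale, but the logical skeleton is identical.
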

\begin{proof}
	Observe that player $v_i$ is willing to change his strategy only when $mu_i(C)>0$. By Lemma \ref{lem7}, $\pi(C')-\pi(C)=\sum_{v_i\in\mathcal I}(u_i(c'_i,C_{-i})-u_i(c_i,C_{-i}))\geq u_i(c'_i,C_{-i})-u_i(c_i,C_{-i})=mu_i(C)$. The remaining proofs are similar to that of Theorem \ref{thm0920-1}.
\end{proof}

\subsection{Enhanced Game-based Synchronous Algorithm}
Although GSA can obtain an NE in linear number of rounds of iterations, it is an S-RDF by Corollary \ref{cor1}, but the gap between an NE and a global optimal solution might be large. Consider the example in Fig. \ref{fig7}, where $C$ is an NE with $\gamma_{R}(G,C)=n-1$, and $C'$ is a G-RDF with $\gamma_{R}(G,C')=2$.

\begin{figure}[htb]
	\centering
	\subfloat[$C$, $\gamma_R(G,C)=n-1$]{\includegraphics[width=0.4\linewidth]{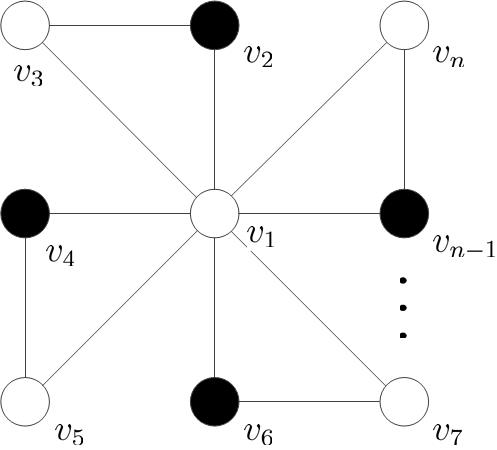}}\quad
	\subfloat[$C'$, $\gamma_R(G,C')=2$]{\includegraphics[width=0.4\linewidth]{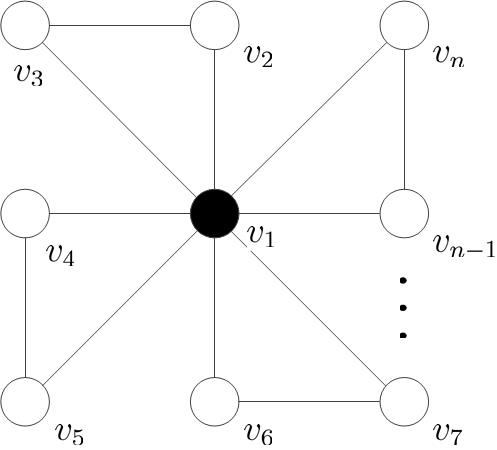}}
	\caption{An example showing that the gap between an NE and a G-RDF can be very large.}
	\label{fig7}
\end{figure}

\begin{algorithm}[thpb]
	\caption{Enhanced Game-based Synchronous Algorithm (EGSA)}
	\begin{algorithmic}[1]	\label{alg3}
		\renewcommand{\algorithmicrequire}{\textbf{Input:}}
		\renewcommand{\algorithmicensure}{\textbf{Output:}}
		\REQUIRE An initial strategy profile $C^{(0)}=(c_1^{(0)},\ldots,c_n^{(0)})$
		\ENSURE An NE $C'$
		\STATE $C\leftarrow C^{(0)}$.
		\FOR {$t=1,2,\dots,T$}
		\FOR {every player $v_i$ (this is done simultaneously)}
		\STATE 	$c'_i\leftarrow BR(v_i,C)$ by accessing $SS_j$ for $v_j\in\bar N_i$
		\STATE $mu_i\leftarrow u_i(c'_i,C_{-i})-u_i(c_i,C_{-i})$
		\IF {$v_i$ satisfies equation (\ref{eq7})}
		\STATE	$c_i\leftarrow c_i'$
		\ENDIF
		\ENDFOR
		\IF {$C=C^{(t-1)}$}
		\FOR {$i=1,2,\dots,n$}
		\IF {$c_i=0$ and $w_i(C)\geq3$}\label{line0907-1}
		\STATE $v_i$ proposes a private contract $\mathcal{A}$
		\STATE $C\leftarrow(C'_{\tau_i},C_{-\tau_i})$ (every player $v_j\in\tau_i$ agree with $\mathcal{A}$ by Lemma \ref{lem22})
		\STATE$C^{(t)}\leftarrow C$
		\STATE Break and go to line $21$
		\ENDIF
		\ENDFOR
		\ELSE
		\STATE $C^{(t)}\leftarrow C$
		\ENDIF
		\ENDFOR
		\STATE  Output $C'\leftarrow C$
	\end{algorithmic}
\end{algorithm}

There are two possible reasons for the above problems:
\begin{itemize}
	\item[(1)] The range of information that a player uses is small;
	\item[(2)] A player is assumed to be selfish, i.e., a player only maximizes his own utility regardless of the other players.
\end{itemize}

To avoid getting stuck in a bad NE such as the one in Fig. \ref{fig7} $(a)$, the players should be more cooperative, that is, there is a coalition that can make decisions simultaneously by cooperating. A large-scale coalition is intractable, a local coalition might be good enough in terms of both performance and computational complexity. For this reason, a concept of {\em private contract} is proposed as follows, which borrows the idea of contract from cooperative game theory.

\begin{definition}[Private contract]
	{\rm In the current strategy profile $C$, a private contract is proposed by a player $v_i$ which suggests that all players $v_j$ in a coalition $\tau_i\subseteq V$ (with $v_i\in\tau_i$) change their strategies from $c_j$ to $c_j'$. A private contract is {\em valid} if all players in $\tau_i$ agree with it, and the next strategy profile becomes $(C'_{\tau_i},C_{-\tau_i})$. }
\end{definition}

In the following, a private contract $\mathcal{A}$ is used, which is proposed by $v_i$, suggesting $c_i'=2$ and $c_j'=0$ for any $v_j\in\tau_i\setminus\{v_i\}$, where $\tau_i\setminus\{v_i\}=N_i^1(C)\cup\{v_j\in N_i^2(C): m_{i,j}(C)=1\}$, with
\begin{equation}\label{eq15}
	m_{i,j}(C)=\begin{cases}
		1,& \widetilde{N}_j(C)\subseteq N_i,\\
		0,& {\rm otherwise}
	\end{cases}
\end{equation}
and $\widetilde{N}_j(C)=\{v_k\in N_j^0(C): N_k^2(C)=\{v_j\}\}.$

Given a private contract proposed by $v_i$, assume that all players in coalition $\tau_i$ will agree with this contract if
\begin{equation}\label{eq0908-1}
	\Delta_{u}(C'_{\tau_i},C)=\sum_{v_j\in\tau_i}(u_j(C'_{\tau_i},C_{-\tau_i})-u_j(C))>0.
\end{equation}
The ideal is that  strictly positive utility gain $\Delta_{u}(C'_{\tau_i},C)$ can be reasonably distributed among the coalition so that each player in $\tau_i$ can get a strictly positive utility gain.  Therefore, such an assumption is reasonable.

The enhanced algorithm EGSA is described in Algorithm \ref{alg3}, where $w_i(C)$ in line \ref{line0907-1} of the algorithm is defined as follow:
$$w_i(C)=\sum_{v_j\in N_i^1(C)}m_j(C)+2\sum_{v_j\in N_i^2(C)}m_{i,j}(C).$$
It will be proved that if the condition in line \ref{line0907-1} is satisfied, then the contract proposed by $v_i$ can always be agreed. The algorithm EGSA first produces an NE $C$, which can be recognized by the criterion $C=C^{(t-1)}$. If a player $v_i$ proposes a private contract $\mathcal{A}$ that is agreed by the coalition, then $C$ is changed to $C'=(C'_{\tau_i},C_{-\tau_i})$. The algorithm continues to produce an NE from $C'$. This process is repeated until no player proposes new private contract (i.e. no player satisfies the condition in line \ref{line0907-1} of EGSA). At the termination, the output is an NE.

The following lemma describes a property that assists in proving a condition for contract $\mathcal{A}$ to be valid as well as in analyzing the time complexity of EGSA.

\begin{lemma}\label{lem8}
	In a strategy profile $C$, for any player $v_i$ with $c_i=0$ and for any player $v_j\in N_i^2(C)$, if $m_{i,j}(C)=1$, then $C'=(c_i'=2,c_j'=0,C_{-\{i,j\}})$ will not create any new white free vertex.
\end{lemma}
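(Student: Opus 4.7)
The plan is to localize the analysis: since only $v_i$ and $v_j$ change strategy between $C$ and $C'$, the only vertices whose strong-domination status can possibly change are those in $\bar N_i\cup\bar N_j$. I will go through these vertices and, using the hypothesis $m_{i,j}(C)=1$, argue that every one of them that is white in $C'$ still has a black neighbor (or is itself black) in $C'$.

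First, $v_i$ is black in $C'$, so it is strongly dominated by itself and contributes no new free vertex. Next, $v_j$ is white in $C'$; since $v_j\in N_i^2(C)\subseteq N_i$, the (now black) vertex $v_i$ lies in $N_j$, so $v_j$ is strongly dominated in $C'$. For any other vertex $v_k\in (\bar N_i\cup\bar N_j)\setminus\{v_i,v_j\}$ with $c_k'=0$, I consider two subcases. If $v_k\in N_i$, then $v_k$ has the black neighbor $v_i$ in $C'$, hence is strongly dominated. If $v_k\in N_j\setminus N_i$, then $v_k$ was white in $C$ with $v_j$ among its black neighbors; here the hypothesis $m_{i,j}(C)=1$ is the crucial point.

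The hard (only) step is this last subcase, where I use the definition of $m_{i,j}(C)$. Suppose for contradiction that $v_k$ is white and free in $C'$. Since $v_i\notin N_k$ and the only other change is at $v_j$, every black neighbor of $v_k$ in $C$ must be $v_j$, i.e.\ $N_k^2(C)=\{v_j\}$. Together with $c_k=0$ and $v_k\in N_j$, this means $v_k\in\widetilde N_j(C)$. But $m_{i,j}(C)=1$ forces $\widetilde N_j(C)\subseteq N_i$, contradicting $v_k\notin N_i$. Hence no such $v_k$ exists, and the transition from $C$ to $C'$ creates no new white free vertex.
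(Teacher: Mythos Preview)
Your proof is correct and follows essentially the same approach as the paper's: the only vertices whose strong-domination status can change lie in $\bar N_i\cup\bar N_j$, and any potential new white free vertex must lie in $\widetilde N_j(C)$, which by $m_{i,j}(C)=1$ is contained in $N_i$ and hence strongly dominated by the now-black $v_i$. The paper's version compresses this into three sentences, while you spell out the case analysis explicitly; the content is the same.
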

\begin{proof}
	If $v_j$ changes his strategy from $2$ to $0$, then all new white free vertices, if any, must belong to $\widetilde{N}_j(C)$. Because $m_{i,j}(C)=1$, one has $\widetilde{N}_j(C)\subseteq N_i$. Since $v_i$ changes his strategy from $0$ to $2$, it can strongly dominate all white vertices in $N_i\supseteq \widetilde{N}_j(C)$.
\end{proof}

The next lemma gives a sufficient condition for a private contract to be agreed by the intended coalition.

\begin{lemma}\label{lem22}
	For any NE C, the contract $\mathcal{A}$ proposed by player $v_i$ with $c_i=0$ and $w_i(C)\geq3$ is valid.
\end{lemma}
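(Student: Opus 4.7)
The plan is to compute $\Delta_u(C'_{\tau_i}, C) = \sum_{v_j \in \tau_i}(u_j(C'_{\tau_i},C_{-\tau_i})-u_j(C))$ explicitly and verify its strict positivity. Set $a = |N_i^1(C)|$ and $b = |\{v_k \in N_i^2(C): m_{i,k}(C)=1\}|$, so $\tau_i = \{v_i\} \cup N_i^1(C) \cup \{v_k \in N_i^2(C): m_{i,k}(C)=1\}$ and $w_i(C) = a + 2b \geq 3$. The calculation decomposes into three steps: (i) simplify the utilities in the NE $C$; (ii) simplify the utilities in $C'_{\tau_i}$ using (a multi-swap extension of) Lemma~\ref{lem8}; and (iii) finish by case analysis on $a$ and $b$.

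For step (i), invoke Corollary~\ref{cor1217-1}: every gray vertex has $m=1$ and no black neighbor, every white vertex is strongly dominated (so $m=0$), and every black vertex has $m=0$. Substituting into the utility formulas gives $u_i(C) = -a\lambda_2$ (only the $a$ gray neighbors of the white $v_i$ contribute to $q_i$), $u_j(C) = -\lambda_1 - \lambda_2(1+h_j)$ for each gray $v_j \in \tau_i$ with $h_j = |N_j^1(C)|$, and $u_k(C) = -4\lambda_1$ for each special black $v_k \in \tau_i$ by Lemma~\ref{lem0}.

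For step (ii), Lemma~\ref{lem0} gives $u_i(C') = -4\lambda_1$. Every $v_j \in \tau_i \setminus \{v_i\}$ satisfies $v_j \in N_i$, so $v_i$ is a new black neighbor in $C'$ and $m_j(C')=0$. Applying Lemma~\ref{lem8} in sequence to each special-black swap---each uniquely-dominated white of $v_k$ lies in $\widetilde{N}_k(C) \subseteq N_i$ by $m_{i,k}(C)=1$ and therefore remains strongly dominated by the now-black $v_i$---shows that no new white free vertex is created. Consequently the only contributions to $q_j(C')$ come from gray vertices outside $\tau_i$, yielding $u_j(C') = -\lambda_2(h_j - \alpha_j)$ for gray $v_j \in \tau_i$ (with $\alpha_j$ counting the gray neighbors of $v_j$ lying in $N_i$) and $u_k(C') = 0$ for special black $v_k \in \tau_i$. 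Summing over $\tau_i$ produces
\begin{equation*}
\Delta_u = (a + 4b - 4)\lambda_1 + (2a + A)\lambda_2,
\end{equation*}
where $A = \sum_{\text{gray } v_j \in \tau_i} \alpha_j \geq 0$. For step (iii), the NE hypothesis applied to $v_i$ (the white player cannot profitably switch unilaterally to black) yields $-a\lambda_2 \geq -4\lambda_1$, hence $a \leq 4\lambda_1/\lambda_2 < 3$ by $\lambda_1 < \tfrac{3}{4}\lambda_2$, so $a \in \{0,1,2\}$. Combining with $a + 2b \geq 3$ gives three subcases: $a=0, b\geq 2$ yields $\Delta_u \geq 4\lambda_1$; $a=1, b\geq 1$ yields $\Delta_u \geq \lambda_1 + 2\lambda_2$; $a=2, b\geq 1$ yields $\Delta_u \geq 2\lambda_1 + 4\lambda_2$. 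All three are strictly positive, hence $\Delta_u > 0$ and, by the standing assumption on coalition behavior, the contract is agreed.

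The main technical hurdle is the no-new-free-white claim in step (ii), since Lemma~\ref{lem8} only addresses a single swap and must be lifted to the simultaneous multi-swap performed by $\mathcal{A}$. The uniquely-dominated case is immediate via $m_{i,k}(C)=1$ together with $v_i$'s new blackness. Whites with several black dominators simultaneously swapped to zero need extra care: one either rules out the pathological configuration through the NE conditions on the special-black members (each of which, by a standard deviation-to-$0$ analysis, must uniquely dominate at least one white vertex, and that vertex must lie in $N_i$), or absorbs the residual $q$-loss into the substantial $4b\lambda_1$ gain already captured in the formula above.
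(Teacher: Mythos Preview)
Your approach is the same as the paper's: decompose $\Delta_u$ over $\tau_i$, use Corollary~\ref{cor1217-1} to simplify the utilities in the NE $C$, and invoke the ``no new free white vertex'' assertion to simplify the utilities in $C'$. You correctly flag that this last step is the sticking point, since Lemma~\ref{lem8} covers only a \emph{single} $2\to 0$ swap. Unfortunately, neither of your two proposed patches closes the gap: the pathological multi-dominator configuration is \emph{not} excluded by the NE conditions on the special-black members, and the residual $q$-loss can swamp the $4b\lambda_1$ slack.

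A concrete witness: take vertices $v_i,k_1,k_2,\ell,w_1,w_2$ with edges $v_ik_1,\,v_ik_2,\,v_iw_1,\,v_iw_2,\,k_1\ell,\,k_2\ell,\,k_1w_1,\,k_2w_2$, and set $c_{k_1}=c_{k_2}=2$, all others $0$. One checks directly that this $C$ is an NE; moreover $\widetilde{N}_{k_1}(C)=\{w_1\}\subseteq N_i$ and $\widetilde{N}_{k_2}(C)=\{w_2\}\subseteq N_i$ (the jointly dominated vertex $\ell$ belongs to neither $\widetilde{N}_{k_j}$), so $m_{i,k_1}(C)=m_{i,k_2}(C)=1$, giving $a=0$, $b=2$, $w_i(C)=4\geq 3$. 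After executing $\mathcal{A}$, the vertex $\ell$---whose only black neighbors were $k_1,k_2$---becomes a free white vertex, so $u_{k_1}(C')=u_{k_2}(C')=-2\lambda_2$ rather than $0$, and $\Delta_u=-4\lambda_1+2(4\lambda_1-2\lambda_2)=4(\lambda_1-\lambda_2)<0$. Thus the contract is not valid and the lemma, as stated, fails on this instance. The paper's own proof shares this gap (it too invokes Lemma~\ref{lem8} for the multi-swap without further argument and asserts $u_k(C')=0$), so the issue is not with your write-up but with the statement itself; a fix would require either restricting the coalition to at most one special-black neighbor or strengthening the containment condition defining $m_{i,j}$.
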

\begin{proof}
	To prove the lemma, it is needed to prove inequality \eqref{eq0908-1}.
	Let $x_1=|N_i^1(C)|$ and $x_2=|\{v_j\in N_i^2(C)\colon m_{i,j}(C)=1\}|$.
	
	Because $C$ is an NE, by Corollary \ref{cor1217-1}, every gray vertex $v_j$ is not adjacent to any black vertex, and thus $m_j(C)=1$; and every white vertex $v_k$ is adjacent to at least one black vertex, and thus $m_k(C)=0$. So, $u_i(C)=-\lambda_2x_1$. By Lemma \ref{lem0}, $u_i(C'_{\tau_i},C_{-\tau_i})=-4\lambda_1$. Hence, $u_i(C'_{\tau_i},C_{-\tau_i})-u_i(C)=\lambda_2x_1-4\lambda_1$ and $w_i(C)=x_1+2x_2$.
	
	For any vertex $v_j\in N_i^1(C)$, with similar reasons as the above, one obtains $u_j(C)=-\lambda_1-\lambda_2-\lambda_2|N_j^1(C)|$. By Lemma \ref{lem8}, there is no white free vertex produced in $C'=(C'_{\tau_i},C_{-\tau_i})$, then $u_j(C')=-\lambda_2|N_j^1(C')|$. Note that $N_j^1(C')\subseteq N_j^1(C)$. So, $u_j(C')\geq-\lambda_2|N_j^1(C)|$, and thus $u_j(C')-u_j(C)\geq\lambda_1+\lambda_2>0$.
	
	For any vertex $v_k\in N_i^2(C)$ with $m_{i,j}(C)=1$, by Corollary \ref{cor1217-1}, one has $N_k^1(C)=\emptyset$. Note that gray vertices in $C$ keeps to be gray in $C'$. Hence, $N_k^1(C')=\emptyset$. So, $u_k(C')=0$ by Lemma \ref{lem8} and $u_k(C)=-4\lambda_1$ by Lemma \ref{lem0}. Thus, $u_k(C')-u_k(C)=4\lambda_1>0$
	
	Consequently,  $\sum_{v_j\in\tau_i}(u_j(C'_{\tau_i},C_{-\tau_i})-u_j(C))\geq-4\lambda_1+\lambda_2x_1+(\lambda_1+\lambda_2)x_1
	+4\lambda_1x_2>2\lambda_1(x_1+2x_2)+\lambda_2x_1-4\lambda_1$. Because $w_i(C)\geq3$ means $x_1+2x_2\geq 3$, $\sum_{v_j\in\tau_i}(u_j(C'_{\tau_i},C_{-\tau_i})-u_j(C))\geq6\lambda_1+\lambda_2x_1-4\lambda_1=2\lambda_1
	+\lambda_2x_1>0$. The contract is valid.
\end{proof}

Next, the objective is to determine the convergence time of EGSA. For this purpose, two lemmas will be established first. The idea is as follows. When the algorithm finds an NE $C$, if a private contract proposed by $v_i$ is agreed, then the strategy profile becomes $C'=(C'_{\tau_i},C_{-\tau_i})$. Lemma \ref{lem15} shows that $\gamma_R(G,C')<\gamma_R(G,C)$. Because $\gamma_R$ can only take integer values, the $\gamma_{R}$-value is {\em strictly} decreased by at least 1. Note that except for the private contract part, EGSA is the same as GSA. Hence, by Theorem \ref{lem0731-2}, starting from $C'$, the algorithm reaches an NE $C''$ in $O(n)$ rounds. Lemma \ref{lem14} implies that $\gamma_R(G,C'')\leq \gamma_R(G,C')$. So, in the whole process, the $\gamma_R$-value is monotone non-decreasing, and in at most $O(n)$ rounds, the $\gamma_R$-value is decreased by at least 1. Since the $\gamma_R$-value is upper bounded by $2n$, the algorithm terminates in $O(n^2)$ rounds. Next, proofs are given to verify these results.

\begin{lemma}\label{lem15}
	Let $C$ be an NE. If a player $v_i$ proposes a valid private contract, then the resulting strategy profile $C'=(C'_{\tau_i},C_{-\tau_i})$ is an RDF and $\gamma_R(G,C')<\gamma_R(G,C)$.
\end{lemma}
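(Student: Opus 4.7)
The plan is to separate the statement into the two claims (i) $\gamma_R(G,C')<\gamma_R(G,C)$ and (ii) $C'$ is an RDF, dealing with the weight inequality first since it is essentially a bookkeeping computation, and then verifying domination case-by-case on the types of vertices in $\tau_i$.

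For the weight inequality, I would set $x_1=|N_i^1(C)|$ and $x_2=|\{v_j\in N_i^2(C)\colon m_{i,j}(C)=1\}|$. Because $C$ is an NE, Corollary \ref{cor1217-1}$(\romannumeral3)$ gives $m_j(C)=1$ for every gray vertex, so $w_i(C)=x_1+2x_2$. The contract flips $v_i$ from $0$ to $2$ (contribution $+2$), every vertex of $N_i^1(C)$ from $1$ to $0$ (contribution $-x_1$), and every $v_j\in N_i^2(C)$ with $m_{i,j}(C)=1$ from $2$ to $0$ (contribution $-2x_2$). Hence
\[
\gamma_R(G,C')-\gamma_R(G,C)=2-x_1-2x_2=2-w_i(C)\le -1,
\]
since $w_i(C)\ge 3$. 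This gives the strict inequality.

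For the RDF property, I would check that every vertex is dominated under $C'$. The black vertices not in $\tau_i$ and the gray vertices outside $N_i^1(C)$ keep their strategies, so remain dominated. The player $v_i$ itself is now black. Every $v_j\in\tau_i\setminus\{v_i\}$ is now white but lies in $N_i$ and $c'_i=2$, so $v_j$ is strongly dominated by $v_i$. The remaining case, which is the delicate one, is an old white vertex $v_k\notin\tau_i$ with $c_k=c'_k=0$. In $C$, $v_k$ had a strong dominator $v_l\in N_k^2(C)$; if $v_l\notin\tau_i$, then $v_l$ remains black in $C'$ and we are done. Otherwise $v_l\in N_i^2(C)$ with $m_{i,l}(C)=1$, so $\widetilde N_l(C)\subseteq N_i$. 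If $v_k\in\widetilde N_l(C)$, then $v_k\in N_i$ and $v_i$ strongly dominates $v_k$ in $C'$. If $v_k\notin\widetilde N_l(C)$, then by definition of $\widetilde N_l$ the vertex $v_k$ has another black neighbour $v_{l'}$ in $C$, which (appealing to Lemma \ref{lem8} iteratively, since each single removal $v_l\to 0$ together with $v_i\to 2$ introduces no new white free vertex) still serves as a strong dominator in $C'$.

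The main obstacle I anticipate is the last step: Lemma \ref{lem8} is stated for one $v_j$ at a time, whereas the contract may flip several black neighbours simultaneously. I would therefore formalise an inductive multi-flip version of Lemma \ref{lem8}, processing the members of $\{v_j\in N_i^2(C)\colon m_{i,j}(C)=1\}$ one after another: at each step a white $v_k$ either (a) is uniquely dominated by the current $v_l$ being removed, in which case $v_k\in\widetilde N_l(C)\subseteq N_i$ and $v_i$ covers it, or (b) retains another black neighbour, which itself remains black throughout the process by the same case analysis. Tracking this invariant carefully through all simultaneous removals, together with the trivial domination provided by the newly black $v_i$, is what turns the single-flip Lemma \ref{lem8} into a full proof that no white free vertex is created in $C'$.
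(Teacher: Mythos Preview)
Your weight-inequality computation and the overall decomposition (handle gray members of $\tau_i$ via Observation~\ref{obs1}, handle black members via Lemma~\ref{lem8}) match the paper's proof exactly. You are also right that the simultaneous-flip issue is the crux; the paper's own proof simply invokes Lemma~\ref{lem8} for ``a vertex $v_j\in\tau_i$ with $c_j=2$'' and does not discuss what happens when several such $v_j$ are flipped at once.

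However, your inductive patch does not close that gap. Consider a white vertex $v_k\notin N_i$ whose set of black neighbours in $C$ is exactly $\{v_{j_1},\dots,v_{j_r}\}\subseteq\tau_i$ with $r\ge 2$. Because $v_k$ has at least two black neighbours in $C$, it lies in \emph{none} of the sets $\widetilde N_{j_s}(C)$, so the hypotheses $\widetilde N_{j_s}(C)\subseteq N_i$ impose no constraint on $v_k$ at all. In your induction, when the last $v_{j_r}$ is removed, case~(a) fails because membership in $\widetilde N_{j_r}(C)$ is computed in the \emph{original} profile $C$ (where $v_k$ still had $r\ge 2$ black neighbours), and case~(b) fails because every ``other black neighbour'' of $v_k$ has already been flipped in an earlier step; the clause ``remains black throughout the process by the same case analysis'' is circular precisely here. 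A concrete configuration realising this is: $v_i$ white adjacent to black $v_{j_1},v_{j_2}$; a white $v_k\notin N_i$ adjacent only to $v_{j_1},v_{j_2}$; and one extra white vertex in $N_i\cap N_{j_s}$ for each $s$ to make $C$ an NE---then $m_{i,j_1}=m_{i,j_2}=1$, $w_i(C)=4\ge 3$, yet after the contract $v_k$ is a free white vertex. So additional argument (or a sharpening of the condition $m_{i,j}(C)=1$) is genuinely needed; neither your sketch nor the paper's one-line appeal to Lemma~\ref{lem8} covers this situation.
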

\begin{proof}
	Since $C$ is an NE, by Theorem \ref{thm1}, it is an RDF, and thus there is no free white vertex in $C$. Note that any vertex $v_j\in \tau_i$ with $c_j=1$ is dominated by $v_i$ in $C'$ (because $c_i'=2$), and changing $c_j=1$ to $c'_j=0$ does not affect a white vertex to be strongly dominated or not (by Observation \ref{obs1}). Furthermore, by Lemma \ref{lem8}, changing the strategy of a vertex $v_j\in\tau_i$ with $c_j=2$ will not create new free white vertex. Hence, there is no free white vertex in $C'$ either, and thus $C'$ is also an RDF.
	
	Note that a player $v_i$ will propose a private contract only when $w_i(C)\geq3$. Let $x_1$ and $x_2$ be the numbers of gray vertices and black vertices in $\tau_i$, respectively. Because $w_i(C)\geq3$ implies $x_1+2x_2\geq3$, one has $\gamma_R(G,C')=\gamma_R(G,C)+2-x_1-2x_2\leq\gamma_R(G,C)-1$.
\end{proof}

\begin{lemma}\label{lem14}
	Starting from any RDF $C$, let $C'$ be the first NE reached by the algorithm after $C$. Then, $\gamma_R(G,C')\leq\gamma_R(G,C)$.
\end{lemma}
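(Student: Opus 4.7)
The plan is to combine the monotonicity of the exact potential $\pi$ along the algorithm's trajectory with the structural constraints imposed on $C$ (an RDF) and $C'$ (an NE, hence an M-RDF by Theorem~\ref{thm2}). First, since the pre--private-contract portion of EGSA coincides with GSA, Lemma~\ref{lem4} together with Lemma~\ref{lem7} yields $\pi(C^{(t)})\ge\pi(C^{(t-1)})$ for every round $t$, hence $\pi(C')\ge\pi(C)$.

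Next I would rewrite $\pi$ to expose its dependence on $\gamma_R(G,\cdot)$. Let $b(D)$ and $g(D)$ denote the number of black and gray vertices in a profile $D$ and set $h(D):=\sum_j(2-c_j)m_j(D)$. The identity $\sum_jc_j^2=4b(D)+g(D)=2\gamma_R(G,D)-g(D)$ gives
\[
\pi(D)=-2\lambda_1\gamma_R(G,D)+\lambda_1 g(D)-\lambda_2 h(D).
\]
Because $C$ is an RDF, every white and every black vertex has $m_j(C)=0$, so $h(C)$ is the number of gray vertices with $m_j(C)=1$; in particular $h(C)\le g(C)$. Because $C'$ is an NE, Corollary~\ref{cor1217-1} forces $m_j(C')=1$ for every gray vertex, so $h(C')=g(C')$. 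Substituting these into $\pi(C')\ge\pi(C)$ and rearranging yields
\[
2\lambda_1\bigl[\gamma_R(G,C)-\gamma_R(G,C')\bigr]\;\ge\;\lambda_1\bigl[g(C)-g(C')\bigr]+\lambda_2\bigl[g(C')-h(C)\bigr].
\]

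Whenever $g(C')\ge h(C)$, the right-hand side can be bounded below using $h(C)\le g(C)$ to show it is non-negative, which gives the lemma at once. The main obstacle is the complementary case $h(C)>g(C')$, in which many ``bad'' grays of $C$ (those with $m_j(C)=1$) must have vanished along the trajectory. I would close this case by an amortized, round-by-round analysis: the only way a bad gray can leave the gray set is via a $1\to 2$ upgrade, and the best-response threshold in Table~\ref{[Tab:02]} forces the upgrading vertex $v_i$ to have at least two further bad grays in $N_i$; after the upgrade those neighbours become strongly dominated by the new black $v_i$ and hence best-respond by $1\to 0$ drops in subsequent rounds. Analogous local compensations handle the other $\gamma_R$-increasing transitions $0\to 1$ and $0\to 2$, where the triggering vertex must have enough free/bad-gray neighbours to force later drops. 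Observation~\ref{obs2} guarantees that such compensating drops can be attributed to their triggering upgrade without double-counting, and summing the local accounting yields $\gamma_R(G,C')\le\gamma_R(G,C)$. The delicate point is that the parameter window $\tfrac{2}{3}\lambda_2<\lambda_1<\tfrac{3}{4}\lambda_2$ is precisely what makes the thresholds in Table~\ref{[Tab:02]} large enough for this amortization to close, and that the ``first NE'' hypothesis ensures all promised compensating drops occur before the algorithm halts.
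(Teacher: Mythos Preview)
Your potential-based opening is correct and is a genuinely different angle from the paper: the identity $\pi(D)=-2\lambda_1\gamma_R(G,D)+\lambda_1 g(D)-\lambda_2 h(D)$ together with $h(C)\le g(C)$ and $h(C')=g(C')$ does yield $\gamma_R(G,C)\ge\gamma_R(G,C')$ in the case $g(C')\ge h(C)$ (write the right-hand side as $\lambda_1 g(C)+(\lambda_2-\lambda_1)g(C')-\lambda_2 h(C)$ and bound both $g$-terms below by $h(C)$). The paper never exploits $\pi$ here; it proves the whole lemma by counting transitions $x_{ij}$ (the number of $i\to j$ moves along the trajectory) and establishing $x_{01}+x_{12}+2x_{02}\le x_{10}+x_{21}+2x_{20}$ via a local analysis of what each move type can do to two auxiliary quantities, the number of free white vertices and the number of strongly dominated gray vertices (Lemmas~\ref{lem6}--\ref{lem13} in the appendix), combined with the boundary facts that $C$ and $C'$ have no free whites and that all grays of $C'$ are free.

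The gap is your complementary case $g(C')<h(C)$, and your sketch there contains two concrete errors that prevent the amortization from closing. First, ``the only way a bad gray can leave the gray set is via a $1\to 2$ upgrade'' is false: a free gray $v_j$ can first have a neighbour turn black, becoming a strongly dominated gray, after which Lemma~\ref{lem5} allows $v_j$ to play $1\to 0$; these two-step disappearances must be tracked. Second, the claim that a $1\to 2$ player must have ``at least two further bad grays in $N_i$'' is wrong: the relevant threshold is $\sum_{v_j\in\bar N_i}(2-c_j)m_j\ge 3$, which can be met by a single free \emph{white} neighbour (weight $2$) together with $v_i$ itself. Free whites are absent in the initial RDF $C$, but they can be created along the trajectory (Lemma~\ref{lem6}, situation $(1)$, permits a $2\to 0$ move to create one), so the amortization cannot ignore them. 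Once you try to make this case rigorous you are pushed into exactly the move-by-move bookkeeping the paper carries out, and the potential shortcut no longer helps; as written, the hard case is not proved.
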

\begin{proof}
	Let $x_{ij}$ be the number of players changing their strategies from $i$ to $j$ during the iterations from $C$ to $C'$. It will be proved that
	\begin{equation}\label{eq0911-1}
		x_{01}+x_{12}+2x_{02}-x_{10}-x_{21}-2x_{20}\leq0,
	\end{equation}
	which implies the monotonicity of $\gamma_R$. Details are provided in the appendix.
\end{proof}

Call the output of EGSA as {\em enhanced Nash equilibrium} (ENE). As commented before, it is indeed an NE. Furthermore, as Lemma \ref{lem15} and Lemma \ref{lem14} imply, the size of an ENE is smaller than the size of the NE output by GSA, as long as some private contract is proposed. The time complexity of EGSA is given below, which follows from Lemma \ref{lem15} and Lemma \ref{lem14} and the argument before their proofs.

\begin{theorem}
	Starting from any initial state $C^{(0)}$, the number of rounds for EGSA to converge to an ENE is $O(n^2)$.
\end{theorem}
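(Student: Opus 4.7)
The plan is to decompose the execution of EGSA into a sequence of alternating phases and bound them separately. Observe that the outer \texttt{for} loop of Algorithm \ref{alg3} behaves like GSA until the termination condition $C=C^{(t-1)}$ is detected, at which point the algorithm checks whether some player $v_i$ with $c_i=0$ satisfies $w_i(C)\geq 3$. Accordingly, I would partition the execution into \emph{epochs}, where each epoch consists of a GSA-like segment that reaches an NE, followed (possibly) by one application of a private contract.

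First I would argue that each GSA-like segment takes $O(n)$ rounds. Between two consecutive NEs encountered by the algorithm, the algorithm executes exactly the parfor body of lines 3--9, which is identical to GSA. By Theorem \ref{lem0731-2}, starting from any RDF (and a post-contract configuration is an RDF by Lemma \ref{lem15}), the GSA dynamics reach a new NE in at most $O(n)$ rounds. Second, I would show that each epoch strictly decreases $\gamma_R$. If, upon reaching an NE $C$, the condition in line \ref{line0907-1} is satisfied for some $v_i$, then by Lemma \ref{lem22} the contract $\mathcal{A}$ is valid, and by Lemma \ref{lem15} the resulting profile $C'=(C'_{\tau_i},C_{-\tau_i})$ is an RDF with $\gamma_R(G,C')\leq \gamma_R(G,C)-1$. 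Then, by Lemma \ref{lem14}, the next NE $C''$ reached from the RDF $C'$ satisfies $\gamma_R(G,C'')\leq \gamma_R(G,C')\leq \gamma_R(G,C)-1$. Hence the sequence of NE weights encountered is \emph{strictly} decreasing from one epoch to the next.

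Putting these together, since $0\leq \gamma_R(G,\cdot)\leq 2n$ and $\gamma_R$ is integer-valued, the number of epochs is at most $2n$. Each epoch costs $O(n)$ rounds (the GSA segment), plus one round for the private-contract step. Therefore the total running time is $2n\cdot O(n) + 2n = O(n^2)$. The algorithm terminates when, after reaching an NE, no player $v_i$ with $c_i=0$ satisfies $w_i(C)\geq 3$; at that moment no further epoch is initiated and, because the condition $C=C^{(t-1)}$ holds with no contract applied, the output is (still) an NE, which we may call an ENE.

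The main obstacle, and the step requiring the most care, is the assertion that $\gamma_R$ does not increase during the GSA segment following a contract. This is precisely what Lemma \ref{lem14} provides, but its validity depends on the inequality $x_{01}+x_{12}+2x_{02}\leq x_{10}+x_{21}+2x_{20}$ over all elementary strategy changes performed by GSA from an RDF to the next NE; its proof (deferred to the appendix of the paper) is the nontrivial combinatorial ingredient that drives the whole potential-style argument. Modulo that lemma and Lemma \ref{lem15}, the bound $O(n^2)$ then follows by the straightforward epoch-counting scheme described above.
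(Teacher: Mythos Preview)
Your proposal is correct and follows essentially the same approach as the paper: partition the run into epochs bounded by the NEs, use Theorem~\ref{lem0731-2} to bound each GSA segment by $O(n)$ rounds, invoke Lemma~\ref{lem15} together with Lemma~\ref{lem14} to show the NE weights strictly decrease across epochs, and cap the number of epochs by $2n$ via the trivial bound $\gamma_R\leq 2n$. This is exactly the paper's argument (stated in the paragraph preceding Lemma~\ref{lem15}), and you have also correctly flagged Lemma~\ref{lem14} as the nontrivial ingredient whose proof is deferred to the appendix.
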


Let $S_{\operatorname{ENE}}$ be the set of solutions that might be output by EGSA. As the example in  Fig. \ref{fig7} indicates, not all NE belong to $S_{\operatorname{ENE}}$. Furthermore, every G-RDF is an ENE. In fact, Theorem \ref{thm4} implies that any G-RDF is an NE. If it is not an ENE, then Lemma \ref{lem15} indicates that its $\gamma$-value can be strictly decreased, contradicting the definition of G-RDF. Combining these with Corollary \ref{cor1}, the following relations are revealed, and the relationship among these sets is illustrated in Fig. \ref{fig3}.

\begin{figure*}[th]
	\centering
	\subfloat[]{\includegraphics[width=0.45\linewidth]{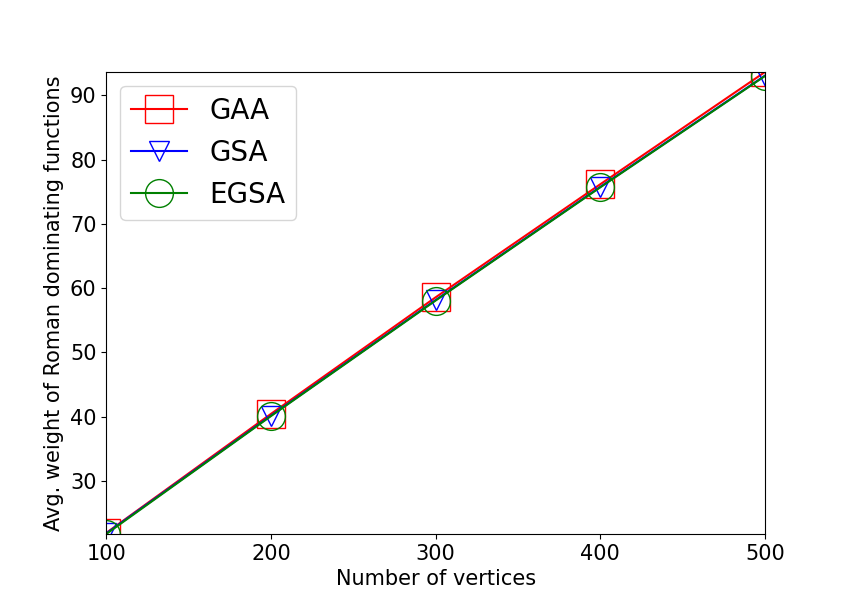}}
	\hfill
	\subfloat[]{
		\includegraphics[width=0.45\linewidth]{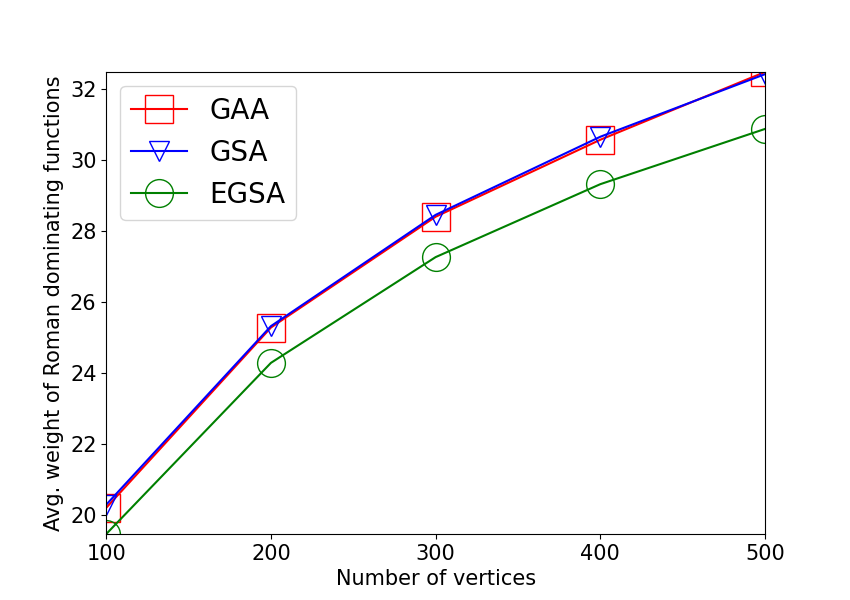}}
	\caption{Comparison of GAA, GSA and EGSA. (a) BA, $m=5$; (b) ER, $p=0.2$.}
	\label{fig8}
\end{figure*}

\begin{corollary}\label{cor2}
	In an RDG, $S_{\operatorname{G-RDF}}
	\subseteq S_{\operatorname{ENE}}
	\subseteq S_{\operatorname{NE}}
	\subseteq S_{\operatorname{S-RDF}}
	\subseteq S_{\operatorname{M-RDF}}
	\subseteq S_{\operatorname{RDF}}
	$.
\end{corollary}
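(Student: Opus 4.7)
The plan is to verify the chain of inclusions by handling the five inclusions separately. The three rightmost inclusions $S_{\operatorname{NE}}\subseteq S_{\operatorname{S-RDF}}\subseteq S_{\operatorname{M-RDF}}\subseteq S_{\operatorname{RDF}}$ are already established: the first is Theorem \ref{thm3}, while the last two follow immediately from the respective definitions of S-RDF and M-RDF, which build on each other and on RDF. So the substantive work is to verify the two new inclusions $S_{\operatorname{G-RDF}}\subseteq S_{\operatorname{ENE}}$ and $S_{\operatorname{ENE}}\subseteq S_{\operatorname{NE}}$.

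For $S_{\operatorname{ENE}}\subseteq S_{\operatorname{NE}}$, I would appeal directly to the structure of Algorithm \ref{alg3}. The algorithm only outputs $C$ at line 23 after the outer \textbf{for} loop exits, and the only way to advance past the contract-insertion block without replacing $C$ is that no player $v_i$ with $c_i=0$ satisfies $w_i(C)\geq 3$; moreover, the block is entered only when $C=C^{(t-1)}$, i.e. when the synchronous best-response phase has reached a fixed point. Invoking the same reasoning used in Theorem \ref{lem0731-2} for GSA, such a fixed point has $mu_i(C)\leq 0$ for every player, so no unilateral deviation is strictly profitable and $C$ is an NE by definition. Hence every ENE is an NE.

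For $S_{\operatorname{G-RDF}}\subseteq S_{\operatorname{ENE}}$, I would reason by contradiction. Let $C$ be a G-RDF. By Theorem \ref{thm4}, $C$ is an NE, so in particular it would be returned as output by EGSA \emph{unless} at some point the algorithm enters the private-contract block with $C$ as the current profile and some player $v_i$ with $c_i=0$ satisfies $w_i(C)\geq 3$. If that happens, Lemma \ref{lem22} guarantees that the contract $\mathcal{A}$ proposed by $v_i$ is valid, and Lemma \ref{lem15} then yields a new RDF $C'=(C'_{\tau_i},C_{-\tau_i})$ with $\gamma_R(G,C')<\gamma_R(G,C)$, contradicting the assumption that $C$ is a G-RDF. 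Thus no such $v_i$ can exist for a G-RDF, so EGSA terminates with $C$ and $C\in S_{\operatorname{ENE}}$.

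The main obstacle is ensuring that the logic of the second inclusion is clean: one must argue that a G-RDF, once reached as a fixed point of the best-response phase, cannot admit any $w_i(C)\geq 3$ vertex, and this requires piecing together Lemma \ref{lem22} (validity of $\mathcal{A}$) and Lemma \ref{lem15} (strict decrease of $\gamma_R$ along a valid contract). Everything else is either bookkeeping about the control flow of Algorithm \ref{alg3} or a direct appeal to previously established results, so no new technical machinery is required.
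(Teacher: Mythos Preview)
Your proposal is correct and follows essentially the same route as the paper. The paper's argument (given in the paragraph preceding Corollary~\ref{cor2}) invokes Corollary~\ref{cor1} for the three rightmost inclusions, observes that any output of EGSA is an NE by construction for $S_{\operatorname{ENE}}\subseteq S_{\operatorname{NE}}$, and for $S_{\operatorname{G-RDF}}\subseteq S_{\operatorname{ENE}}$ uses Theorem~\ref{thm4} together with Lemma~\ref{lem15} in the same contradiction argument you describe; your version is slightly more careful in that you explicitly invoke Lemma~\ref{lem22} to certify validity of the contract before applying Lemma~\ref{lem15}, and you spell out the control flow of Algorithm~\ref{alg3}, but the underlying reasoning is identical.
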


\section{Simulation Results}
This section reports the  experiments on the performances of the algorithms GAA, GSA and EGSA.   All experiments are coded in Python and run with an identical configuration: AMD Ryzen 5 3500U with  Radeon Vega Mobile Gfx and 16GB of RAM.
\subsection{Comparing GAA, GSA and EGSA}

In this section, the three algorithms GAA, GSA and EGSA are compared in terms of accuracy and time complexity.
Graphs for the experiments are generated randomly using
the following two models.
\begin{itemize}
	\item [$(\romannumeral1)$] {\em The Barab\'{a}si-Albert graph} (BA) \cite{Barabasi}: Starting from a graph with a small number $m_0$ of vertices, new vertices are iteratively added. When a new vertex is added, it is connected to $m$ existing vertices, where $m\leq m_0$, and the probability that an existing vertex is linked with the new vertex is proportional to its current degree.
	
	\item [$(\romannumeral2)$] {\em The Erd$\ddot{o}$s-R\'{e}nyi graph} (ER) \cite{Erdos}: In this graph, every pair of vertexes are connected by an edge with probability $p$.
\end{itemize}

In Fig. \ref{fig8}, the horizontal axis is the number of vertices $n$ and the vertical axis shows the average weight of the solutions obtained on 1000 randomly sampled graphs. It can been seen that EGSA is better than GSA, especially in ER graph. Whereas GSA is similar to  GAA, which means that the distributed algorithm can achieve the same accuracy as the centralized algorithm. For clarity, the figures only show the situations of $m=5$ for the BA graph and
$p=0.2$ for the ER graph. In fact, for both BA and ER graphs with various parameters, all experiments show similar results.

Table \ref{[Tab:03]} shows the  average number of rounds of the three algorithms, where one round refers to one iteration of the {\em outer for loop}. Although it seems that the number of rounds of GAA is smaller than that of GSA and EGSA, it should be noted that in each round of GAA, a centralized controller has to compute $n$ players' best responses sequentially, while in each round of GSA, all players compute their best responses simultaneously. Therefore, the real time for GAA is $n$ times the number of rounds while the real time for GSA is just the number of rounds. Let $n_{GAA}$ and $n_{GSA}$ be the number of rounds in GAA and GSA, respectively. Define $\eta=\frac{n_{GSA}}{n_{GAA}\times n}$ and use it to measure the ratio in real time. The values of $\eta$ are shown in Table \ref{[Tab:04]}. It can be seen that, in terms of real time, GSA is much faster than GAA. Furthermore, it can be observed that the acceleration effect is more prominent with the increase of $n$, especially on ER.

\begin{figure*}[th]
	\centering
	\subfloat[]{\includegraphics[width=0.45\linewidth]{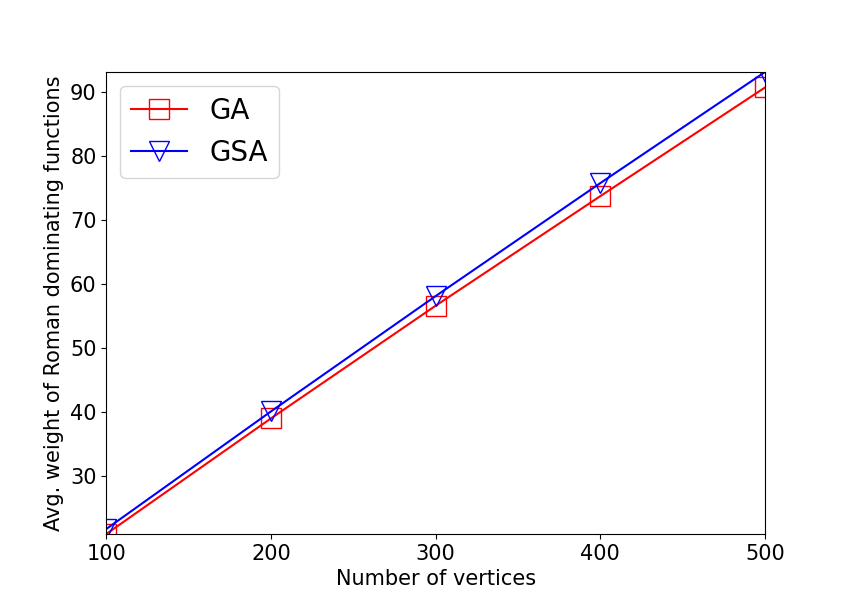}}
	\hfill
	\subfloat[]{
		\includegraphics[width=0.45\linewidth]{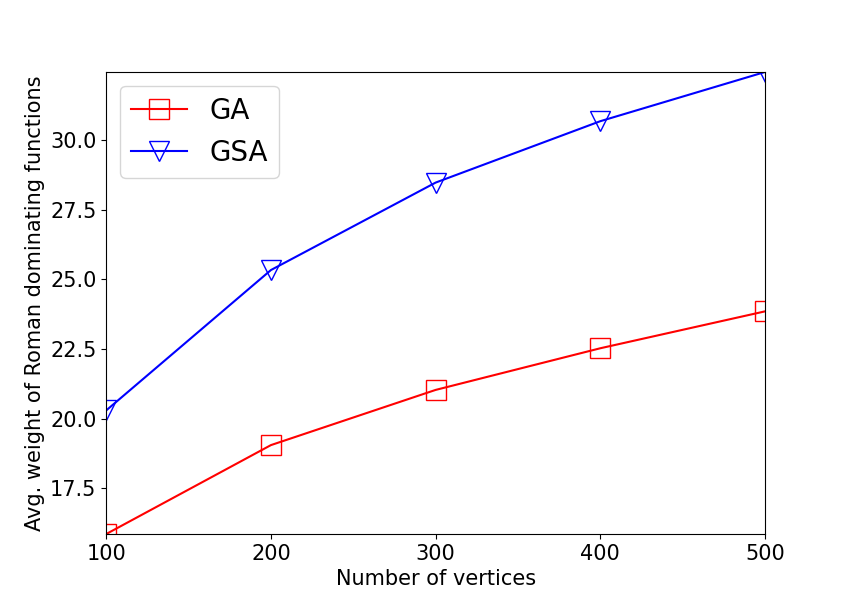}}
	\caption{Comparison of GA and GSA. (a) BA, $m=5$; (b) ER, $p=0.2$.}
	\label{fig9}
\end{figure*}

\begin{table}[htbp]
	\centering
	\caption{Average Rounds of GAA, GSA and EGSA}
	\label{[Tab:03]}
	\vskip 0.2cm
	\scalebox{0.9}{
	\begin{tabular}{|c|c|c|c|c|c|c|}
		\toprule
		Graph&Algorithm&$|V|=100$&$200$&$300$&$400$&$500$\\
		\hline
		\multirow{3}{*}{BA}& GAA& $2.332$& $2.491$& $2.615$& $2.744$& $2.879$\\
		\cline{2-7}
		&GSA& $17.968$& $30.278$& $41.199$& $50.679$& $59.314$\\
		\cline{2-7}
		& EGSA &$18.093$& $30.372$& $41.296$& $50.777$& $59.391$\\
		\hline
		\multirow{3}{*}{ER}& GAA& $2.846$& $3.062$& $3.111$& $3.190$& $3.152$\\
		\cline{2-7}
		&GSA& $22.309$& $27.267$& $29.520$& $31.369$& $32.287$\\
		\cline{2-7}
		&EGSA  & $23.364$& $28.586$& $31.031$& $33.030$& $34.241$ \\	
		\hline
	\end{tabular}}
\end{table}

\begin{table}[htbp]
	\centering
	\caption{Compare Time Complexity of GA and GSA Measured by $\eta$}
	\label{[Tab:04]}
	\vskip 0.2cm
	\begin{tabular}{|c|c|c|c|c|c|c|}
		\toprule
		Graph&$|V|=100$&$200$&$300$&$400$&$500$\\
		\hline
		BA&  $7.70\%$& $6.08\%$& $5.25\%$& $4.62\%$& $4.12\%$\\
		\hline
		ER&  $7.84\%$& $4.45\%$& $3.16\%$& $2.46\%$& $2.05\%$\\
		\hline
	\end{tabular}
\end{table}

\begin{figure*}[thbp]
	\centering
	\subfloat[]{\includegraphics[width=0.49\linewidth]{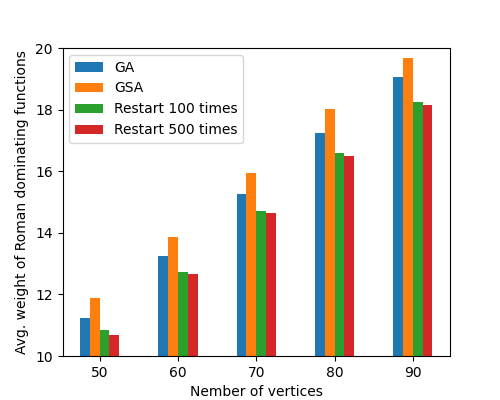}}
	\hfill
	\subfloat[]{
		\includegraphics[width=0.49\linewidth]{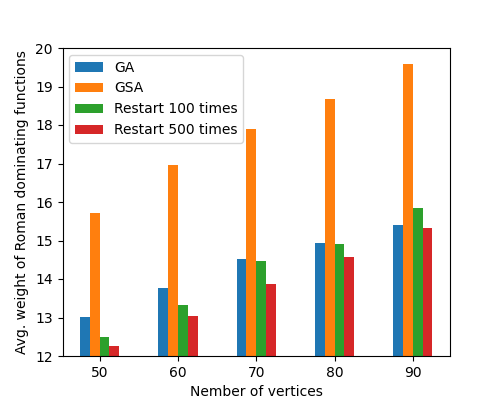}}
	\caption{Compare GA and GSA using restart. (a) BA, $m=5$; (b) ER, $p=0.2$.}
	\label{fig10}
\end{figure*}

\subsection{Comparing GSA and Greedy Algorithm}
Since the MinRD problem is NP-hard, one cannot expect a polynomial-time algorithm to obtain the exact solution. The best-known approximation algorithm for MinRD is a {\em greedy algorithm} (GA) proposed in \cite{Li3}, which can achieve  asymptotically a tight logarithmic approximation ratio.

To compare GSA with GA on BA and ER, Fig. \ref{fig9} shows the average weight of solutions computed by GSA (line with triangles) and GA (line with rectangles). In terms of weight, GA is smaller than GSA, especially on ER. This is reasonable, because GA is a {\em centralized} algorithm. The advantage of GSA is that it is a {\em distributed} algorithm. Furthermore, for the experiment in Fig. \ref{fig9}, GSA always starts from the initial strategy profile $(0,\dots,0)$. A question is: if the algorithm is restarted from different initial strategies, will it perform better? Note that Theorem \ref{lem0731-2} guarantees that GSA can converge to an NE from {\em any} initial strategy profile. Fig. \ref{fig10} shows the average weight by restarting 100 times and 500 times (also tested for restarting 200, 300 and 400 times). It can be seen that restarting GSA can achieve better solutions than GA. Note that GA is a deterministic algorithm, which cannot be benefited from restarting. It has also been tested and noticed  that the average weight will not be significantly improved with more than 100 times.

\subsection{Comparing with an Exact Solution on a Tree}
To see the accuracy of the three proposed algorithms, they are compared with the exact algorithm (DP) for MinRD on trees, which is a {\em Dynamic Program} proposed in \cite{Dreyer}. Trees for the experiments are generated in two ways as follows.
\begin{itemize}
	\item[$(\romannumeral1)$] {\em Barab\'{a}si-Albert Tree} (BAT): A tree is constructed by the BA model with $m_0=2$ and $m=1$.
	\item[$(\romannumeral2)$] {\em Random Tree} (RT):  Let $V$ be a vertex set on $n$ vertices and $F$ be the edge set consisting of all possible edges between vertices of $V$. Starting from an empty graph formed from vertex set $V$, iteratively add an edge $e$ from $F$ randomly and uniformly as long as no cycle is created, until a spanning tree on $V$ is obtained.
\end{itemize}

For each $n$, 1000 trees of size $n$ are sampled randomly. Table \ref{[Tab:06]} shows the relative error  $\omega=\frac{\gamma_R(G,f)-\gamma_R(G,f^*)}{\gamma_R(G,f^*)}$ for the RDF $f$ obtained by the algorithms (GA, GSA and EGSA) and the optimal solution $f^*$ generated by DP. Here, the results of GAA are not presented since GSA has the same accuracy as GAA, but the result of the greedy algorithm GA is included for a comparison. As can be seen from Table \ref{[Tab:06]}, GSA is superior to GA on RT. Although GA is better than GSA on BAT, with the restarting strategy, GSA is much better than GA. Furthermore, EGSA performs fairly well on trees, especially on BAT. Note that EGSA outperforms GSA even if GAS restarts 100 times.

\begin{table}[htbp]
	\centering
	\caption{Relative Errors of the Proposed Algorithms and DP Measured by $\omega$}
	\label{[Tab:06]}
	\vskip 0.2cm
	\scalebox{0.9}{
	\begin{tabular}{|c|c|c|c|c|c|c|}
		\toprule
		graph&Algorithm&$|V|=100$&$200$&$300$&$400$&$500$\\
		\hline
		\multirow{4}{*}{RT}&	GA&  $1.81\%$& $2.10\%$& $1.87\%$& $1.82\%$& $1.82\%$\\
		\cline{2-7}
		&GSA&  $1.09\%$& $1.26\%$& $1.15\%$& $1.15\%$& $1.16\%$\\
		\cline{2-7}
		&GSA(100)&  $0.24\%$& $0.41\%$& $0.60\%$& $0.80\%$& $0.92\%$\\
		\cline{2-7}
		&EGSA&  $0.12\%$& $0.37\%$& $0.31\%$& $0.33\%$& $0.32\%$\\
		\hline
		\multirow{4}{*}{BAT}&GA&  $0.17\%$& $0.14\%$& $0.19\%$& $0.19\%$& $0.18\%$\\
		\cline{2-7}
		&GSA&  $0.41\%$& $0.35\%$& $0.40\%$& $0.37\%$& $0.36\%$\\
		\cline{2-7}
		&GSA(100)&  $0.002\%$& $0.004\%$& $0.01\%$& $0.02\%$& $0.02\%$\\
		\cline{2-7}
		&EGSA&  $0.01\%$& $0.01\%$& $0.01\%$& $0.02\%$& $0.004\%$\\
		\hline
	\end{tabular}
}
\end{table}

\section{Conclusion}\label{sec6}
In this paper, we study the minimum Roman domination problem (MinRD) in a multi-agent system by game theory. A Roman domination game is proposed and the existence of Nash equilibrium (NE) is guaranteed. Furthermore, an NE can be found in a linear number of rounds of interactions. A distributed algorithm GSA is proposed to find an NE, in which every player can make a decision based on local information. It is proved that any NE is both a strong RDF and a Pareto-optimal solution. Moreover, it is shown that an enhanced NE can be obtained by an enhanced algorithm EGSA, which can further improve the quality of the solution. In the future, we will explore cooperative game theory on domination problems, hoping to obtain improved performances.

\ifCLASSOPTIONcaptionsoff
  \newpage
\fi



%

\bibliography{IEEEabrv,IEEEexample}
%

\begin{appendices}
	
	\section{Proof of Lemma \ref{lem14}}
	
	Lemma \ref{lem14} is proved by a series of lemmas below. The idea of the proofs is as follows. In both $C$ and $C'$, there is no white free vertex. It is needed to estimate, in various cases, how many white free vertices are created and diminished, resulting in an upper bound for the increase of the number of white free vertices, which is then compared with $0$. Since $C'$ is an NE, every gray vertex is free. Denote by $y$ the number of gray vertices in $C$ that are strongly dominated. Then, they will diminish in $C'$. Hence, when reaching $C'$, the total number of gray vertices that are strongly dominated is decreased by $y$. Motivated by this observation, it will then be needed to estimate the number of created and diminished gray vertices that are strongly dominated during the evolution from $C$ to $C'$, which will be compared to the total decrease with $y$. Using these relations, by some algebraic manipulation, inequality \eqref{eq0911-1} can be obtained, implying the monotonicity of $\gamma$.

	In the following, the number of changes after a strategy profile $C$ is estimated and then changed to $C'$. Note that, when a player $v_i$ changes his strategy, it only affects the statuses of those players in $\bar N_i$ (to become free or become strongly dominated). So, it is only needed to consider those vertices in $\bar N_i$.
	
	\begin{lemma}\label{lem6}
		In any strategy profle $C$, when a player $v_i$ changes his strategy from $2$ to $0$, one of the following two situations holds.
		
		\begin{itemize}
			\item [$(1)$] at most one free white vertex is created and no gray vertex becomes free;
			\item[$(2)$] at most two gray vertices become free and no free white vertex is created.
		\end{itemize}
	\end{lemma}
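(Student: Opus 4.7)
The plan is to pin down exactly which vertices can transition between the ``strongly dominated'' and ``free'' states when $v_i$ flips from $2$ to $0$. Only vertices of $\bar N_i$ are affected, and in $C$ every $v_j\in\bar N_i$ satisfies $m_j(C)=0$ because $v_i$ strongly dominates $\bar N_i$. Since $v_i$ is the sole mover, a vertex $v_j\in\bar N_i$ becomes free in $C'=(c_i'=0,C_{-i})$ precisely when $v_i$ was its unique strong dominator, i.e.\ $\bar N_j^2(C)=\{v_i\}$. I would call this ``liberated'' set $L$. A useful small observation: for $v_j\in L\setminus\{v_i\}$, necessarily $c_j\in\{0,1\}$, because $c_j=2$ would put $v_j$ into $\bar N_j^2(C)$ alongside $v_i$.

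Next I would classify $L$ by the value of $c_j'$: set $a=\mathbf{1}[v_i\in L]$ (equivalently $\mathbf{1}[N_i^2(C)=\emptyset]$), $b=|L\cap N_i^0(C)|$, and $c=|L\cap N_i^1(C)|$. By the observation above, the number of newly created free white vertices is exactly $a+b$, and the number of gray vertices that become free is exactly $c$. So both situations of the dichotomy reduce to bounding the pair $(a+b,c)$.

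The crucial inequality comes from the rule that a player only changes strategy when strictly better off. By Lemma~\ref{lem0}, $u_i(C)=-4\lambda_1$, and since $m_j(C')=0$ for $v_j\in\bar N_i\setminus L$,
\[
u_i(c_i'=0,C_{-i})=-\lambda_2\sum_{v_j\in L}(2-c_j')=-\lambda_2\bigl(2a+2b+c\bigr).
\]
Combining $u_i(C')>u_i(C)$ with the standing assumption $\lambda_1<\tfrac{3}{4}\lambda_2$ yields $2a+2b+c<4\lambda_1/\lambda_2<3$, and integrality forces $2a+2b+c\le 2$. A short case split then closes the proof: if $a+b\ge 1$, then $2(a+b)\ge 2$ forces $c=0$ and $a+b\le 1$, which is situation~(1); otherwise $a+b=0$, and $c\le 2$ gives situation~(2).

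I expect the main obstacle to be the utility computation, specifically making sure to correctly characterize $L$ (including the edge case $v_j=v_i$, i.e.\ tracking the contribution $a$) and verifying that the integer bound $2a+2b+c\le 2$ really follows from the parameter window $\tfrac{2}{3}\lambda_2<\lambda_1<\tfrac{3}{4}\lambda_2$. Everything downstream of that bound is pure bookkeeping.
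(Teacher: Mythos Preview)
Your proposal is correct and follows essentially the same approach as the paper: both compute $u_i(C')$ in terms of the newly freed white and gray neighbours and use $u_i(C')>u_i(C)=-4\lambda_1$ together with $\lambda_1<\tfrac34\lambda_2$ to rule out the forbidden combinations by case analysis. Your explicit bookkeeping of the term $a$ (the case where $v_i$ itself becomes the new free white vertex) is in fact a small refinement over the paper, which only argues about vertices in $N_i^0(C)$.
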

	\begin{proof}
		Let $C=(c_i=2,C_{-i})$ and $C'=(c'_i=0,C_{-i})$. By Lemma \ref{lem0}, $u_i(C)=-4\lambda_1$. Note that all vertices in $\bar{N}_i$ are strongly dominated by $v_i$ in $C$.
		
		If at least two white vertices in $N_i^0(C)$ become free after $v_i$ changes his strategy, then $u_i(C')\leq-4\lambda_2<-4\lambda_1=u_i(C)$, contradicting that $v_i$ is willing to change his strategy from $2$ to $0$.
		
		If at least three gray vertices in $N_i^1(C)$ become free in $C'$ after $v_i$ changes   his strategy, then $u_i(C')\leq-3\lambda_2<-4\lambda_1=u_i(C)$, a contradiction.
		
		If exactly one white vertex in $N_i^0(C)$ and at least one gray vertex in $N_i^1(C)$ become free, then $u_i(C')\leq-3\lambda_2<-4\lambda_1=u_i(C)$, a contradiction. So, if a free white (resp. gray) vertex is created, then no gray (resp. white) vertex becomes free.
		
		Combining the above arguments, the lemma is proved.
	\end{proof}
	
	The following proofs are similar, using similar notations as in Lemma \ref{lem6}.
	
	\begin{lemma}\label{lem9}
		In any strategy profile $C$, when a player $v_i$ changes his strategy from $2$ to $1$, the number of strongly dominated gray vertices is decreased by at most one and the number of white free vertices is not increased.
	\end{lemma}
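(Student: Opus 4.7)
My plan is to follow the template of Lemma \ref{lem6}, working with $C=(c_i=2,C_{-i})$ and $C'=(c'_i=1,C_{-i})$. By Lemma \ref{lem0}, $u_i(C)=-4\lambda_1$, and since the algorithm always selects a best response, $c'_i=1$ must weakly beat the alternative $c''_i=0$. A short calculation shows that
$$u_i(c_i=1,C_{-i})-u_i(c_i=0,C_{-i})=-\lambda_1+\lambda_2\, m_i(C'),$$
because by Observation \ref{obs1}(a) all $m_j$ for $v_j\neq v_i$ are unaffected when $v_i$ switches between $0$ and $1$, so the only surviving contribution comes from $v_j=v_i$. The inequality therefore forces $m_i(C')=1$, that is $N_i^2(C)=\emptyset$: no other black vertex is adjacent to $v_i$.

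Armed with this, the vertices whose $m$-value can flip when $v_i$ drops from $2$ to $1$ are $v_i$ itself together with those vertices in $\bar{N}_i$ whose unique black neighbor in $C$ was $v_i$. Let $y_1$ and $y_0$ denote the number of such neighbors that are gray (in $N_i^1(C)$) and white (in $N_i^0(C)$), respectively. Then
$$u_i(C')=-\lambda_1-\lambda_2(1+y_1+2y_0),$$
with the $1$ accounting for $v_i$ itself, which is now a free gray vertex. Imposing the strict-improvement condition $u_i(C')>u_i(C)=-4\lambda_1$ yields $1+y_1+2y_0<3\lambda_1/\lambda_2<9/4$; since the left-hand side is a non-negative integer, $y_1+2y_0\leq 1$.

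Both conclusions follow at once. First, $y_0=0$ forbids creating a free white vertex, and because no new black vertex appears no white vertex can lose free status either, so the count of free white vertices is unchanged. Second, $y_1\leq 1$ bounds the drop in the number of strongly dominated gray vertices by $1$, noting that $v_i$, now gray but free, does not contribute to the post-switch count. The main subtlety is keeping the two conditions separate---strict improvement over the old strategy $c_i=2$ versus best response over the alternative $c_i=0$---because only their combination delivers both $m_i(C')=1$ and the integrality-based bound on $y_1+2y_0$.
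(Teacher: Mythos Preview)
Your proof is correct and follows essentially the same approach as the paper: both first use the best-response condition (your direct calculation, the paper's Lemma~\ref{lem5}) to force $m_i(C')=1$ and hence $N_i^2(C)=\emptyset$, and then both exploit the strict-improvement inequality $u_i(C')>-4\lambda_1$ to bound the newly free neighbors. The only difference is packaging: the paper argues two separate contradiction cases (two gray neighbors turning free, or one white neighbor turning free, each giving $u_i(C')\leq-\lambda_1-3\lambda_2<-4\lambda_1$), whereas you consolidate these into the single integer inequality $y_1+2y_0\leq 1$, which reads off $y_0=0$ and $y_1\leq 1$ simultaneously.
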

	
	\begin{proof}
		Because $BR(v_i,C)=1$, one has $m_i(C')=1$ by Lemma \ref{lem5}. By the definition of $m_i(C')$, one has $N_i^2(C')=\emptyset$, and thus $N_i^2(C)=\emptyset$. As a consequence, the black vertex $v_i$ in $C$ becomes a free gray vertex in $C'$. If the number of strongly dominated gray vertices is decreased by at least two, then at least two vertices in $N_i$ that are gray in $C$ become free in $C'$, and thus $u_i(C')\leq-\lambda_1-3\lambda_2<-4\lambda_1=u_i(C)$, a contradiction. If the number of white free vertices is  increased by at least one, then at least one vertex in $N_i$ that is white in $C$ becomes free in $C'$, and thus $u_i(C')\leq-\lambda_1-3\lambda_2<-4\lambda_1=u_i(C)$, also a contradiction. These contradictions establish the lemma.
	\end{proof}
	
	For simplicity of statement, by saying the {\em status} of a white vertex or a gray vertex, it refers to the situation of this vertex being strongly dominated or free.
	
	\begin{lemma}\label{lem10}
		In any strategy profile $C$, when a player $v_i$ changes his strategy from $1$ to $0$, the number of strongly dominated gray vertices is decreased by exactly one and the number of free white vertices is not increased.
	\end{lemma}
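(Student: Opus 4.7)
The plan is to derive both claims from two already-established facts: Observation~\ref{obs1}$(a)$, which tells us that toggling a player's strategy between $0$ and $1$ leaves every $m_j(C)$ invariant, and Lemma~\ref{lem5}, which characterizes precisely when such a toggle is utility-improving.

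First, I would use the assumption that $v_i$ is willing to move, combined with Lemma~\ref{lem5}, to pin down $m_i(C)$. Since the best response of $v_i$ is to drop from $1$ to $0$, one has $u_i(c_i'=0,C_{-i}) > u_i(c_i=1,C_{-i})$, and Lemma~\ref{lem5} then forces $m_i(C)=0$. By the definition of $m_i(C)$, this means $v_i$ itself is strongly dominated in $C$ (equivalently, some neighbor of $v_i$ is black). Consequently, in $C$ the vertex $v_i$ is counted among the strongly dominated gray vertices.

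Next, I would invoke Observation~\ref{obs1}$(a)$ to control what happens everywhere else: since the move toggles $c_i$ between $1$ and $0$, the quantity $m_j(C)$ is unchanged for every $v_j\in V$ (including $v_i$). In particular, (i) no previously-free white vertex is touched in the opposite direction either, so the count of free white vertices cannot grow; the only vertex whose color changed is $v_i$, and since $m_i(C')=m_i(C)=0$, $v_i$ becomes a strongly dominated white vertex, not a free one. (ii) No other gray vertex changes color, and by invariance of $m$, none of them loses its strongly dominated status. So the only net change in the count of strongly dominated gray vertices is the loss of $v_i$, i.e., a decrease by exactly one.

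There is essentially no obstacle here: the argument is a direct bookkeeping consequence of Observation~\ref{obs1}$(a)$ and Lemma~\ref{lem5}, and the only care needed is to track that $v_i$ itself transitions from ``strongly dominated gray'' to ``strongly dominated white'' (not to ``free white''), which is secured by $m_i(C')=m_i(C)=0$. The two conclusions of the lemma then follow immediately.
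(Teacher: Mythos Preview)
Your proposal is correct and follows essentially the same route as the paper's proof: both arguments use Lemma~\ref{lem5} to deduce $m_i(C)=0$ (so $v_i$ is a strongly dominated gray vertex in $C$), then invoke Observation~\ref{obs1}$(a)$ to conclude that every $m_j$ is unchanged, whence the only net effect is that $v_i$ moves from ``strongly dominated gray'' to ``strongly dominated white.'' The paper phrases the first step via $m_i(C')=0$ rather than $m_i(C)=0$, but these are equivalent by Observation~\ref{obs1}$(a)$, so there is no substantive difference.
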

	\begin{proof}
		Because $BR(v_i,C)=0$, one has $m_i(C')=0$ by Lemma \ref{lem5}, which implies that $v_i$ has a neighbor of $c$-value 2 in $C'$, and thus in $C$. By Observation \ref{obs1}, $m_j(C)=m_j(C')$ $\forall v_j\in V$, and thus the status of every $v_j\in V\setminus\{v_i\}$ is the same in both $C$ and $C'$. Then, the lemma follows from the fact that the strongly dominated gray vertex $v_i$ in $C$ becomes a white strongly dominated vertex in $C'$.
	\end{proof}
	
	\begin{lemma}\label{lem11}
		In any strategy profile $C$, when a player $v_i$ changes his strategy from $0$ to $1$, the number of free white vertices is decreased by exactly one and the number of gray vertices that are strongly dominated is not decreased.
	\end{lemma}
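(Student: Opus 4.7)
The plan is to reduce everything to two already established facts: Observation \ref{obs1}(a), which says that a $0\leftrightarrow 1$ switch at $v_i$ leaves $m_j$ invariant for every $v_j\in V$, and Lemma \ref{lem5}, which characterizes when switching $v_i$ from $0$ to $1$ is strictly beneficial. Since the hypothesis is that $v_i$ actually performs the change (so $v_i$ strictly gains utility), Lemma \ref{lem5} immediately yields $m_i(C)=1$, i.e., $v_i$ is a \emph{free white} vertex in $C$.

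Next I would compare $C$ and $C'=(c_i'=1,C_{-i})$ vertex by vertex. By Observation \ref{obs1}(a), $m_j(C')=m_j(C)$ for every $v_j\in V$; in particular $m_i(C')=1$ as well. For any $v_j\neq v_i$, both $c_j$ and $m_j$ are identical in $C$ and $C'$, hence the ``status'' of $v_j$ (white/gray/black, free/strongly dominated) is unchanged. The only vertex whose contribution can change is $v_i$ itself, which transitions from a free white vertex in $C$ to a gray vertex with $m_i(C')=1$ in $C'$, i.e., a free gray vertex.

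From this single-vertex transition the two counting claims follow immediately. The count of free white vertices loses exactly the contribution of $v_i$, so it drops by exactly one. The count of strongly dominated gray vertices receives no new contribution from $v_i$ (since $v_i$ is a free gray, not strongly dominated, in $C'$) and no vertex other than $v_i$ changes status, so this count is unchanged and in particular not decreased.

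There is essentially no obstacle: the lemma is a bookkeeping consequence of Observation \ref{obs1}(a) and Lemma \ref{lem5}. The only point worth stating carefully is that $v_i$ lands in the \emph{free} gray category rather than the strongly dominated gray category, which is precisely what the identity $m_i(C')=m_i(C)=1$ guarantees.
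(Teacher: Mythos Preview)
Your proof is correct and follows essentially the same route as the paper: invoke Lemma~\ref{lem5} (together with Observation~\ref{obs1}(a)) to conclude that $m_i(C)=m_i(C')=1$, so $v_i$ moves from free white to free gray, and use Observation~\ref{obs1}(a) again to see that no other vertex changes status. The paper's version is just a terser rendering of the same argument.
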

	\begin{proof}
		Because $BR(v_i,C)=1$, one has $m_i(C')=1$, and thus $N_i^2(C)=\emptyset$. Similar to the proof of the above lemma, the status of every $v_j\in V\setminus \{v_i\}$ is the same in both $C$ and $C'$. Then, the lemma follows form the fact that the free white vertex $v_i$ in $C$ becomes a free gray vertex in $C'$.
	\end{proof}
	
	\begin{lemma}\label{lem12}
		In any strategy profile $C$, when a player $v_i$ changes his strategy from $0$ to $2$, one of the following three situations holds.
		\begin{itemize}
			\item[$(1)$] The number of white free vertices is decreased by at least two and the number of strongly dominated gray vertices is not decreased;
			\item [$(2)$] The number of strongly dominated gray vertices is increased by at least three and the number of free white vertices is not increased;
			\item[$(3)$] The number of free white vertices is decreased by exactly one and the number of strongly dominated gray vertices is increased by at least one.
		\end{itemize}
	\end{lemma}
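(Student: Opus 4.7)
The plan is to mimic the strategy used in Lemmas \ref{lem6}--\ref{lem11}: turn the hypothesis that $v_i$ is willing to change (so $u_i(C') > u_i(C)$) into a single linear inequality in the number of free neighbors, and then perform a case split that matches items (1)--(3) of the statement.

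First I would observe that when $v_i$ switches from $0$ to $2$, only vertices in $\bar N_i$ can undergo a status change, and in fact only free vertices change: a free white $v_j\in N_i$ becomes strongly dominated, a free gray $v_j\in N_i$ becomes strongly dominated, non-free whites/grays retain their domination, and $v_i$ itself leaves the set of white vertices. In particular, under this type of move the free white count can only decrease and the strongly dominated gray count can only increase, so the ``not decreased'' and ``not increased'' clauses in (1)--(3) hold automatically. I would then introduce three bookkeeping variables: $\alpha=|\{v_j\in N_i^0(C)\colon m_j(C)=1\}|$, $\beta=|\{v_j\in N_i^1(C)\colon m_j(C)=1\}|$, and $\delta=m_i(C)\in\{0,1\}$. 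By the counting above, the free white count drops by exactly $\alpha+\delta$ while the strongly dominated gray count rises by exactly $\beta$.

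Next I would compute both utilities in closed form. By Lemma \ref{lem0}, $u_i(C')=-4\lambda_1$. A direct expansion of the formula for $u_i$, isolating the contributions of black, free-white, free-gray, and non-free neighbors, gives $u_i(C)=-2\lambda_2(\alpha+\delta)-\lambda_2\beta$ (black and non-free neighbors contribute zero, each free white contributes $-2\lambda_2$, each free gray contributes $-\lambda_2$, and the $v_i$-term adds $-2\lambda_2\delta$). The hypothesis $u_i(C')>u_i(C)$ becomes the single inequality
$$2\lambda_2(\alpha+\delta)+\lambda_2\beta>4\lambda_1.$$
Now I would split on $\alpha+\delta$. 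If $\alpha+\delta\geq 2$, the free white count drops by at least two and we are in case (1). If $\alpha+\delta=1$, the inequality forces $\beta>4\lambda_1/\lambda_2-2>2/3$ (using $\lambda_1>\tfrac{2}{3}\lambda_2$), so $\beta\geq 1$ and we are in case (3). If $\alpha+\delta=0$, the inequality forces $\beta>4\lambda_1/\lambda_2>8/3$, so $\beta\geq 3$ and we are in case (2).

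The only delicate point is the correct treatment of $v_i$'s own contribution to the free white count: if $m_i(C)=1$ then $v_i$ itself is a free white that disappears, while if $m_i(C)=0$ it is already strongly dominated and its conversion to black does not change the free white count. Bundling this into the parameter $\delta$ keeps the bookkeeping clean and allows the integer thresholds ($\beta\geq 3$ in case (2) and $\beta\geq 1$ in case (3)) to drop out cleanly from the strict inequality $\lambda_1>\tfrac{2}{3}\lambda_2$; no other feature of the parameter window $\tfrac{2}{3}\lambda_2<\lambda_1<\tfrac{3}{4}\lambda_2$ is needed here.
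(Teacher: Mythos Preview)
Your proposal is correct and follows essentially the same approach as the paper: both compute $u_i(C')=-4\lambda_1$ and $u_i(C)$ in terms of the free white and free gray vertices in $\bar N_i$, then use the requirement $u_i(C')>u_i(C)$ together with $\lambda_1>\tfrac{2}{3}\lambda_2$ to force the three-way case split. Your explicit bookkeeping variables $\alpha,\beta,\delta$ and the single inequality $2\lambda_2(\alpha+\delta)+\lambda_2\beta>4\lambda_1$ make the argument tidier than the paper's ad hoc case analysis, but the underlying idea is identical.
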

	\begin{proof}
		If in $C$, there are at most two free gray vertices in $N_i^1(C)$ and no free white vertex in $\bar N_i^0(C)$, then $u_i(C)\geq -2\lambda_2>-4\lambda_1=u_i(C')$, a contradiction. So, there are at least three free gray vertices or at least one free white vertex that become strongly dominated in $C'$. In other words, one of the following two situations occurs: $(a)$ the number of strongly dominated gray vertices is increased by at least three, or $(b)$ the number of free white vertices is decreased by at least one.
		
		Note that changing $c_i=0$ to $c_i'=2$ will never increase the number of free white vertices. So, if $(a)$ occurs, then one has situation $(2)$.
		
		Next, consider the case when $(b)$ occurs. If the number of free white vertices is decreased by exactly one and the number of strongly dominated gray vertices is not increased, then there is exactly one free white vertex in $\bar{N}_i^0(C)$ and no free gray vertex in $N_i^1(C)$. It follows that $u_i(C)=-2\lambda_2>-4\lambda_1=u_i(C')$, a contradiction. So, if the number of free white vertices is decreased by exactly one, then one has situation $(3)$.
		
		If the number of free white vertices is decreased by at least two, then changing $c_i=0$ to $c_i'=2$ will never decrease the number of strongly dominated gray vertices, so one has situation $(1)$.
	\end{proof}
	
	\begin{lemma}\label{lem13}
		In any strategy profile $C$, when a player $v_i$ changes his strategy from $1$ to $2$, one of the following three situations holds.
		\begin{itemize}
			\item[$(1)$] The number of free white vertices is decreased by at least one and the number of strongly dominated gray vertices is not decreased;
			\item[$(2)$] The  number of strongly dominated gray vertices is increased by at least two and the number of free white vertices is not increased.
			\item[$(3)$] The number of free white vertices is decreased by at least two and the number of strongly dominated gray vertices is decreased by exactly one.	
		\end{itemize}
	\end{lemma}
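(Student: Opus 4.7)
The plan is to imitate the proof of Lemma~\ref{lem12}: exploit the fact that $v_i$'s utility strictly increases to obtain a lower bound on the ``local badness'' of $v_i$'s neighborhood, then directly count how the switch affects free white vertices and strongly dominated gray vertices. Setting $C=(c_i=1,C_{-i})$ and $C'=(c_i'=2,C_{-i})$, Lemma~\ref{lem0} gives $u_i(C')=-4\lambda_1$, while using $m_j(C)=0$ for every neighbor with $c_j=2$ yields
$$u_i(C)=-\lambda_1-\lambda_2 m_i(C)-\lambda_2(2b+a),$$
where $a=|\{v_j\in N_i\colon c_j=1\wedge m_j(C)=1\}|$ and $b=|\{v_j\in N_i\colon c_j=0\wedge m_j(C)=1\}|$ count the free gray and free white neighbors of $v_i$. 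From $u_i(C)<u_i(C')$ together with $\lambda_1>\tfrac{2}{3}\lambda_2$, I would conclude
$$m_i(C)+2b+a\ \geq\ 3. \qquad (\star)$$

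Next I would tally the local effects of the switch $c_i\colon 1\mapsto 2$. All vertices in $\bar N_i$ become strongly dominated in $C'$ (by $v_i$), and no vertex outside $\bar N_i$ is affected, so no vertex can newly become free. Letting $\delta_w$ denote the decrease in the number of free white vertices and $\delta_g$ the increase in the number of strongly dominated gray vertices, a short bookkeeping argument gives
$$\delta_w = b,\qquad \delta_g = a-[m_i(C)=0],$$
where the bracketed term captures the fact that $v_i$ is itself a strongly dominated gray vertex in $C$ precisely when $m_i(C)=0$, and is no longer gray in $C'$.

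The proof then closes by a case split on $m_i(C)$ using $(\star)$. If $m_i(C)=1$ then $2b+a\geq 2$: the subcase $b\geq 1$ yields situation (1) since $\delta_w\geq 1$ and $\delta_g=a\geq 0$, while $b=0$ forces $a\geq 2$, hence situation (2). If $m_i(C)=0$ then $2b+a\geq 3$: whenever $a\geq 1$, one has $\delta_g\geq 0$, and either $b\geq 1$ gives situation (1) or $b=0$ with $a\geq 3$ gives situation (2); whereas $a=0$ forces $b\geq 2$ with $\delta_g=-1$, which is situation (3). The only subtle point—and the one place the argument demands care—is the correct accounting of $v_i$'s own contribution to the strongly dominated gray count, which is what makes situation (3) attainable and distinguishes this lemma from Lemma~\ref{lem12}; once this bookkeeping is fixed and the integer constraint $2b\geq 3\Rightarrow b\geq 2$ is used, the remaining arithmetic is routine.
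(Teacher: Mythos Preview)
Your proof is correct and follows essentially the same approach as the paper's: both derive a lower bound on the local ``badness'' of $\bar N_i$ from the strict utility increase $u_i(C)<u_i(C')=-4\lambda_1$, and then count how the switch affects free white vertices and strongly dominated gray vertices, with the key subtlety being the loss of $v_i$ itself from the strongly dominated gray count when $m_i(C)=0$. Your explicit parametrization via $a$, $b$, and the Iverson bracket $[m_i(C)=0]$, together with the systematic case split on $m_i(C)$, is in fact tidier than the paper's argument and handles cleanly the mixed case $m_i(C)=0$, $b=1$, $a\geq 1$ (which lands in situation~(1)) that the paper's phrasing glosses over.
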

	\begin{proof}
		If there are at most two free gray vertices in $\bar N^1_i(C)$ and no free white vertex in $N^0_i(C)$, then $u_i(C)\geq-\lambda_1-2\lambda_2>-4\lambda_1=u_i(C')$, a contradiction. So one of the following two situations holds: $(a)$ there are at least three free gray vertices in $\bar N^1_i(C)$, or $(b)$ there are at least one free white vertex in $N^0_i(C)$.
		
		If $(a)$ occurs, then the number of strongly dominated gray vertices is increased by at least two (note that if $v_i$ is free, since it becomes black in $C'$, it cannot be counted as a new strongly dominated gray verex). Combining this with the fact that changing $c_i=1$ to $c_i=2$ will never increase the number of free white vertices, one has situation $(2)$.
		
		Next, suppose $(b)$ occurs. Note that if $v_i$ is a strongly dominated gray vertex in $C$ and the number of free white vertices is decreased by exactly one, then  $u_i(C)=-\lambda_1-2\lambda_2>-4\lambda_1=u_i(C')$, a contradiction. Also note that changing $c_i=1$ to $c_i'=2$ will not create new free gray vertex. Thus, the number of strongly dominated gray vertex can be decreased by at most one, and this decrease occurs only when $v_i$ is a strongly dominated gray vertex in $C$. Combining these observations, if $v_i$ is a strongly dominated gray vertex in $C$, then one has situation $(3)$. If $v_i$ is a free gray vertex in $C$, then one has situation $(1)$.
	\end{proof}
	
	Now, one is ready to prove Lemma \ref{lem14}.
	
	\begin{proof}[ Proof of Lemma \ref{lem14}]
		In the process of evolution from $C$ to $C'$, where $C$ and $C'$ are the strategy profiles specified by the condition of Lemma \ref{lem14}, let $x_{ij}$ be the number of players who have changed their strategies from $i$ to $j$. Next, it will be proved that
		\begin{equation}\label{eq0911-3}
			x_{01}+x_{12}+2x_{02}-x_{10}-x_{21}-2x_{20}\leq0.
		\end{equation}
		
		In Lemma \ref{lem6}, there are two situations. Use $x_{20}^1$ and $x_{20}^2$ to represent the numbers of times that situation $(1)$ and situation $(2)$ occurred, respectively. Then, $x_{20}=s_{20}^1+x_{20}^2$. Similarly, $x_{21}=x_{21}^1$, $x_{10}=x_{10}^1$, $x_{01}=x_{01}^1$. Note that the three situations in Lemma \ref{lem12} and the three situations in Lemma \ref{lem13} may have some overlap, so $x_{02}\leq x_{02}^1+x_{02}^2+x_{02}^3$ and $x_{12}\leq x_{12}^1+x_{12}^2+x_{12}^3$.
		
		By Lemma \ref{lem6} to Lemma \ref{lem13}, the number of white free vertices created in the process is at most $x_{20}^1-x_{01}^1-2x_{02}^1-x_{02}^3-x_{12}^1-2x_{12}^3$, and the number of strongly dominated gray vertices in the process is decreased by at most $2x_{20}^2+x_{21}^1+x_{10}^1-3x_{02}^2-x_{02}^3-2x_{12}^2+x_{12}^3$.
		
		Because $C'$ is an NE, by Corollary \ref{cor1217-1}, there is no white free vertex in $C'$ and any gray vertex in $C'$ is free. Because $C$ is an RDF, there is no white free vertex in $C'$. Let $y$ be the number of strongly dominated gray vertices in $C$. Then,
		\begin{equation}\label{eq9}
			x_{20}^1-x_{01}^1-2x_{02}^1-x_{02}^3-x_{12}^1-2x_{12}^3\geq0,
		\end{equation}
		\begin{equation}\label{eq10}
			2x_{20}^2+x_{21}^1+x_{10}^1-3x_{02}^2-x_{02}^3-2x_{12}^2+x_{12}^3\geq y
		\end{equation}
		Adding (\ref{eq10}) and (\ref{eq9}), one has
		\begin{equation}
		\begin{split}
			x_{20}^1+2x_{20}^2+x_{21}^1+x_{10}^1-x_{01}^1-2x_{02}^1\\
        	-3x_{02}^2-2x_{02}^3-x_{12}^1-2x_{12}^2-x_{12}^3\geq y
		\end{split}
		\end{equation}
		Combining this with the relationship between $x_{ij}$ and $x_{ij}^k$, one obtains
		$$
		x_{01}+x_{12}+2x_{02}-x_{10}-x_{21}-2x_{20}\leq -y-x_{20}^1-x_{02}^2-x_{12}^2.
		$$
		Since the variables $y$ and $x_{ij}^k$'s are all non-negative, inequality \eqref{eq0911-3} is proved, which completely verifies Lemma \ref{lem14}.
	\end{proof}
\end{appendices}
\end{document}